\begin{document}

\title*{On Hilbert-Poincar\'{e} series of affine semi-regular polynomial sequences and related Gr\"{o}bner bases}
\titlerunning{Hilbert-Poincar\'{e} series of affine semi-regular sequences and related Gr\"{o}bner bases}
\author{Momonari Kudo and\\ Kazuhiro Yokoyama}
\institute{Momonari Kudo \at Fukuoka Institute of Technology, 3-30-1 Wajiro-higashi, Higashi-ku, Fukuoka, 811-0295 Japan\\ \email{m-kudo@fit.ac.jp}
\and Kazuhiro Yokoyama \at Rikkyo University, 3-34-1 Nishi-Ikebukuro, Toshima-ku, Tokyo, 171-8501 Japan\\ \email{kazuhiro@rikkyo.ac.jp}}
%
%
\maketitle

\abstract{
Gr\"{o}bner bases are nowadays central tools for solving various problems in commutative algebra and algebraic geometry.
A typical use of Gr\"{o}bner bases is the multivariate polynomial system solving, which enables us to construct algebraic attacks against post-quantum cryptographic protocols.
Therefore, the determination of the complexity of computing Gr\"{o}bner bases is very important both in theory and in practice:
One of the most important cases is the case where input polynomials compose an (overdetermined) affine semi-regular sequence.
The first part of this paper aims to present a survey on Gr\"{o}bner basis computation and its complexity.
In the second part, we shall give an explicit formula on the (truncated) Hilbert-Poincar\'{e} series associated to the homogenization of an affine semi-regular sequence.
Based on the formula, we also study (reduced) Gr\"{o}bner bases of 
the ideals generated by an affine semi-regular sequence and  its {homogenization}.
Some of our results are considered to give mathematically rigorous proofs of the correctness of methods for computing Gr\"{o}bner bases of the ideal generated by an affine semi-regular sequence.
}

\section{Introduction}

Let $K$ be a field, and $R = K [x_1, \ldots , x_n]$ the polynomial ring in $n$ variables over $K$.
For a polynomial $f$ in $R$, let $f^{\rm top}$ denote its maximal total degree part {which is called the {\em top part} of $f$ here}, 
and let $f^h$ denote its homogenization in $R'=R[y]$ by an extra variable $y$, see Subsection \ref{sec:homogenization} below for details.
{We denote by $\langle F \rangle_R$ (or $\langle F \rangle$ simply) the ideal generated by a non-empty subset $F$ of $R$.}
For a finitely generated graded $R$-(or $R'$-)module $M$, we {also} denote by ${\rm HF}_M$ and ${\rm HS}_M$ its {Hilbert function} and its {Hilbert–Poincar\'{e} series}, respectively.
A {\it Gr\"{o}bner basis} of an ideal $I$ in $R$ is defined as a special kind of generating set for $I$, and it gives a computational tool to determine many properties of the ideal $I$.
A typical application of computing Gr\"{o}bner bases is solving the multivariate polynomial (MP) problem:
Given $m$ polynomials $f_1, \ldots , f_m$ in $R$, find $(a_1,\ldots , a_n) \in K^n$ such that $f_i(a_1,\ldots,a_n)=0$ for all $i$ with $1 \leq i \leq m$.
A particular case where polynomials are all quadratic is called {the MQ problem}, and its hardness is applied to constructing public-key cryptosystems and digital signature schemes that are expected to be quantum resistant.
Therefore, analyzing the complexity of computing Gr\"{o}bner bases is one of the most important problems both in theory and in practice.

An algorithm for computing Gr\"{o}bner bases was proposed first by Buchberger~\cite{Buchberger}, and so far a number of its improvements such as the $F_4$~\cite{F4} and $F_5$~\cite{F5} algorithms have been proposed, see Subsection \ref{sec:GB} below for a summary.
In general, it is very difficult to determine the complexity of computing Gr\"{o}bner bases, but in some cases, we can estimate it with several algebraic invariants such as the solving degree, the degree of regularity, the Castelnuovo–Mumford regularity, and the first and last fall degrees; we refer to \cite{CG23} for the relations between these invariants.

The first part of this paper aims to survey Gr\"{o}bner basis computation, and to review its complexity in the case where input polynomials generate a zero-dimensional ideal.
For this, in Section \ref{sec:pre}, we first recall foundations in commutative algebra such as Koszul complex, Hilbert-Poincar\'{e} series, and semi-regular sequence, which are useful ingredients to estimate the complexity of computing Gr\"{o}bner bases.
Then, we overview existing Gr\"{o}bner basis algorithms in Subsection \ref{sec:GB}.
Subsequently, it will be described in Subsection \ref{subsec:complexity} how to estimate the complexity of computing the reduced Gr\"{o}bner basis of a zero-dimensional ideal, with the notion of {homogenization}.

In the second part, we focus on {\it affine semi-regular} polynomial sequences, where a sequence $\bm{F} = (f_1, \ldots, f_m) \in R^m$ of (not necessarily homogeneous) polynomials is said to be affine (cryptographic) semi-regular if $\bm{F}^{\rm top} = (f_1^{\rm top}, \ldots , f_m^{\rm top})$ is (cryptographic) semi-regular, see Definitions \ref{def:semiregB}, \ref{def:csemireg}, and \ref{def:affine_semireg} for details.
Note that homogeneous semi-regular sequences are conjectured by Pardue~\cite[Conjecture B]{Pardue} to be generic sequences of polynomials, and affine (cryptographic) semi-regular sequences are often appearing in the construction of multivariate public key cryptosystems and digital signature schemes.
In Section \ref{sec:main} below, we relate the Hilbert-Poincar\'{e} series of $R' / \langle \bm{F}^h \rangle$ with that of $R/\langle \bm{F}^{\rm top} \rangle$.
As a corollary, we obtain an explicit formula of the truncation at degree $D-1$ of the Hilbert-Poincar\'{e} series of $R' / \langle \bm{F}^h \rangle$, where $D$ is the degree of regularity for $\langle \bm{F}^{\rm top} \rangle$.
{The following theorem summarizes these results:}

\begin{theorem}[Theorem \ref{thm:main}, Corollaries \ref{cor:Dreg} and \ref{cor:LM}]
With notation as above, assume that $\bm{F}$ is affine cryptographic semi-regular.
Then ${\rm HF}_{R'/\langle \bm{F}^h \rangle}(d) = \sum_{i=0}^d{\rm HF}_{R/\langle \bm{F}^{\rm top} \rangle}(i)$ and $(\langle \mathrm{LM}(\langle \bm{F}^h \rangle) \rangle_{R'})_d = (\langle \mathrm{LM}(\langle \bm{F}^{\rm top} \rangle) \rangle_{R'})_d $ for each $d$ with $d < D$, \textcolor{black}{where we use a DRL ordering on the set of monomials in $R$ and its homogenization on that in $R'$.}
Hence, we also obtain ${\rm HS}_{R'/\langle \bm{F}^h\rangle}(z)\equiv \prod_{i=1}^m (1-z^{d_i})/(1-z)^{n+1}\pmod{z^D}$, so that $\bm{F}^h$ is $D$-regular (see Definition \ref{def:semiregB} for the definition of $d$-regularity).
\end{theorem}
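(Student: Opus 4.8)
The plan is to compare $R'/\langle\bm{F}^h\rangle$ with $R/\langle\bm{F}^{\rm top}\rangle$ through the action of the homogenizing variable $y$. Since $f_i^h\equiv f_i^{\rm top}\pmod{y}$ for every $i$, there is a canonical isomorphism $R'/(\langle\bm{F}^h\rangle+\langle y\rangle)\cong R/\langle\bm{F}^{\rm top}\rangle$, so for each degree $d$ the sequence
\[
(R'/\langle\bm{F}^h\rangle)_{d-1}\xrightarrow{\,\cdot y\,}(R'/\langle\bm{F}^h\rangle)_{d}\longrightarrow (R/\langle\bm{F}^{\rm top}\rangle)_{d}\longrightarrow 0
\]
is exact. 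The whole theorem reduces to one injectivity statement: \emph{for $d<D$ the map $\cdot y\colon(R'/\langle\bm{F}^h\rangle)_{d-1}\to(R'/\langle\bm{F}^h\rangle)_d$ is injective}, i.e.\ $y$ is a nonzerodivisor on $R'/\langle\bm{F}^h\rangle$ in low degrees. Note that $y$ need not be a global nonzerodivisor, because $\langle\bm{F}^h\rangle$ can be strictly smaller than the homogenization of $\langle\bm{F}\rangle$; this gap is exactly what the semi-regularity hypothesis must close.

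To prove the injectivity, suppose $g\in R'_{d-1}$ satisfies $yg=\sum_i a_i f_i^h$ with $a_i\in R'_{d-d_i}$. Setting $y=0$ yields $\sum_i \bar a_i f_i^{\rm top}=0$, where $\bar a_i:=a_i|_{y=0}\in R$, i.e.\ a syzygy of $\bm{F}^{\rm top}$ in degree $d$. This is where affine cryptographic semi-regularity is consumed: below the degree of regularity $D$, semi-regularity of $\bm{F}^{\rm top}$ forces every such syzygy to be generated by the Koszul syzygies $f_j^{\rm top}e_i-f_i^{\rm top}e_j$, where $e_1,\dots,e_m$ is the standard basis of $R^m$. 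Lifting these to the relations $f_j^h e_i-f_i^h e_j$, which are automatically syzygies of $\bm{F}^h$, and subtracting the corresponding combination, I may replace $(a_i)$ by $(a_i'')$ with $\sum_i a_i'' f_i^h=yg$ and $a_i''|_{y=0}=0$; hence $a_i''=y b_i$, and cancelling $y$ in the domain $R'$ gives $g=\sum_i b_i f_i^h\in\langle\bm{F}^h\rangle$. I expect this step---turning the semi-regularity hypothesis into triviality of the degree-$d$ syzygy module and lifting it compatibly through homogenization---to be the main obstacle, and it is precisely what pins down the range $d<D$.

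Granting the injectivity, the sequence above becomes short exact for $d<D$, so ${\rm HF}_{R'/\langle\bm{F}^h\rangle}(d)-{\rm HF}_{R'/\langle\bm{F}^h\rangle}(d-1)={\rm HF}_{R/\langle\bm{F}^{\rm top}\rangle}(d)$; telescoping from the base case $d=0$ gives the first identity ${\rm HF}_{R'/\langle\bm{F}^h\rangle}(d)=\sum_{i=0}^d{\rm HF}_{R/\langle\bm{F}^{\rm top}\rangle}(i)$. Since multiplying a generating series by $1/(1-z)$ produces its sequence of partial sums, this reads ${\rm HS}_{R'/\langle\bm{F}^h\rangle}(z)\equiv{\rm HS}_{R/\langle\bm{F}^{\rm top}\rangle}(z)/(1-z)\pmod{z^D}$; substituting the semi-regular series ${\rm HS}_{R/\langle\bm{F}^{\rm top}\rangle}(z)\equiv\prod_{i=1}^m(1-z^{d_i})/(1-z)^n\pmod{z^D}$ yields the claimed congruence and, by definition, the $D$-regularity of $\bm{F}^h$.

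For the leading-monomial identity I would exploit the DRL order on $R'$ with $y$ the smallest variable: among monomials of a fixed total degree, DRL prefers the one of smallest $y$-exponent, so for homogeneous $G\in R'$ with $G|_{y=0}\ne0$ the leading monomial ${\rm LM}(G)={\rm LM}(G|_{y=0})$ is $y$-free; in particular ${\rm LM}(f_i^h)={\rm LM}(f_i^{\rm top})$. Given homogeneous $g\in\langle\bm{F}^{\rm top}\rangle_d$ with $d<D$, write $g=\sum_i a_i f_i^{\rm top}$ with $a_i\in R$ and set $\tilde g:=\sum_i a_i f_i^h\in\langle\bm{F}^h\rangle$; then $\tilde g|_{y=0}=g$, so ${\rm LM}(\tilde g)={\rm LM}(g)$, which proves $(\langle{\rm LM}(\langle\bm{F}^{\rm top}\rangle)\rangle_{R'})_d\subseteq(\langle{\rm LM}(\langle\bm{F}^h\rangle)\rangle_{R'})_d$. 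Equality then follows from a dimension count. On one hand, $\dim_K(\langle{\rm LM}(\langle\bm{F}^h\rangle)\rangle_{R'})_d=\dim_K R'_d-{\rm HF}_{R'/\langle\bm{F}^h\rangle}(d)$. On the other hand, since passing to leading-term ideals preserves Hilbert functions and $\langle{\rm LM}(\langle\bm{F}^{\rm top}\rangle)\rangle_{R'}$ is the extension to $R'=R[y]$ of a monomial ideal of $R$, one gets $\dim_K(\langle{\rm LM}(\langle\bm{F}^{\rm top}\rangle)\rangle_{R'})_d=\dim_K R'_d-\sum_{i=0}^d{\rm HF}_{R/\langle\bm{F}^{\rm top}\rangle}(i)$. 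These two values coincide for $d<D$ by the first identity, and combined with the inclusion above this forces the equality.
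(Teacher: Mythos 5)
Your proposal is correct. For the Hilbert--function identity it is, in substance, the paper's own argument: the paper tensors the Koszul complex $K_{\bullet}(\bm{F}^h)$ with $R'/\langle y\rangle_{R'}$ and reads off from the long exact homology sequence that multiplication by $y$ on $R'/\langle \bm{F}^h\rangle$ is injective in degrees $<D$ because $H_1(K_{\bullet}(\bm{F}^{\rm top}))_{\leq D-1}=0$ (Proposition \ref{prop:Diem}); your explicit computation --- reduce $yg=\sum_i a_i f_i^h$ modulo $y$, use that every syzygy of $\bm{F}^{\rm top}$ of degree $<D$ is a combination of Koszul syzygies, lift these to syzygies of $\bm{F}^h$, and cancel $y$ --- is exactly that connecting-homomorphism argument unwound at the level of elements, consuming the same hypothesis at the same point, and the passage from the telescoped Hilbert function to the series congruence is identical to Corollary \ref{cor:Dreg}. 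The one genuine divergence is in the leading-monomial statement: both you and the paper finish with the same dimension count ($\dim_K \langle\bm{F}^h\rangle_d=\dim_K(\langle\bm{F}^{\rm top}\rangle R')_d$, a consequence of the Hilbert-function identity), but you obtain the inclusion $(\langle \mathrm{LM}(\langle\bm{F}^{\rm top}\rangle)\rangle_{R'})_d\subseteq(\langle \mathrm{LM}(\langle\bm{F}^h\rangle)\rangle_{R'})_d$ directly by lifting $\sum_i a_i f_i^{\rm top}$ to $\sum_i a_i f_i^h$ (valid in every degree, with no hypothesis and no induction), whereas the paper proves the reverse inclusion by induction on $d$, peeling off factors of $y$ from $y$-divisible leading monomials via the injectivity of $\times y$. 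Your direction is slightly more economical; the paper's inductive direction additionally records the useful by-product that, below degree $D$, any homogeneous element of $\langle\bm{F}^h\rangle$ whose leading monomial is divisible by $y$ already lies in $y\langle\bm{F}^h\rangle$.
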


As an application of this theorem, we explore reduced Gr\"{o}bner bases of $\langle \bm{F}\rangle$, $\langle \bm{F}^h\rangle$, and $\langle \bm{F}^{\rm top}\rangle$ in Section \ref{sec:app} below, dividing the cases into the degree less than $D$ or not.
In particular, we rigorously prove some existing results, which are often used for analyzing the complexity of computing Gr\"{o}bner bases, and moreover extend them \textcolor{black}{to our case}.

\section{Preliminaries}\label{sec:pre}
In this section, we recall definitions of Koszul complex, Hilbert–Poincar\'{e} series, and semi-regular polynomial sequences, and collect some known facts related to them.
Throughout this section, let $R = K[X] = K[x_1, \ldots , x_n]$ be the polynomial ring of $n$ variables {$X=(x_1,\ldots , x_n)$} over a field $K$.
As a notion, for a polynomial $f$ in $R$, we denote its total degree by $\deg(f)$. 
As $R$ is a graded ring with respect to total degree, 
for a polynomial $f$, its maximal total degree part, denoted by $f^{\rm top}$, is 
defined as its graded component of $\deg(f)$, that is, 
the sum of all terms of $f$ whose total degree equals to $\deg(f)$. 

\subsection{Koszul complex and its homology}

Let $f_1, \ldots , f_m \in R$ be homogeneous polynomials of degrees $d_1, \ldots , d_m$, and put $d_{j_1 \cdots j_i} := \sum_{k=1}^i d_{j_k}$.
For each $0 \leq i \leq m$, we define a free $R$-module {of rank $\binom{m}{i}$}
\[
K_i (f_1,\ldots , f_m) := 
\left\{
\begin{array}{cl}
\displaystyle \bigoplus_{1 \leq j_1 < \cdots < j_i \leq m} R({-d_{j_1\cdots j_i}}) \mathbf{e}_{j_1 \cdots j_i} & (i \geq 1)\\
R & (i=0), 
\end{array}
\right.
\]
where $\mathbf{e}_{j_1 \cdots j_i}$ is a standard basis.
We also define a graded homomorphism
\begin{equation*}
    \varphi_i : K_i (f_1,\ldots , f_m) \longrightarrow K_{i-1} (f_1,\ldots , f_m)
\end{equation*}
of degree $0$ by
\begin{equation*}
    \varphi_i (\mathbf{e}_{j_1\cdots j_i}) := \sum_{k=1}^i (-1)^{k-1} f_{j_k} \mathbf{e}_{j_1 \cdots \hat{j_k} \cdots j_i}.
\end{equation*}
Here, $\hat{j_k}$ means to omit $j_k$.
For example, we have $\mathbf{e}_{1\hat{2}3} = \mathbf{e}_{13}$.
To simplify the notation, we set $K_i := K_i(f_1,\ldots , f_m)$.
Then,
\begin{equation}\label{eq:Koszul}
K_{\bullet} : 0 \to K_{m} \xrightarrow{\varphi_m}\cdots \xrightarrow{\varphi_3} K_2 \xrightarrow{\varphi_2} K_1 \xrightarrow{\varphi_1} K_0 \to 0
\end{equation}
is a complex, and we call it the {\it Koszul complex} \textcolor{black}{on} $(f_1,\ldots,f_m)$.
The $i$-th homology group \textcolor{black}{of $K_{\bullet}$} is given by
\begin{equation*}
    H_i (K_{\bullet}) = \mathrm{Ker}(\varphi_i) / \mathrm{Im}(\varphi_{i+1}).
\end{equation*}
In particular, we have
\begin{equation*}
    H_0 (K_{\bullet}) = R/\langle f_1, \ldots , f_m \rangle_R.
\end{equation*}
The kernel and the image of a graded homomorphism are both graded submodules in general, so that $\mathrm{Ker}(\varphi_i)$ and $\mathrm{Im}(\varphi_{i+1})$ are graded $R$-modules, and so is the quotient module $H_i (K_{\bullet})$.
In the following, we denote by $H_i (K_{\bullet})_d$ the degree-$d$ homogeneous part of $H_i (K_{\bullet})$.

Note that $\mathrm{Ker}(\varphi_1) = \mathrm{syz}(f_1,\ldots , f_m)$ (the right hand side is the module of syzygies), and that $\mathrm{Im}(\varphi_2) \subset K_1 = \bigoplus_{j=1}^m R(-d_j)\mathbf{e}_j$ is generated by
\begin{equation*}
    \{ \mathbf{t}_{i,j} := f_i \mathbf{e}_j - f_j \mathbf{e}_{i} : 1 \leq i < j \leq m \}.
\end{equation*}
Hence, putting
\begin{equation*}
    \mathrm{tsyz}(f_1,\ldots ,f_m) := \langle \mathbf{t}_{i,j} : 1 \leq i < j \leq m \rangle_R,
\end{equation*}
we have
\begin{equation}\label{eq:H1}
    H_1(K_{\bullet}) = \mathrm{syz}(f_1,\ldots , f_m)/ \mathrm{tsyz}(f_1,\ldots , f_m).
\end{equation}

\begin{definition}[Trivial syzygies]\label{def:trivial}
With notation as above, we call each generator $\mathbf{t}_{i,j}$ (or each element of $\mathrm{tsyz}(f_1,\ldots , f_m)$) a {\it trivial syzygy} for $(f_1, \ldots , f_m)$.
We also call $ \mathrm{tsyz}(f_1,\ldots , f_m)$ the {\it module of trivial syzygies}. 
\end{definition}

We also note that $H_m(K_{\bullet}) = 0$, since $\varphi_m$ is clearly injective by definition.

\begin{remark}\label{rem:ff}
    When $K = \mathbb{F}_q$, a vector of the form $f_{i}^{q-1}\mathbf{e}_i$ is also referred to as a trivial syzygy, in the context of Ding-Schmidt's definition for {\it first fall degree}~\cite{Ding2013} (see \cite[Section 4.2]{CG20} or \cite[Section 3.2]{N22} for reviews).
    More concretely, putting $B := R/\langle x_1^q, \ldots , x_n^q \rangle_R$ and $\overline{f}_i := f_i \bmod{\langle x_1^q, \ldots , x_n^q \rangle}$, we define the Koszul complex on $(\overline{f}_1,\ldots , \overline{f}_m) \in B^m$ similarly to that on $(f_1, \ldots , f_m) \in R^m$, and denote it by $\overline{K}_{\bullet} = \overline{K}_{\bullet}(\overline{f}_1,\ldots , \overline{f}_m )$.
    Then, the vectors $\overline{f}_i \mathbf{e}_j - \overline{f}_j \mathbf{e}_i$ and $\overline{f}_i^{q-1} \mathbf{e}_i$ in $B^m$ for $1 \leq i < j \leq m$ are syzygies for $(\overline{f}_1,\ldots , \overline{f}_m)$.
    Each $\overline{f}_i \mathbf{e}_j - \overline{f}_j \mathbf{e}_i$ is called a Koszul syzygy, and the Koszul syzygies together with $\overline{f}_i^{q-1} \mathbf{e}_i$'s are referred to as trivial syzygies for $(\overline{f}_1, \ldots , \overline{f}_m)$.
    The {\it first fall degree} $d_{\rm ff}(f_1,\ldots, f_m)$ is \textcolor{black}{defined as} the minimal integer $d$ with $\mathrm{syz}(\overline{f}_1,\ldots , \overline{f}_m)_d \supsetneq \mathrm{tsyz^+}(\overline{f}_1,\ldots , \overline{f}_m)_d$ in $(B_{d-d_i})^m$, where $\mathrm{tsyz^+}(\overline{f}_1,\ldots , \overline{f}_m)$ denotes the submodule in $B^m$ generated by the trivial syzygies for $(\overline{f}_1, \ldots , \overline{f}_m)$.
    
    Note that, for each $i$, a homomorphism $H_i(K_{\bullet}) \to H_i (\overline{K}_{\bullet})$ is canonically induced by taking modulo $\langle x_1^q,\ldots , x_n^q\rangle_R$.
    In particular, we have the following composite $K$-linear map:
    \[
    \eta_d : H_1(K_{\bullet})_d \to H_1 (\overline{K}_{\bullet})_d \to \mathrm{syz}(\overline{f}_1,\ldots , \overline{f}_m)_d / \mathrm{tsyz^+}(\overline{f}_1,\ldots , \overline{f}_m)_d.
    \]
    for each $d$.
    Putting $d = d_{\rm ff}(f_1,\ldots , f_m)$ and letting $D$ to be the minimal integer with $H_1(K_{\bullet})_D \neq 0$, it is straightforward to verify the following:
    \begin{itemize}
        \item If $q > D$, then $\eta_D$ is injective, and $\mathrm{syz}(\overline{f}_1,\ldots , \overline{f}_m)_D \supsetneq \mathrm{tsyz^+}(\overline{f}_1,\ldots , \overline{f}_m)_D$, whence $D \geq d$.
        \item If $q > d$, then $\eta_d$ is surjective.
        In this case, $H_1(K)_d \neq 0$, so that $D \leq d$. 
    \end{itemize}
    See \cite[Lemmas 4.2 and 4.3]{N22} for a proof.
    Therefore, we have $d = D$ for sufficiently large any $q$.
\end{remark}

\subsection{Hilbert–Poincar\'{e} series and semi-regular sequences}\label{subsec:HilSemi}

\begin{definition}[Hilbert–Poincar\'{e} series]
For a finitely generated graded $R$-module $M$, we define the {\it Hilbert function} ${\rm HF}_M$ of $M$, given by
\begin{equation*}
    {\rm HF}_{M}(d) = \mathrm{dim}_K M_d
\end{equation*}
for each $d \in \mathbb{Z}_{\geq 0}$. 
The {\it Hilbert–Poincar\'{e} series} ${\rm HS}_M$ of $M$ is defined as the formal power series
\begin{equation*}
    {\rm HS}_{M}(z) = \sum_{d=0}^{\infty} {\rm HF}_{M}(d) z^d \in \mathbb{Z} \llbracket z \rrbracket.
\end{equation*}
\end{definition}

\begin{theorem}[cf.\ {\cite[Chapter 10]{BW}}]\label{thm:BW}
Let $I$ be a homogeneous ideal of $R$ generated by a set $G \subset R$ of homogeneous elements of degree not greater than a non-negative integer $d$.
\textcolor{black}{Let $\mathrm{LM}(f)$ denote the leading monomial of $f \in R\smallsetminus\{0\}$ with respect to a graded ordering $\prec$ on the set of monomials in $R$.
For a non-empty subset $F \subset R\smallsetminus \{ 0 \}$, put $\mathrm{LM}(F) := \{ \mathrm{LM}(f) : f \in F \}$.}
\textcolor{black}{Then, the following are equivalent:}
\begin{enumerate}
    \item $\langle \mathrm{LM}(G) \rangle_{\leq d} = \langle \mathrm{LM}(I) \rangle_{\leq d}$.
    \item Every $f \in I_{\leq d}$ is reduced to zero modulo $G$. 
    \item For every pair of $f,g \in G$ with $\mathrm{deg}(\mathrm{LCM}(\mathrm{LM}(f),\mathrm{LM}(g))) \leq d$, {the $S$-polynomial $S(f,g)$ is reduced to zero modulo $G$}.
\end{enumerate}
In this case, $G$ is called a $d$-Gr\"{o}bner basis of $I$ with respect to {$\prec$}.
\end{theorem}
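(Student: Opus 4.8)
The plan is to prove the cycle of implications $(2)\Rightarrow(1)\Rightarrow(2)$ together with $(2)\Leftrightarrow(3)$, exploiting throughout that $I$ is homogeneous, that $G$ consists of homogeneous elements, and that $\prec$ is graded, so that neither a reduction step nor the formation of an $S$-polynomial ever increases total degree.

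First I would dispose of the equivalence $(1)\Leftrightarrow(2)$. For $(2)\Rightarrow(1)$ the inclusion $\langle\mathrm{LM}(G)\rangle_{\leq d}\subseteq\langle\mathrm{LM}(I)\rangle_{\leq d}$ is immediate from $G\subseteq I$; for the reverse, any nonzero $f\in I_{\leq d}$ reduces to zero modulo $G$ by $(2)$, so its first reduction step already exhibits $\mathrm{LM}(f)$ as divisible by some $\mathrm{LM}(g)$ with $g\in G$, whence $\mathrm{LM}(f)\in\langle\mathrm{LM}(G)\rangle_{\leq d}$. For $(1)\Rightarrow(2)$ I would take $f\in I_{\leq d}$, note $\mathrm{LM}(f)\in\langle\mathrm{LM}(I)\rangle_{\leq d}=\langle\mathrm{LM}(G)\rangle_{\leq d}$, so some $\mathrm{LM}(g)$ divides $\mathrm{LM}(f)$; one reduction step then produces $f'\in I_{\leq d}$ (the degree cannot grow because $\prec$ is graded) with $\mathrm{LM}(f')\prec\mathrm{LM}(f)$, and since $\prec$ well-orders monomials this process terminates at $0$. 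The implication $(2)\Rightarrow(3)$ is equally short: for homogeneous $f,g\in G$ the $S$-polynomial $S(f,g)$ is homogeneous of degree $\deg(\mathrm{LCM}(\mathrm{LM}(f),\mathrm{LM}(g)))$, so the hypothesis $\deg(\mathrm{LCM})\leq d$ places $S(f,g)$ in $I_{\leq d}$, and $(2)$ finishes.

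The substance of the argument is $(3)\Rightarrow(2)$, which I expect to be the main obstacle. Since $I$ is homogeneous and $\prec$ is graded, I would reduce to a homogeneous $f\in I_{d'}$ with $d'\leq d$ and write $f=\sum_i h_i g_i$ with each $h_i$ homogeneous of degree $d'-\deg(g_i)$, so that every summand $h_i g_i$ has leading monomial of degree exactly $d'$. Among all such representations I would pick one minimizing $T:=\max_{\prec}\mathrm{LM}(h_i g_i)$, which exists because $\prec$ is a well-ordering. If $T=\mathrm{LM}(f)$, then $f$ is top-reducible and I conclude as above. If $\mathrm{LM}(f)\prec T$, the coefficient of $T$ in $f$ vanishes, so the leading coefficients of the summands attaining $T$ cancel; the standard cancellation lemma (cf.\ \cite{BW}) then rewrites this cancelling part as a combination of monomial multiples of $S$-polynomials $S(g_j,g_k)$, each with $\mathrm{LCM}(\mathrm{LM}(g_j),\mathrm{LM}(g_k))\mid T$, hence of $\mathrm{LCM}$-degree at most $d'\leq d$.

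Here is the crucial point at which the truncation bound is respected: homogeneity forces every $S$-polynomial arising in the rewriting to have $\mathrm{LCM}$-degree $\leq d$, so condition $(3)$ is applicable and each such $S(g_j,g_k)$ reduces to zero modulo $G$, i.e.\ equals $\sum_l q_l g_l$ with $\mathrm{LM}(q_l g_l)\prec\mathrm{LCM}(\mathrm{LM}(g_j),\mathrm{LM}(g_k))$. Substituting these expressions back produces a new representation of $f$ with strictly smaller $T$, contradicting minimality; hence $T=\mathrm{LM}(f)$ and $f$ reduces to zero. The only delicate bookkeeping is to verify that no $S$-polynomial of $\mathrm{LCM}$-degree exceeding $d$ is ever invoked, and this is exactly what the graded hypothesis on $\prec$, combined with the homogeneity of $f$ and of the $g_i$, guarantees; it is precisely this feature that distinguishes the truncated statement from the classical Buchberger criterion.
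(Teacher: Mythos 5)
The paper itself gives no proof of this theorem---it is stated as a known result with a pointer to \cite[Chapter 10]{BW}---so the only meaningful comparison is with that standard source, and your argument is correct and essentially reproduces it: the cycle $(2)\Rightarrow(1)\Rightarrow(2)$, the degree count giving $(2)\Rightarrow(3)$, and the minimal-representation/cancellation-lemma proof of $(3)\Rightarrow(2)$, in which homogeneity of $f$ and of the elements of $G$ forces every $S$-polynomial that arises to have $\mathrm{LCM}$-degree at most $d$, is exactly the textbook truncated Buchberger criterion. The only details left implicit are routine (e.g.\ passing from $\mathrm{LM}(f)\in\langle\mathrm{LM}(G)\rangle$ for $f\in I_{\leq d}$ to arbitrary monomials of $\langle\mathrm{LM}(I)\rangle_{\leq d}$, which is immediate because a graded ordering guarantees that the relevant generators $\mathrm{LM}(f)$ come from elements $f\in I$ of degree at most $d$), so no genuine gap remains.
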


We also review the notion of semi-regular sequence defined by Pardue~\cite{Pardue}.

\begin{definition}[Semi-regular sequences, {\cite[Definition 1]{Pardue}}]\label{def:semireg}
Let $I$ be a homogeneous ideal of $R$.
A degree-$d$ homogeneous element $f \in R$ is said to be {\it semi-regular} on $I$ if the multiplication map $(R / I)_{t-d} \longrightarrow (R / I)_{d} \ ; \ g \mapsto g f$ is injective or surjective, for every $t$ with $t \geq d$.
A sequence $(f_1, \ldots , f_m) \in R^m$ of homogeneous polynomials is said to be {\it semi-regular} on $I$ if $f_i$ is semi-regular on $I + \langle f_1, \ldots , f_{i-1} \rangle_R$, for every $i$ with $1 \leq i \leq m$.
\end{definition}

Throughout the rest of this subsection, let $f_1, \ldots , f_m \in R$ be homogeneous elements of degree $d_1, \ldots , d_m$, respectively, and put $I = \langle f_1, \ldots , f_m \rangle_R$.
Furthermore, put $I^{(0)}:= \{ 0 \}$ and $A^{(0)} := R / I^{(0)} = R$.
For each $i$ with $1 \leq i \leq m$, we also set $I^{(i)}:= \langle f_1, \ldots , f_{i} \rangle_R$ and $A^{(i)} := R / I^{(i)}$.
The degree-$d$ homogeneous part $A_d^{(i)}$ of each $A^{(i)}$ is given by $A_d^{(i)} = R_d / I_d^{(i)}$, where $I_d^{(i)} = I^{(i)} \cap R_d$.
We denote by $\psi_{f_i}$ the multiplication map
\begin{equation*}
 A^{(i-1)} \longrightarrow A^{(i-1)} \ ; \ g \mapsto g f_i,
\end{equation*}
which is a graded homomorphism of degree $d_i$.
For every $t \geq d_i$, the restriction map
\[
\psi_{f_i}|_{A^{(i-1)}_{t-d_i}} : A^{(i-1)}_{t-d_i} \longrightarrow A^{(i-1)}_{t}
\]
is a $K$-linear map.
On the other hand, as for the surjective homomorphism
\begin{equation*}
\phi_{i-1} : A^{(i-1)} \longrightarrow A^{(i)} \ ; \ f + I^{(i-1)} \mapsto f + I^{(i)},
\end{equation*}
it is straightforward to see that for each $t$ with $0 \leq t \leq d_i-1$, the restriction map
\[
\phi_{i-1} |_{A_t^{(i-1)}} : A_t^{(i-1)} \longrightarrow A_t^{(i)}
\]
is an isomorphism of $K$-linear spaces, whence
\begin{equation*}
    \mathrm{dim}_K A^{(i-1)}_{t} = \mathrm{dim}_K A^{(i)}_t \quad (0 \leq t \leq d_i -1).
\end{equation*}

\begin{lemma}\label{lem:dimeq}
With notation as above, for each $1 \leq i \leq m$ and for each $t \geq d_i$, we have the following equalities:
\begin{equation}\label{eq:dim11}
    \dim_K A^{(i)}_{t} = \dim_K A^{(i-1)}_{t} - \mathrm{dim}_K \mathrm{Im}\left(A^{(i-1)}_{t-d_i} \xrightarrow{\times f_i} A^{(i-1)}_{t}\right),
\end{equation}
\begin{equation}\label{eq:dim12}
    \mathrm{dim}_K \mathrm{Im}\left(A^{(i-1)}_{t-d_i} \xrightarrow{\times f_i} A^{(i-1)}_{t}\right) = \dim_K A^{(i-1)}_{t-d_i} - \mathrm{dim}_K (0 : f_{i})_{t-d_i},
\end{equation}
where we set $(0: f_i) = \{ g \in A^{(i-1)} : g f_i = 0 \}$.
Hence, 
\begin{itemize}
    \item The multiplication map $A^{(i-1)}_{t-d_i} \xrightarrow{\times f_i} A^{(i-1)}_{t}$ is injective if and only if
    \begin{equation}\label{eq:dim1}
        \dim_K A^{(i)}_{t} = \dim_K A^{(i-1)}_{t} - \dim_K A^{(i-1)}_{t-d_i}.
    \end{equation}
    In this case, one has ${\rm dim}_K A^{(i-1)}_{t-d_i} \leq {\rm dim}_K A^{(i-1)}_{t}$.
    \item The multiplication map $A^{(i-1)}_{t-d_i} \xrightarrow{\times f_i} A^{(i-1)}_{t}$ is surjective if and only if
    \begin{equation}\label{eq:dim2}
    \dim_K A^{(i)}_{t} = 0.
    \end{equation}
    In this case, one has ${\rm dim}_K A^{(i-1)}_{t-d_i} \geq {\rm dim}_K A^{(i-1)}_{t} $.
\end{itemize}
\end{lemma}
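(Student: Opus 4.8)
The plan is to realize $A^{(i)}_t$ as the cokernel of the degree-$t$ piece of the multiplication map $\psi_{f_i}$, and then to read off every stated equality from the rank--nullity theorem for $K$-linear maps together with the elementary dimension formula for a quotient of finite-dimensional vector spaces. No deep input is needed; the whole statement is linear algebra once the right identification is made.

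First I would establish \eqref{eq:dim11}. Since $I^{(i)} = I^{(i-1)} + \langle f_i \rangle_R$, the image of $I^{(i)}$ under the canonical surjection $R \twoheadrightarrow A^{(i-1)} = R/I^{(i-1)}$ is exactly the principal submodule $f_i\, A^{(i-1)} = \mathrm{Im}(\psi_{f_i})$. Hence the third isomorphism theorem yields a graded isomorphism $A^{(i)} \cong A^{(i-1)}/\mathrm{Im}(\psi_{f_i})$, compatible with the surjection $\phi_{i-1}$ introduced above. Restricting to the degree-$t$ component (which is exact because all maps are graded) gives $A^{(i)}_t \cong A^{(i-1)}_t / \mathrm{Im}\bigl(\psi_{f_i}|_{A^{(i-1)}_{t-d_i}}\bigr)$; here I would note that, because $\psi_{f_i}$ has degree $d_i$, the degree-$t$ part of $\mathrm{Im}(\psi_{f_i})$ is precisely $\mathrm{Im}\bigl(A^{(i-1)}_{t-d_i} \xrightarrow{\times f_i} A^{(i-1)}_t\bigr)$. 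Taking $K$-dimensions and applying $\dim_K(V/W) = \dim_K V - \dim_K W$ for a subspace $W \subseteq V$ then produces \eqref{eq:dim11}.

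Next I would obtain \eqref{eq:dim12} by applying the rank--nullity theorem to the $K$-linear map $A^{(i-1)}_{t-d_i} \xrightarrow{\times f_i} A^{(i-1)}_t$, whose kernel is $\{ g \in A^{(i-1)}_{t-d_i} : g f_i = 0 \} = (0 : f_i)_{t-d_i}$ by definition; this gives $\dim_K \mathrm{Im} = \dim_K A^{(i-1)}_{t-d_i} - \dim_K (0 : f_i)_{t-d_i}$. The two bulleted characterizations are then immediate. Injectivity of $\times f_i$ is equivalent to $(0 : f_i)_{t-d_i} = 0$, which by \eqref{eq:dim12} is equivalent to $\dim_K \mathrm{Im} = \dim_K A^{(i-1)}_{t-d_i}$; substituting into \eqref{eq:dim11} yields \eqref{eq:dim1}, and the nonnegativity of $\dim_K A^{(i)}_t$ forces $\dim_K A^{(i-1)}_{t-d_i} \leq \dim_K A^{(i-1)}_t$. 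Dually, surjectivity of $\times f_i$ is equivalent to $\dim_K \mathrm{Im} = \dim_K A^{(i-1)}_t$, which by \eqref{eq:dim11} is equivalent to $\dim_K A^{(i)}_t = 0$, i.e.\ \eqref{eq:dim2}, while surjectivity forces $\dim_K A^{(i-1)}_{t-d_i} \geq \dim_K A^{(i-1)}_t$.

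The only point demanding genuine care is the bookkeeping of the degree shift by $d_i$ when passing between the global map $\psi_{f_i}$ and its degree-$t$ restriction, together with the verification that the cokernel identification $A^{(i)} \cong A^{(i-1)}/\mathrm{Im}(\psi_{f_i})$ is indeed graded; beyond that, every step is a routine application of rank--nullity and the quotient-dimension formula, so I do not expect any substantive obstacle.
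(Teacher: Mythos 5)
Your proof is correct and is essentially the paper's own argument: the paper simply packages your two steps into the single four-term exact sequence $0 \to (0:f_i)_{t-d_i} \to A^{(i-1)}_{t-d_i} \xrightarrow{\times f_i} A^{(i-1)}_{t} \to A^{(i)}_{t} \to 0$, whose exactness encodes exactly your cokernel identification $A^{(i)}_t \cong A^{(i-1)}_t/\mathrm{Im}\bigl(\psi_{f_i}|_{A^{(i-1)}_{t-d_i}}\bigr)$ together with rank--nullity. The bulleted consequences are then read off in the same way in both arguments, so there is no substantive difference.
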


\begin{proof}
Let $i$ and $t$ be integers such that $1 \leq i \leq m$ and $t \geq d_i$.
Since we have $(0: f_i)_{t-d_i} = \{ g \in A^{(i-1)}_{t-d_i} : g f_i = 0 \}$, the sequence
\begin{equation*}
0 \longrightarrow (0 : f_i)_{t-d_i} \longrightarrow A^{(i-1)}_{t-d_i} \overset{~~{\times f_{i}}~~}{\longrightarrow}A^{(i-1)}_{t} \longrightarrow  A^{(i)}_{t} \longrightarrow 0
\end{equation*}
of $K$-linear maps is exact, where $(0 : f_i)_{t-d_i} \to A_{t-d_i}^{(i-1)}$ is an inclusion map.
The exactness of this sequence implies the desired equalities \eqref{eq:dim11} and \eqref{eq:dim12}.
\end{proof}

The semi-regularity is characterized by equivalent conditions in Proposition \ref{prop:semireg} below.
In particular, the fourth condition enables us to compute the Hilbert–Poincar\'{e} series of each $A^{(i)}$ easily.

\begin{proposition}[cf.\ {\cite[Proposition 1]{Pardue}}]\label{prop:semireg}
With notation as above, the following are equivalent:
\begin{enumerate}
    \item The sequence $( f_1, \ldots , f_m )\in R^m$ is semi-regular.
    \item For each $1 \leq i \leq m$ and for each $t \geq d_i$, we have \eqref{eq:dim1} or \eqref{eq:dim2}, namely
    \begin{equation*}
    \dim_K A^{(i)}_{t} = \max \{0, \dim_K(A^{(i-1)}_{t}) - \dim_K(A^{(i-1)}_{t-d_i}) \}.
    \end{equation*}
    \item For each $i$ with $1 \leq i \leq m$, we have
    \begin{equation*}
    {\rm HS}_{A^{(i)}}(z) = [{\rm HS}_{A^{(i-1)}}(z) (1-z^{d_i})],
    \end{equation*}
    \textcolor{black}{where $[\cdot]$ means truncating a formal power series over $\mathbb{Z}$ after the last consecutive positive coefficient.}
    \item For each $i$ with $1 \leq i \leq m$, we have
    \begin{equation*}
    {\rm HS}_{A^{(i)}}(z) = \left[ \frac{\prod_{j=1}^{i}(1-z^{d_j})}{(1- z)^n} \right].
    \end{equation*}
\end{enumerate}
\end{proposition}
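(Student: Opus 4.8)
The plan is to establish the three equivalences $(1)\Leftrightarrow(2)$, $(2)\Leftrightarrow(3)$ and $(3)\Leftrightarrow(4)$ separately, fixing $i$ and writing $c_t:=\dim_K A^{(i-1)}_t-\dim_K A^{(i-1)}_{t-d_i}$ throughout (with the convention $\dim_K A^{(i-1)}_{t-d_i}=0$ for $t<d_i$). For $(1)\Leftrightarrow(2)$ I would argue degreewise from Lemma~\ref{lem:dimeq}. If the map $A^{(i-1)}_{t-d_i}\xrightarrow{\times f_i}A^{(i-1)}_t$ is injective, then \eqref{eq:dim1} gives $\dim_K A^{(i)}_t=c_t$ with $c_t\ge 0$; if it is surjective, then \eqref{eq:dim2} gives $\dim_K A^{(i)}_t=0$ with $c_t\le 0$; in both cases $\dim_K A^{(i)}_t=\max\{0,c_t\}$, so $(1)\Rightarrow(2)$. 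Conversely, under $(2)$ the rank formula \eqref{eq:dim11} gives $\dim_K\mathrm{Im}(\times f_i)=\dim_K A^{(i-1)}_t-\max\{0,c_t\}$, which equals $\dim_K A^{(i-1)}_{t-d_i}$ when $c_t\ge 0$ (hence the map is injective) and equals $\dim_K A^{(i-1)}_t$ when $c_t<0$ (hence surjective); either way $f_i$ is semi-regular on $I^{(i-1)}$ at degree $t$.

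For $(2)\Leftrightarrow(3)$ I would note that $c_t$ is the coefficient of $z^t$ in ${\rm HS}_{A^{(i-1)}}(z)(1-z^{d_i})$, and that $\dim_K A^{(i)}_t=\dim_K A^{(i-1)}_t=\max\{0,c_t\}$ already holds for $t<d_i$ by the isomorphism recorded before Lemma~\ref{lem:dimeq}. Letting $s$ be the least index with $c_s\le 0$, the truncation operator yields $[{\rm HS}_{A^{(i-1)}}(z)(1-z^{d_i})]=\sum_{t<s}c_t z^t$, while $(2)$ reads $\dim_K A^{(i)}_t=\max\{0,c_t\}$. These agree for $t<s$ since $c_t>0$, so the equivalence reduces to the vanishing for $t\ge s$: under $(2)$ one has $\dim_K A^{(i)}_s=0$, and because $A^{(i)}=R/I^{(i)}$ is a quotient of a polynomial ring, $A^{(i)}_s=0$ forces $R_s\subseteq I^{(i)}$ and hence $A^{(i)}_t=0$ for all $t\ge s$; under $(3)$ the vanishing $\dim_K A^{(i)}_t=0$ for $t\ge s$ combined with \eqref{eq:dim11} and $\dim_K\mathrm{Im}(\times f_i)\le\dim_K A^{(i-1)}_{t-d_i}$ forces $c_t\le 0$, i.e.\ $\max\{0,c_t\}=0$.

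For $(3)\Leftrightarrow(4)$ I would induct on $i$, the engine being the identity
\[
[\,[P]\,(1-z^{d})\,]=[\,P\,(1-z^{d})\,]
\]
valid for every power series $P$ with positive constant term and every $d\ge 1$. Writing $P_i:=\prod_{j=1}^{i}(1-z^{d_j})/(1-z)^n$, so that $P_0={\rm HS}_R$ has only positive coefficients, the base case ${\rm HS}_{A^{(0)}}=[P_0]$ is immediate; and if ${\rm HS}_{A^{(i-1)}}=[P_{i-1}]$, then $(3)$ and the identity give ${\rm HS}_{A^{(i)}}=[\,[P_{i-1}](1-z^{d_i})\,]=[P_{i-1}(1-z^{d_i})]=[P_i]$, which is $(4)$; reading the chain backwards gives $(4)\Rightarrow(3)$. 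To prove the identity, let $s$ be the least index carrying a non-positive coefficient of $P=\sum_t c_t z^t$; then the coefficients of $[P](1-z^{d})$ and of $P(1-z^{d})$ coincide for all $t<s$ (their subtracted terms $c_{t-d}$ agree because $t-d<s$), each series has a non-positive coefficient at index $s$, and since the two series agree on $\{t<s\}$ their first non-positive coefficients occur at one common index; truncating there gives equal polynomials.

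I expect this last identity to be the crux. It expresses that the step-by-step truncation in $(3)$ discards nothing relative to the single truncation of the closed-form product in $(4)$, and the clean way to see it is to locate the first non-positive coefficient of each series rather than to manipulate the rational functions directly; the earlier two equivalences, by contrast, are straightforward consequences of Lemma~\ref{lem:dimeq} and the fact that a graded quotient of a polynomial ring, once it vanishes in some degree, vanishes in all higher degrees.
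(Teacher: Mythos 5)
Your proof is correct, but there is no proof in the paper to compare it with: Proposition~\ref{prop:semireg} is quoted as a known result (cf.~\cite[Proposition 1]{Pardue}), and the only supporting material the paper itself proves is Lemma~\ref{lem:dimeq}. Your $(1)\Leftrightarrow(2)$ is exactly the intended use of that lemma---its two bullet points, read together with the accompanying inequalities, give both directions degreewise just as you state. The two ingredients you supply beyond the lemma are precisely what a complete proof needs, and both are handled correctly: (a) for $(2)\Leftrightarrow(3)$, the fact that a standard graded quotient of $R$ vanishing in degree $s$ vanishes in all degrees $\geq s$, which converts the pointwise condition $\dim_K A^{(i)}_t=\max\{0,c_t\}$ into the truncated-series statement, together with the reverse implication via $\dim_K \mathrm{Im}(\times f_i)\leq \dim_K A^{(i-1)}_{t-d_i}$; and (b) for $(3)\Leftrightarrow(4)$, the commutation identity $[\,[P](1-z^{d})\,]=[\,P(1-z^{d})\,]$, whose proof---locating the first non-positive coefficient $s$ of $P$, checking that the two products agree below $s$, and checking that both acquire a non-positive coefficient by index $s$ (namely $-c_{s-d}$ or $0$ on one side, $c_s-c_{s-d}$ or $c_s$ on the other)---is sound, and you are right that it is the crux of the inductive step. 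The only point worth flagging: you tacitly assume throughout that a first non-positive coefficient $s$ exists; when every coefficient stays positive, $[\cdot]$ truncates nothing and each equivalence degenerates to an equality of untruncated series, so the omission is harmless, but a one-sentence remark covering that case would make the argument airtight.
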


When $K$ is an infinite field, Pardue also conjectured in \cite[Conjecture B]{Pardue} that generic polynomial sequences are semi-regular.

\subsection{Cryptographic semi-regular sequences}

We here review the notion of {\it cryptographic semi-regular} sequence, which is defined by a 
{condition weaker} than one for semi-regular sequences.
The notion of cryptographic semi-regular sequence is introduced first by Bardet et al.\ (e.g., \cite{BFS}, \cite{BFSY}) motivated to analyze the complexity of computing Gr\"{o}bner bases.
Diem~\cite{Diem2} also formulated cryptographic semi-regular sequences, in terms of commutative and homological algebra.
The terminology ``cryptographic'' was named by Bigdeli et al.\ in their recent work~\cite{BNDGMT21}, in order to distinguish such a sequence from a semi-regular one defined by Pardue (see Definition \ref{def:semireg} in the previous subsection).

\begin{definition}[{\cite[Definition 3]{BFS}}; see also {\cite[Definition 1]{Diem2}}]\label{def:semiregB}
Let $f_1, \ldots , f_m \in R$ be homogeneous polynomials of positive degrees $d_1, \ldots , d_m$ respectively, and put $I = \langle f_1, \ldots , f_m \rangle_R$.
The notations $I^{(i)}$ and $A^{(i)}$ are also the same as in the previous subsection.
For each integer $d$ with $d \geq \mathrm{max}\{d_i : 1 \leq i \leq m \}$, we call a sequence $(f_1, \ldots , f_m) \in R^m$ of homogeneous polynomials {\it $d$-regular} if it satisfies the following condition:
\begin{itemize}
    \item For each $i$ with $1 \leq i \leq m$, if a homogeneous polynomial $g \in R$ satisfies $g f_i  \in \langle f_1, \ldots , f_{i-1} \rangle_R$ and $\mathrm{deg}(g f_i) < d$, then we have $g \in \langle f_1, \ldots , f_{i-1} \rangle_R$.
    In other word, the multiplication map $A^{(i-1)}_{t-d_i} \longrightarrow A^{(i-1)}_{t} \ ; \ g \mapsto g f_i$ is injective for every $t$ with $d_i \leq t < d$.
\end{itemize}
\end{definition}

Diem~\cite{Diem2} determined the (truncated) Hilbert-Poincar\'{e} series of $d$-regular sequences as in the following proposition:

\begin{theorem}[cf.\ {\cite[Theorem 1]{Diem2}}]\label{lem:Diem2}
With the same notation as in Definition \ref{def:semiregB}, the following are equivalent 
{\color{black} for each $d$ with $d \geq \mathrm{max}\{d_i : 1 \leq i \leq m \}$:}
\begin{enumerate}
    \item The sequence $( f_1, \ldots , f_m )\in R^m$ is $d$-regular.
    Namely, for {each} $(i,t)$ with $1 \leq i \leq m$ and $d_i \leq t < d$, the equality \eqref{eq:dim1} holds.
    \item We have
    \begin{equation*}
    {\rm HS}_{A^{(m)}}(z) \equiv \frac{\prod_{j=1}^{m}(1-z^{d_j})}{(1- z)^n} \pmod{z^d}.
    \end{equation*}
    \item $H_1 (K_{\bullet}(f_1, \ldots , f_m))_{\leq d-1} = 0$.
\end{enumerate}
\end{theorem}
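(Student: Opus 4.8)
The plan is to derive both equivalences from the four-term exact sequences of Lemma~\ref{lem:dimeq}. For each $i$, write $N_i := (0:f_i)$ inside $A^{(i-1)}$, and abbreviate $K^{(i)}_\bullet := K_\bullet(f_1,\dots,f_i)$. The exactness of
\[
0 \to (N_i)_{t-d_i} \to A^{(i-1)}_{t-d_i} \xrightarrow{\times f_i} A^{(i-1)}_t \to A^{(i)}_t \to 0
\]
first gives, after summing against $z^t$, the Hilbert--Poincar\'e series identity $\mathrm{HS}_{A^{(i)}}(z) = (1-z^{d_i})\,\mathrm{HS}_{A^{(i-1)}}(z) + z^{d_i}\,\mathrm{HS}_{N_i}(z)$, and iterating from $A^{(0)} = R$ (with $\mathrm{HS}_R(z) = 1/(1-z)^n$) yields the master formula
\begin{equation*}
\mathrm{HS}_{A^{(m)}}(z) = \frac{\prod_{j=1}^m (1-z^{d_j})}{(1-z)^n} + \sum_{i=1}^m z^{d_i}\Big(\prod_{j=i+1}^m (1-z^{d_j})\Big)\mathrm{HS}_{N_i}(z). \tag{$\star$}
\end{equation*}
By Lemma~\ref{lem:dimeq}, injectivity of $\times f_i$ on $A^{(i-1)}_{t-d_i}$ for $d_i \le t < d$ is precisely the vanishing $(N_i)_s = 0$ for $0 \le s < d-d_i$, i.e.\ $z^{d_i}\mathrm{HS}_{N_i}(z) \equiv 0 \pmod{z^d}$. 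Thus condition (1) is equivalent to this vanishing for every $i$.

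For the equivalence (1)\,$\Leftrightarrow$\,(2), the forward direction combines step-wise injectivity with the isomorphisms $A^{(i)}_t \cong A^{(i-1)}_t$ in degrees $t<d_i$ to obtain $\dim_K A^{(i)}_t = \dim_K A^{(i-1)}_t - \dim_K A^{(i-1)}_{t-d_i}$ for all $t<d$, which telescopes to $\mathrm{HS}_{A^{(m)}} \equiv \prod_j(1-z^{d_j})/(1-z)^n \pmod{z^d}$. For the converse I would use a minimal-degree argument on $(\star)$: if (1) fails, let $t_0<d$ be the least degree at which some $z^{d_i}\mathrm{HS}_{N_i}$ is nonzero. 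In the coefficient of $z^{t_0}$ in the sum of $(\star)$, each factor $\prod_{j>i}(1-z^{d_j})$ contributes only its constant term $1$, since its higher-degree terms would pair with coefficients of $\mathrm{HS}_{N_i}$ in degrees below $t_0-d_i$, which vanish by minimality; hence that coefficient equals $\sum_i \dim_K (N_i)_{t_0-d_i} \ge 0$, and (2) forces it to be $0$, so each $\dim_K(N_i)_{t_0-d_i}=0$, contradicting the choice of $t_0$. I expect this non-cancellation point to be the subtlety in this half: the factors $(1-z^{d_j})$ carry negative coefficients, so a priori the nonnegative contributions of the $N_i$ could cancel, and it is exactly the minimal-degree selection that rules this out.

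For (1)\,$\Leftrightarrow$\,(3) I would induct on $i$, proving that $(f_1,\dots,f_i)$ is $d$-regular iff $H_1(K^{(i)}_\bullet)_{\le d-1}=0$. The engine is the long exact homology sequence of the mapping cone exhibiting $K^{(i)}_\bullet$ as the cone of $\times f_i$ on $K^{(i-1)}_\bullet$, whose degree-$0$ connecting map is multiplication by $f_i$ on $H_0 = A^{(i-1)}$; in low homological degree it reads
\[
H_1(K^{(i-1)}_\bullet) \to H_1(K^{(i)}_\bullet) \to N_i(-d_i) \to 0,
\]
the last term being $\ker(\times f_i)$ on $A^{(i-1)}$ with its degree shift. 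In the forward direction the inductive hypothesis kills $H_1(K^{(i-1)}_\bullet)_{\le d-1}$ while $d$-regularity at step $i$ kills $N_i(-d_i)_{\le d-1}$, so the middle term vanishes in degrees $\le d-1$. In the reverse direction the surjection onto $N_i(-d_i)$ gives step-$i$ injectivity at once, and a descending-degree argument---using $d_i\ge 1$ together with the relation $H_1(K^{(i-1)}_\bullet)_t = f_i\cdot H_1(K^{(i-1)}_\bullet)_{t-d_i}$ forced in degrees $\le d-1$ and the vanishing of $H_1(K^{(i-1)}_\bullet)$ in low degrees---propagates the vanishing back to $(f_1,\dots,f_{i-1})$, where the inductive hypothesis applies.

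I anticipate the main obstacle to be this last step: correctly identifying the connecting homomorphism of the Koszul mapping cone with $\times f_i$ (and its kernel with $N_i(-d_i)$), and justifying the descending-degree argument that transfers vanishing of $H_1$ from the larger complex to the smaller one. Everything else reduces to the exact sequence of Lemma~\ref{lem:dimeq} and the bookkeeping of $(\star)$. Taking $i=m$ then yields (1)\,$\Leftrightarrow$\,(3), which together with (1)\,$\Leftrightarrow$\,(2) completes the proof. One could instead attempt (2)\,$\Leftrightarrow$\,(3) through the Euler-characteristic identity $\sum_j (-1)^j \mathrm{HS}_{K_j} = \sum_j (-1)^j \mathrm{HS}_{H_j(K^{(m)}_\bullet)}$, but extracting $H_1$ alone from that alternating sum requires separately controlling the higher Koszul homology, so the direct mapping-cone induction is the cleaner route.
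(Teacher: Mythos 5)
Your proposal is correct, but there is nothing in the paper to compare it against: Theorem~\ref{lem:Diem2} is stated with a citation (cf.\ \cite[Theorem 1]{Diem2}) and the paper gives no proof of it, relying instead on Diem's original argument and on the surrounding toolkit it quotes (Lemma~\ref{lem:dimeq} and Proposition~\ref{prop:Diem2}). Judged on its own merits, your argument is sound and essentially reconstructs the standard route in a self-contained way. For (1)$\Leftrightarrow$(2), your master formula $(\star)$ is precisely the generating-function form of Lemma~\ref{lem:dimeq} summed over $t$ and iterated over $i$, and your minimal-degree argument correctly handles the one genuine subtlety: since the factors $\prod_{j>i}(1-z^{d_j})$ carry negative coefficients, positivity of the error term is not automatic, but choosing $t_0$ minimal over \emph{all} $i$ kills every cross term and leaves the coefficient $\sum_i \dim_K (N_i)_{t_0-d_i}>0$, contradicting (2). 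For (1)$\Leftrightarrow$(3), the identification of $K_\bullet(f_1,\dots,f_i)$ with the mapping cone of $\times f_i$ on $K_\bullet(f_1,\dots,f_{i-1})(-d_i)$, the exact sequence $H_1(K^{(i-1)}_\bullet)\to H_1(K^{(i)}_\bullet)\to N_i(-d_i)\to 0$, and the propagation $H_1(K^{(i-1)}_\bullet)_t=f_i\cdot H_1(K^{(i-1)}_\bullet)_{t-d_i}$ for $t\le d-1$, iterated until the degree goes negative, are all correct; note that this last propagation step is exactly the content of Proposition~\ref{prop:Diem2} (Diem's Proposition 2(a)), which the paper quotes separately and which you could have invoked instead of re-proving. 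It is also worth observing that the long-exact-sequence technique you use is the same one the paper deploys in its own proof of Theorem~\ref{thm:main}, there applied to the short exact sequence coming from multiplication by $y$ rather than by $f_i$.
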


\begin{proposition}[{\cite[Proposition 2 (a)]{Diem2}}]\label{prop:Diem2}
With the same notation as in Definition \ref{def:semiregB}, let $D$ and $i$ be natural numbers.
Assume that {$H_i (K(f_1,\ldots , f_m))_{\leq D} = 0$}.
Then, for each $j$ with $1 \leq j < m$, we have 
{$H_i(K(f_1, \ldots , f_j))_{\leq D} = 0$}.
\end{proposition}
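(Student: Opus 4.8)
The plan is to reduce the proposition to removing a single polynomial at a time. Specifically, it suffices to establish the following one-step claim: for any homogeneous sequence $(g_1,\ldots,g_k)$ of positive degrees, the vanishing $H_i(K_\bullet(g_1,\ldots,g_k))_{\leq D}=0$ forces $H_i(K_\bullet(g_1,\ldots,g_{k-1}))_{\leq D}=0$. Granting this claim, a downward induction on $k$ from $m$ to $j+1$, applied to the truncations $(f_1,\ldots,f_k)$, yields $H_i(K_\bullet(f_1,\ldots,f_j))_{\leq D}=0$ for every $j$ with $1\le j<m$, which is exactly the assertion.

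I would attack the one-step claim through the standard short exact sequence of Koszul complexes obtained by adjoining the last polynomial. Writing $C_\bullet := K_\bullet(g_1,\ldots,g_{k-1})$ and $d_k := \deg(g_k) \ge 1$, the complex $K_\bullet(g_1,\ldots,g_k)$ is the mapping cone of multiplication by $g_k$ on $C_\bullet(-d_k)$, which gives a short exact sequence of complexes
\[
0 \longrightarrow C_\bullet \longrightarrow K_\bullet(g_1,\ldots,g_k) \longrightarrow C_\bullet(-d_k)[-1] \longrightarrow 0.
\]
Passing to the associated long exact sequence in homology and extracting the degree-$d$ component, one obtains for each $d$ an exact segment
\[
H_{i+1}\bigl(K_\bullet(g_1,\ldots,g_k)\bigr)_d \to H_i(C_\bullet)_{d-d_k} \xrightarrow{\ \cdot g_k\ } H_i(C_\bullet)_d \to H_i\bigl(K_\bullet(g_1,\ldots,g_k)\bigr)_d ,
\]
in which the middle arrow is (up to sign) multiplication by $g_k$, the connecting homomorphism of the cone. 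The step I expect to require the most care is precisely this identification: one must track the twist $(-d_k)$ and the homological shift so that the connecting map is correctly recognized as $\cdot g_k$ rather than an extraneous map. Here the finite generation of all the modules involved — so that each $H_i(C_\bullet)$ is a finitely generated graded $R$-module, and in particular bounded below — is used implicitly.

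Once this segment is in hand, the conclusion follows quickly. If $H_i(K_\bullet(g_1,\ldots,g_k))_d=0$, then exactness at $H_i(C_\bullet)_d$ shows that $\cdot g_k : H_i(C_\bullet)_{d-d_k}\to H_i(C_\bullet)_d$ is surjective. Assuming this for all $d\le D$, I would argue by minimal degree: if $H_i(C_\bullet)_{\leq D}\neq 0$, let $d_0\le D$ be the least degree in which it is nonzero. Surjectivity of $\cdot g_k$ in degree $d_0$ forces $H_i(C_\bullet)_{d_0-d_k}$ to surject onto a nonzero space; but since $d_k\ge 1$ we have $d_0-d_k<d_0$, so $H_i(C_\bullet)_{d_0-d_k}=0$ (either because the degree is negative, or by minimality of $d_0$), a contradiction. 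Hence $H_i(C_\bullet)_{\leq D}=0$, which proves the one-step claim and therefore the proposition. The essential role of the hypothesis that each $d_k$ is positive is exactly what makes this minimal-degree descent go through.
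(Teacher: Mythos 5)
Your proof is correct, but there is nothing in the paper to compare it against: the paper states Proposition \ref{prop:Diem2} purely as a quotation of \cite[Proposition 2 (a)]{Diem2} and supplies no proof of its own, deferring entirely to Diem. Your argument is self-contained and sound. The identification of $K_\bullet(g_1,\ldots,g_k)$ with the mapping cone of $\cdot g_k \colon C_\bullet(-d_k) \to C_\bullet$, where $C_\bullet = K_\bullet(g_1,\ldots,g_{k-1})$, is the standard decomposition $K_\bullet(g_1,\ldots,g_k) \cong C_\bullet \otimes_R K_\bullet(g_k)$; the connecting homomorphism of the resulting long exact sequence is indeed $\pm g_k$, so the four-term segment you display is the right one; exactness at $H_i(C_\bullet)_d$ then gives surjectivity of $\cdot g_k$ whenever $H_i(K_\bullet(g_1,\ldots,g_k))_d = 0$; and the descent on the minimal nonzero degree goes through because $d_k \geq 1$. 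One correction of emphasis: you do not need finite generation or any ``implicit'' boundedness hypothesis for the last step --- every term of $C_\bullet$ is a graded free module generated in non-negative degrees, so $H_i(C_\bullet)$ vanishes in all negative degrees automatically. It is also worth noting that your mechanism (a short exact sequence of complexes built from the Koszul complex, plus the induced long exact sequence in graded homology) is precisely the one this paper uses in its proof of Theorem \ref{thm:main}, there applied to $0 \to K_\bullet \xrightarrow{\times y} K_\bullet \to K_\bullet \otimes_{R'} R \to 0$; so your route is not only correct but idiomatic for the paper, and it is essentially the argument behind Diem's original proposition as well.
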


\begin{definition}
A finitely generated graded $R$-module $M$ is said to be {\it Artinian} if there exists a sufficiently large $D \in \mathbb{Z}$ such that $M_d = 0$ for all $d \geq D$.
\end{definition}

\begin{definition}[{\cite[Definition 4]{BFS}}, {\cite[Definition 5]{BFSY}}]\label{def:dreg}
For a homogeneous ideal $I$ of $R$, we define its {\it degree of regularity} $d_{\rm reg}(I)$ as follows:
If the finitely generated graded $R$-module $R/I$ is Artinian, we set $d_{\rm reg} (I) := \mathrm{min} \{ d : R_d = I_d \}$, and otherwise we set $d_{\rm reg}(I) := \infty$.
\end{definition}

{As for an upper-bound on the degree of regularity, we refer to \cite[Theorem 21]{GG23-2}.}

\begin{remark}
    In Definition \ref{def:dreg}, since $R/I$ is Noetherian, it is Artinian if and only if it is of finite length.
    In this case, the degree of regularity $d_{\rm reg}(I)$ is equal to the {\it Castelnuovo-Mumford regularity} $\mathrm{reg}(I)$ of $I$ {(see e.g., \cite[\S20.5]{Eisen} for the definition)}, whence $d_{\rm reg}(I) = \mathrm{reg}(I) = \mathrm{reg}(R/I) + 1$.
\end{remark}

\begin{definition}[{\cite[Definition 5]{BFS}}, {\cite[Definition 5]{BFSY}}; see also {\cite[Section 2]{Diem}}]\label{def:csemireg}
A sequence $(f_1, \ldots , f_m) \in R^m$ of homogeneous polynomials is said to be {\it cryptographic semi-regular} if it is $d_{\rm reg}(I)$-regular, where {we set} $I = \langle f_1, \ldots , f_m \rangle_R$.
\end{definition}

The cryptographic semi-regularity is characterized by equivalent conditions in Proposition \ref{prop:Diem} below.
In particular, the second condition enables us to compute the Hilbert–Poincar\'{e} series of $A^{(i)}$ easily.

\begin{proposition}[{\cite[Proposition 1 (d)]{Diem2}}; see also {\cite[Proposition 6]{BFSY}}]\label{prop:Diem}
With the same notation as in Definition \ref{def:semiregB}, we put $D=d_{\rm reg}(I)$.
Then, the following are equivalent:
\begin{enumerate}
    \item $( f_1, \ldots , f_m )\in R^m$ is cryptographic semi-regular.
    \item We have
    \begin{equation}\label{eq:semiregHil2}
    {\rm HS}_{R/I}(z) = \left[ \frac{\prod_{j=1}^{m}(1-z^{d_j})}{(1- z)^n} \right].
    \end{equation}
    \item $H_1 (K_{\bullet}(f_1, \ldots , f_m))_{\leq D-1} = 0$.
\end{enumerate}
\end{proposition}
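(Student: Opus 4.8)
The plan is to derive all three equivalences from Theorem~\ref{lem:Diem2} specialized to $d = D := d_{\rm reg}(I)$, so that (1)$\Leftrightarrow$(3) becomes a restatement and the genuine work is the upgrade of the congruence modulo $z^D$ to the exact truncation identity in (2). I treat the substantive case where $R/I$ is Artinian, i.e.\ $D<\infty$ (otherwise the sequence is a regular sequence with $H_1(K_\bullet)=0$ throughout, and (2) is immediate as the series already has all positive coefficients). Since Definition~\ref{def:csemireg} declares $(f_1,\ldots,f_m)$ cryptographic semi-regular exactly when it is $D$-regular, and since Theorem~\ref{lem:Diem2} gives $D$-regular $\Leftrightarrow H_1(K_\bullet)_{\leq D-1}=0$, the equivalence (1)$\Leftrightarrow$(3) follows at once; here $D\geq\max_i d_i$ is forced by Definition~\ref{def:semiregB}, so Theorem~\ref{lem:Diem2} applies. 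It therefore remains to prove (3)$\Leftrightarrow$(2).

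For (3)$\Rightarrow$(2), write $c_t$ for the coefficient of $z^t$ in $\prod_{j=1}^m(1-z^{d_j})/(1-z)^n$. First I would invoke Theorem~\ref{lem:Diem2} to obtain ${\rm HF}_{R/I}(t)=c_t$ for all $t<D$. Next, because $D=d_{\rm reg}(I)$ gives $R_D=I_D$ and $R_t=R_{t-D}R_D$ for $t\geq D$, this propagates to $R_t=I_t$, so the set of degrees with $A^{(m)}_t=0$ is upward closed; hence ${\rm HF}_{R/I}(t)=0$ for $t\geq D$ while ${\rm HF}_{R/I}(t)>0$, and thus $c_t>0$, for every $t<D$. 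Consequently ${\rm HS}_{R/I}(z)=\sum_{t=0}^{D-1}c_t z^t$ with $c_0,\ldots,c_{D-1}>0$, and to identify this with $\bigl[\prod_{j=1}^m(1-z^{d_j})/(1-z)^n\bigr]$ it suffices to show that the initial run of positive coefficients terminates at $D-1$, i.e.\ that $c_D\leq 0$.

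The inequality $c_D\leq 0$ is the \emph{main obstacle}, as it is exactly the point where the exact truncation $[\,\cdot\,]$ must be reconciled with the mere congruence modulo $z^D$. My plan is to prove, by induction on $i$, the estimate ${\rm HF}_{A^{(i)}}(D)\geq c_D^{(i)}$, where $c_t^{(i)}$ is the coefficient of $z^t$ in $\prod_{j=1}^{i}(1-z^{d_j})/(1-z)^n$. The base case $i=0$ is the equality ${\rm HF}_R(D)=c_D^{(0)}$. For the inductive step, Proposition~\ref{prop:Diem2} (applied with $i=1$ and with $D-1$ in place of $D$) shows that each truncated subsequence $(f_1,\ldots,f_i)$ inherits $H_1(K_\bullet)_{\leq D-1}=0$ and is hence $D$-regular, so ${\rm HF}_{A^{(i-1)}}(t)=c_t^{(i-1)}$ for all $t<D$; combining this with Lemma~\ref{lem:dimeq}, which bounds the rank of the multiplication-by-$f_i$ map by $\dim_K A^{(i-1)}_{D-d_i}={\rm HF}_{A^{(i-1)}}(D-d_i)$, yields
\[
{\rm HF}_{A^{(i)}}(D)\;\geq\;{\rm HF}_{A^{(i-1)}}(D)-{\rm HF}_{A^{(i-1)}}(D-d_i)\;\geq\;c_D^{(i-1)}-c_{D-d_i}^{(i-1)}=c_D^{(i)},
\]
where the first term is bounded by the induction hypothesis and the second is the exact value (legitimate since $D\geq d_i$ forces $D-d_i\geq 0$). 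Taking $i=m$ gives $0={\rm HF}_{R/I}(D)\geq c_D$, which is the desired $c_D\leq 0$.

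Finally, (2)$\Rightarrow$(3) should be straightforward: reading off (2) together with $D=d_{\rm reg}(I)$ shows ${\rm HF}_{R/I}(t)=c_t$ for $t<D$, whence ${\rm HS}_{R/I}(z)\equiv \prod_{j=1}^m(1-z^{d_j})/(1-z)^n\pmod{z^D}$, and Theorem~\ref{lem:Diem2} returns $H_1(K_\bullet)_{\leq D-1}=0$. The one point requiring constant care throughout is the bookkeeping between the two flavors of truncation — the exact polynomial $[\,\cdot\,]$ versus reduction modulo $z^D$ — and ensuring that the positive-coefficient run of the rational series ends precisely at the degree of regularity.
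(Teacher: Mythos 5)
Your proof is correct, but there is no in-paper proof to compare it against: the paper states Proposition~\ref{prop:Diem} as a quoted result (from \cite[Proposition 1 (d)]{Diem2}; see also \cite[Proposition 6]{BFSY}) and proves nothing itself. Your argument is thus a self-contained derivation assembled from the paper's other ingredients, and it is assembled correctly. The equivalence (1)$\Leftrightarrow$(3) is indeed immediate from Definition~\ref{def:csemireg} plus Theorem~\ref{lem:Diem2} at $d=D$. The genuine content, as you identify, is pinning the truncation point of $\prod_{j=1}^m(1-z^{d_j})/(1-z)^n$ at exactly $D$: positivity $c_t>0$ for $t<D$ follows from minimality of $d_{\rm reg}(I)$ together with upward-closedness of $\{\,t : R_t=I_t\,\}$, and the key inequality $c_D\le 0$ comes from your inductive Fr\"oberg-type estimate ${\rm HF}_{A^{(i)}}(D)\ge c^{(i)}_D$. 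That induction is sound precisely because you use Proposition~\ref{prop:Diem2} to transfer the $H_1$-vanishing below degree $D$ to each truncated subsequence, so that the subtracted term ${\rm HF}_{A^{(i-1)}}(D-d_i)$ is an exact value rather than a one-sided bound (a lower bound there would point the wrong way), with Lemma~\ref{lem:dimeq} supplying the rank estimate; evaluating at $i=m$ against ${\rm HF}_{R/I}(D)=0$ closes the argument, and (2)$\Rightarrow$(3) reverses cleanly through Theorem~\ref{lem:Diem2}. This is in the spirit of Diem's original proof.

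Two small caveats, neither fatal. First, your parenthetical for the case $D=\infty$ only argues that (1) or (3) implies (2); for (2)$\Rightarrow$(3) in that case one should note that a non-Artinian quotient has strictly positive Hilbert function in every degree, so equality with the bracketed series forces equality with the full series, hence the congruence modulo $z^d$ for every finite $d$, and Theorem~\ref{lem:Diem2} then gives $H_1(K_\bullet)=0$. Second, the assumption $D\ge\max_i d_i$ is needed for every invocation of Theorem~\ref{lem:Diem2} (including those applied to subsequences), not only to make condition (1) well-defined; since conditions (2) and (3) alone do not force it, it is cleaner to treat it as a standing hypothesis inherited from ``the same notation as in Definition~\ref{def:semiregB}'' rather than as something ``forced'' by that definition.
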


{
\begin{remark}\label{rm:dreg}
By the definition of {\em degree of regularity}, if $(f_1,\ldots,f_m)\in R^m$ is cryptographic semi-regular, 
$d_{\rm reg}(I)$ coincides with $\deg({\rm HS}_{R/I}(z))+1$, where we set $I=\langle f_1,\ldots,f_m\rangle$. 
\end{remark}
}

{In 1985, Fr\"{o}berg already conjectured in \cite{Froberg} that, when $K$ is an infinite field, a generic sequence of homogeneous polynomials $f_1,\ldots,f_m \in R$ of degrees $d_1,\ldots , d_m$ generates an ideal $I$ with the Hilbert-Poincar\'{e} series of the form \eqref{eq:semiregHil2}, namely $(f_1,\ldots , f_m)$ is cryptographic semi-regular.
It can be proved (cf.\ \cite{Pardue}) that Fr\"{o}berg's conjecture is equivalent to Pardue's one~\cite[Conjecture B]{Pardue} introduced in Subsection \ref{subsec:HilSemi}.
We also note that Moreno-Soc\'{i}as conjecture~\cite{MS} is stronger than the above two conjectures, see \cite[Theorem 2]{Pardue} for a proof.}

It follows from the fourth condition of Proposition \ref{prop:semireg} together with the second condition of Proposition \ref{prop:Diem} that the semi-regularity implies the cryptographic semi-regularity.

\begin{definition}[Affine semi-regular sequences]\label{def:affine_semireg}
    A sequence $\bm{F}= (f_1,\ldots , f_m)\in R^m$ of not necessarily homogeneous polynomials $f_1,\ldots , f_m$ is said to be semi-regular (resp.\ cryptographic semi-regular) if $\bm{F}^{\rm top} = (f_1^{\rm top},\ldots, f_m^{\rm top})$ is semi-regular (resp.\ cryptographic semi-regular).
    In this case, we call $\bm{F}$ an {\it affine semi-regular (resp.\ affine cryptographic semi-regular)} {sequence}.
\end{definition}

\begin{remark}
     For an affine \textcolor{black}{cryptographic} semi-regular sequence $\bm{F}= (f_1,\ldots , f_m)\in R^m$ with $K=\mathbb{F}_q$, it follows from Proposition \ref{prop:Diem} that $d_{\rm reg} (\langle \bm{F}^{\rm top} \rangle) \leq d_{\rm ff}(f_1^{\rm top},\ldots , f_m^{\rm top})$ for $q \gg 0$, where $d_{\rm ff} (f_1^{\rm top},\ldots , f_m^{\rm top})$ is the first fall degree defined in Remark \ref{rem:ff}.
\end{remark}

\section{Quick review on the computation of Gr\"{o}bner basis}

In this section, we first review previous studies on the computation of Gr\"{o}bner bases for polynomial ideals.

\subsection{Overview of existing Gr\"{o}bner basis algorithms}\label{sec:GB}

Since Buchberger \cite{Buchberger} discovered the notion of Gr\"obner basis and 
{a} fundamental algorithm for computing them, 
many efforts have been done for improving the efficiency 
of Gr\"{o}bner basis computation based on Buchberger's algorithm. 
In his algorithm, S-polynomials {play} an important role 
for Gr\"obner basis computation and give a famous termination 
criterion called Buchberger's criterion, 
that is, for a given ideal $I$ {of a polynomial ring over a field}, its finite generating subset $G$ is a Gr\"obner basis of $I$ with respect to a monomial ordering if and only if the S-polynomial 
$S(g,g')$ for any distinct pair $g,g'\in G$ is reduced to $0$ {modulo} $G$. 
For details on Buchberger's algorithm and monomial orderings, 
see e.g., \cite{BW}. 

In the below, we list effective improvements for algorithms 
which are, at the same time, very useful to analyze the 
complexity of Gr\"obner basis computation. 
Here we note that the choice of a monomial ordering 
is also very crucial for the efficiency of Gr\"ober basis 
computation, but we here do not discuss about its choice. 
(In general, the degree reverse lexicographical (DRL) ordering\footnote{This ordering is also called the graded reverse lexicographical (grevlex) ordering.} 
is considered as the most efficient ordering for the computation.)
\begin{description}
\item{(1) } {\bf Related to S-polynomial:} 
\begin{description}
\item{(1-1) }{\bf Strategy for selecting S-polynomial:} 
It is considered to be very effective to apply the {\it normal strategy}, 
where we choose a pair $(g,g')$ for which the least common multiple (LCM) of the 
leading monomials ${\rm LM}(g)$ and ${\rm LM}(g')$ with respect to the 
fixed ordering $\prec$ as smaller as possible. 
(See \cite[Chapter 5.5]{BW}.) 
The strategy is very suited for a homogeneous ideal 
with a {\em graded}\footnote{\textcolor{black}{We also call a graded ordering a {\it degree-compatible} ordering.}} ordering such as DRL, 
as we can {utilize} the graded structure of a homogeneous ideal. 
Also, the {\em sugar strategy} is designed for a non-homogeneous 
ideal generated by $F$ to make the computational behavior very close to 
that for the ideal generated by the {\it homogenization} $F^h$.
See {Subsection} \ref{sec:homogenization} {below} for some details on homogenization. (See also \cite[Chapter 2.10]{CLO}.)
\item{(1-2) }{\bf Avoiding unnecessary S-polynomial computation:} 
In Buchberger's algorithm, we add a polynomial to a generating set $G$
which is computed from an S-polynomial by possible reduction of 
elements in $G$. Since the cost of the construction of S-polynomials and their reduction dominate the whole computation, 
S-polynomials which are reduced to 0 are very harmful for the 
efficiency. Thus, it is highly desired to 
avoid such unnecessary S-polynomials as many as possible. 

\begin{description}
\item{(A)  }{\bf Based on simple rules:} 
At earlier stages, there are easily computable rules, 
called Buchberger's criterion and Gebauer-M\"oller's one.
Those are using the relation of the LMs of a pair 
and those of a triple, see \cite[Chapter 5.5]{BW}. 
Then, in 2002, Faug\'ere \cite{F5} introduced the notion of {\em signature} 
and proposed his $F_5$ algorithm based on
a general rule among signatures.  We call algorithms using 
signatures including variants of $F_5$ {\em signature-based {algorithms} (SBA)}.
See a survey \cite{EF} and Subsection \ref{sec:sba} below for details.
\item{(B) }{\bf Using invariants of polynomial ideal:} 
For a homogeneous ideal $I$ of a polynomial ring $R$, 
when its Hilbert function ${\rm HF}_{R/I}(z)$ is known 
before the computation, we can utilize the value ${\rm HF}_{R/I}(d)$ for each 
$d\in {\mathbb N}$ {(cf.\ \cite{Tra})}. 
Because, by the value ${\rm HF}_{R/I}(d)$, we can check whether 
we can stop the computation of S-polynomials of degree $d$ or not. 
We call an algorithm using Hilbert functions a {\em Hilbert driven} (Buchberger's)
algorithm. See \cite{Tra}, \cite[Chapter 10.2]{CLO} or \cite[Section 3.5]{DL}. 
\end{description}
\end{description}
\item
{(2) }{\bf Efficient computation of S-polynomial reduction:} 
Since the computation of S-polynomial reduction is a dominant step 
in the whole Gr\"obner basis computation, 
its efficiency heavily affects the total efficiency. 
As the reduction for a polynomial by elements of $G$ 
is sequentially applied, we can transform the whole reduction 
to a Gaussian elimination of a matrix. 
This approach was suggested in form of Macaulay matrices by 
Lazard \cite{Lazard81} and the first efficient algorithm was 
given by Faug\'ere \cite{F4}, which is called the $F_4$ algorithm. 
Of course, 
we can combine the $F_4$ and $F_5$ algorithms effectively, which is called the 
{\em matrix-$F_5$ algorithm}. 
\item
{(3) } {\bf Solving coefficient growth:} 
For a polynomial ideal over the rational number field ${\mathbb Q}$, 
the computation may be suffered by certain growth of coefficients 
in polynomials appearing during Gr\"obner basis computation. 
To resolve this problem, several modular methods were proposed. 
As a typical one, we can use Chinese remainder algorithm (CRA), where 
we first compute the reduced Gr\"obner bases $G_p$ over several 
finite fields ${\mathbb F}_p$ and then recover the reduced Gr\"obner basis 
from $G_p$'s by CRA. See \cite{NY} for details about choosing primes $p$. 
\end{description}

\begin{remark}
For several public {key} cryptosystems based on 
polynomial ideals over finite fields or {the} elliptic curve discrete logarithm {problem}, 
estimating the cost of finding zeros of polynomial ideals is important to analyze the security of those systems, where the computation of their Gr\"obner bases is a fundamental tool. 
In this situation, {the} $F_5$ algorithm and {matrix-}$F_5$ algorithm as its 
efficient variant with an efficient DRL ordering are considered, 
as not only those can attain efficient computation but also 
they are suited for estimating the computational complexity. 
\end{remark}

In the following, we introduce the notion of {\em homogenization} and 
an algorithm for Gr\"obner basis computation based on 
{\rm signatures} ($F_5$ or its variants), which will be used for our study 
in {Section \ref{sec:app} below}.

\subsubsection{Homogenization of polynomials and monomial orderings}\label{sec:homogenization}
We begin with recalling the notion of homogenization. (See \cite[Chapter 4]{KR} 
for details.)
Let $K$ be a field, $X=\{x_1,\ldots,x_n\}$ a set of variables, and
$\mathcal{T}$ the set of all monomials in $X$. 
\footnote{As the \textcolor{black}{symbol} $m$ is used for the size of a generating set, 
we use $\mathcal{T}$ instead of $\mathcal{M}$.}
\begin{enumerate}
\item[(1)]
For a non-homogeneous polynomial $f=\sum_{{t}\in \mathcal{T}} c_t {t}$
in $K[X]$ {with $c_t \in K$},
its {\em homogenization} $f^h$ is defined, by introducing a new
variable $y$, as
\[
f^h=\sum_{{t}\in \mathcal{T}} c_t {t} {y^{\deg(f)-{\deg(t)}}}.
\]
Thus $f^h$ is a homogeneous polynomial in $X\cup \{y\}$ over $K$ with total degree {\em $d=\deg(f)$}.
Also for a set $F$ (or a sequence $\bm{F} = (f_1, \ldots , f_m)\in K[X]^m$) of polynomials, its {\em homogenization}
$F^h$ (or $\bm{F}^h$) is defined as $F^h=\{f^h\;|\; f\in F\}$ (or $\bm{F}^h = (f_1^h,\ldots , f_m^h)\in K[X \cup \{ y\}]^m$).
We also write $X^h$ for $X\cup\{y\}$.
\item[{\rm (2)}]
Conversely, for a homogeneous polynomial $h$ in $K[X\cup\{y\}]$,
its {\em dehomogenization} $h^{\rm deh}$ is defined
by substituting $y$ with $1$, that is, $h^{\rm deh}=h(X,1)$.
(It is also denoted by $h|_{y=1}$.)
For a set $H$ of homogeneous polynomials in $K[X\cup \{y\}]$,
its {\em dehomogenization} $H^{\rm deh}$ (or $H|_{y=1}$) is defined
as $H^{\rm deh}=\{h^{\rm deh}\;|\;h\in H\}$.
\textcolor{black}{We also apply the dehomogenization to sequences of polynomials.}
\item[{\rm (3)}]
For an ideal $I$ of $K[X]$,
its homogenization $I^h$, as an ideal,
is defined as $\langle I^h\rangle_{K[X\cup\{y\}]}$.
\textcolor{black}{We remark that, for a set $F$ of polynomials in $K[X]$, we have $\langle F^h \rangle_{K[X^h]} \subset I^h$ with $I = \langle F \rangle_{K[X]}$, and the equality does not hold in general.}

\item[(4)]
For a homogeneous ideal $J$ in $K[X\cup\{y\}]$,
its dehomogenization $J^{\rm deh}$, as a set, is an ideal of $K[X]$.
We note that if a homogeneous ideal $J$ is generated by $H$, then
$J^{\rm deh}=\langle H^{\rm deh}\rangle_{K[X]}$ and for an ideal $I$ of $K[X]$,
{we have} $(I^h)^{\rm deh}=I$. 
\item[{\rm (5)} ]
For a monomial (term) ordering $\prec$ on the set of {\em monomials}
$\mathcal{T}$ in $X$,  
its {\em homogenization} $\prec_h$ on the set of {\em monomials} 
$\mathcal{T}^h$ in $X\cup \{y\}$ is defined as follows:
For two monomials $X^\alpha y^a$ and $X^\beta y^b$ in $\mathcal{T}^h$,
we say $X^\alpha y^a \prec_h X^\beta y^b$ if and only if one of the following holds:
(i) $a+|\alpha| <  b+|\beta|$, or (ii) $a+|\alpha|= b+|\beta|$ and $X^\alpha\prec X^\beta$,
where $\alpha=(\alpha_1,\ldots,\alpha_n)\in \mathbb{Z}_{\geq 0}^n$ 
and , 
{$|\alpha|=\alpha_1+\cdots + \alpha_n$}, 
and where $X^\alpha$ denotes 
{$x_1^{\alpha_1}\cdots x_n^{\alpha_n}$}.
{ 
Here, for a monomial $X^\alpha {y^a}$, we call $X^\alpha$ and ${y^a}$ its {\em $X$-part} and its {\rm $\{y\}$-part} (or $y$-part simply), respectively. }
If a monomial ordering $\prec$ is {\em graded}, 
the restriction $\prec_h|_{\mathcal{T}}$
of $\prec_h$ on $\mathcal{T}$ coincides with $\prec$.
\end{enumerate}

It is well-known that 
for a Gr\"obner basis $H$ of $\langle F^h\rangle$ with respect to $\prec^h$, 
its \textcolor{black}{dehomogenization} $\{h^{\rm deh}\;|\; h\in H\}$ is also a Gr\"obner basis of $\langle F\rangle$ 
with respect to $\prec$. 

\subsubsection{Signature and $F_5$ algorithm}\label{sec:sba}

Here we briefly outline the $F_5$ algorithm, 
which is an improvement of Buchberger's algorithm. 
For details, see a survey \cite{EF}. 
Let $F=\{f_1,\ldots,f_m\}\subset R=K[X]$ be a given generating set. 
For each polynomial $h$ constructed during the Gr\"obner basis 
computation of $\langle F\rangle$, 
the $F_5$ algorithm 
attaches a {\em special label called {a} signature} as follows:
Since $h$ belongs to $\langle F\rangle$, it can be written as 
\begin{equation}\label{eq:sign}
h=a_1f_1+a_2f_2+\cdots + a_m f_m
\end{equation}
for some $a_1,\ldots,a_m\in R$. 
Then, we assign $h$ to $a_1\mathbf{e}_1+\cdots +a_m\mathbf{e}_m\in R^m$ and 
we call its leading monomial $t\mathbf{e}_i$ with respect to 
a monomial (module) ordering in $R^m$ {\em the signature} of $h$. 
As the expression \eqref{eq:sign} is not unique, in order to 
determine the signature, we construct the expression 
procedurally or use the uniquely determined residue in $R^m/{\rm syz}(f_1,
\ldots,f_m)$ by a module Gr\"obner basis of ${\rm syz}(f_1,\ldots,f_m)$. 
(For the latter case, we call it the {\em minimal} signature.)
Here we denote the signature of $h$ by ${\rm sig}(h)$. 
Anyway, in the $F_5$ algorithm, we can meet the both by 
carefully choosing S-polynomials and 
by applying restricted reduction steps (called $\Sigma$-reductions) 
for S-polynomials without any change of the signature. (So, we need not compute a module 
Gr\"obner basis of ${\rm syz}(f_1,\ldots,f_m)$.) 
We note that for the S-polynomial $S(h_1,h_2)={c_1 t_1 h_1-c_2 t_2 h_2}$ {with $c_1,c_2 \in K$ and $t_1,t_2 \in \mathcal{T}$}, 
the signature {${\rm sig}(S(h_1,h_2))$} 
is determined as the largest one 
between ${\rm sig}({c_1t_1}h_1)$ and ${\rm sig}({c_2t_2}h_2)$. 
Then, we have the following criteria, which are very useful 
to avoid the computation of unnecessary S-polynomials. (The latter 
one is called the {\em syzygy criterion}.) 

\begin{proposition}[cf.\ \cite{CLO}, \cite{EF}]\label{prop:syz}
In the $F_5$ algorithm, we need {not} compute an S-polynomial 
if some S-polynomial of the same signature was already proceeded, 
since both are reduced to the same polynomial. 
Moreover, we need {not} compute an S-polynomial of signature $s$
if there is a signature $s'$ such that $s'$ divides $s$ and 
some S-polynomial with the signature $s'$ is reduced to $0$. 
\end{proposition}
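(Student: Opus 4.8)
The plan is to argue entirely inside the free module $R^m$ equipped with the $R$-linear surjection $\pi\colon R^m \to \langle F\rangle$ sending $\mathbf{e}_i$ to $f_i$, under which every $h \in \langle F\rangle$ has a lift $\mathbf{a} = a_1\mathbf{e}_1 + \cdots + a_m\mathbf{e}_m$ with $\pi(\mathbf{a}) = h$ and ${\rm sig}(h) = \mathrm{LM}(\mathbf{a})$ for the fixed module ordering. Since $\ker(\pi) = {\rm syz}(f_1,\ldots,f_m)$, the whole argument rests on relating signatures to this syzygy module. First I would record the two structural facts both assertions hinge on: (i) the module ordering is a well-ordering, so $F_5$ may process S-polynomials in strictly increasing order of signature and one may induct on signatures; and (ii) a $\Sigma$-reduction step preserves the signature of the polynomial it acts on, so the signature is an invariant of the reduction and the $\Sigma$-normal form depends only on the current basis $G$ and the signature, not on the chosen lift.

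For the first (same-signature) assertion, I would take two S-polynomials $p_1, p_2$ sharing a signature $s = t\mathbf{e}_i$ and rescale so that the coefficient of $s$ in the two lifts agrees; then the lift of $p_1 - p_2$ has leading module-term strictly below $s$, so ${\rm sig}(p_1 - p_2) \prec s$. Because $F_5$ treats S-polynomials in strictly increasing signature order, at the moment $s$ is reached the current $G$ is a signature Gr\"{o}bner basis up to signature $s$; in particular every element of $\langle F\rangle$ of signature $\prec s$ $\Sigma$-reduces to $0$ modulo $G$. Applying this to $p_1 - p_2$, together with the consistency (confluence) of $\Sigma$-reduction, forces $p_1$ and $p_2$ to share a common $\Sigma$-normal form modulo $G$, which is exactly the claim that it suffices to process one of them.

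For the second (syzygy-criterion) assertion, I would use that an S-polynomial $q$ of signature $s'$ reducing to $0$ means precisely that its reduced lift $\mathbf{q}$ lies in ${\rm syz}(f_1,\ldots,f_m)$ with $\mathrm{LM}(\mathbf{q}) = s'$; that is, the algorithm has discovered a syzygy whose leading module-term is $s'$. If $s'$ divides $s$, write $s = u\,s'$ for a monomial $u$; then $u\mathbf{q}$ is again a syzygy, now with $\mathrm{LM}(u\mathbf{q}) = u\,\mathrm{LM}(\mathbf{q}) = s$. Given any S-polynomial $p$ of signature $s$ with lift $\mathbf{a}_p$, subtracting a suitable scalar multiple of $u\mathbf{q}$ from $\mathbf{a}_p$ cancels the term $s$ without altering $\pi(\mathbf{a}_p) = p$, producing a new lift of the same polynomial $p$ of signature $\prec s$. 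Hence $p$ is already represented at a strictly smaller signature and, by the increasing-signature processing together with the invariant above, is accounted for by a lower-signature computation; computing the S-polynomial of signature $s$ is therefore unnecessary.

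The main obstacle will be justifying the inductive invariant rigorously: that, at the moment signature $s$ is reached, every element of $\langle F\rangle$ of signature $\prec s$ $\Sigma$-reduces to $0$ modulo the current $G$, and that the $\Sigma$-normal form is well defined and consistent across different lifts of the same polynomial. This is where the signature-compatibility of $\Sigma$-reduction and the well-ordering of the module ordering are essential—one must verify that $\Sigma$-reduction never employs a reducer that fails to strictly lower the relevant data, so that confluence holds. Once this invariant is in place, the remaining steps (the difference having strictly smaller signature, the rescaling $u\mathbf{q}$ of the discovered syzygy, and the cancellation of the leading module-term) are routine module manipulations.
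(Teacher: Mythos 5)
The paper itself offers no proof of Proposition~\ref{prop:syz}: it is stated as a known fact, with a ``cf.'' pointer to \cite{CLO} and to the Eder--Faug\`ere survey \cite{EF}, so there is no in-paper argument to compare against. Your sketch is, in substance, the canonical proof from that cited literature: working in $R^m$ with $\pi(\mathbf{e}_i)=f_i$, (a) two S-polynomials with equal signature have a difference admitting a lift of strictly smaller signature, which the already-computed part of $G$ reduces to zero, and a short leading-monomial argument (your ``confluence'' step) then forces the two $\Sigma$-normal forms to coincide; (b) an S-polynomial of signature $s'$ reducing to zero exhibits a syzygy $\mathbf{q}$ with $\mathrm{LM}(\mathbf{q})=s'$, and if $s'\mid s$ then subtracting a multiple of $u\mathbf{q}$ (which lies in $\ker\pi$) rewrites any lift of signature $s$ into one of strictly smaller signature, so that signature need not be processed. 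Both halves are correct in outline.

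The one point you should make explicit, rather than leave as an acknowledged ``obstacle,'' is that the invariant you invoke --- \emph{at the moment signature $s$ is reached, every element with a lift of signature $\prec s$ $\Sigma$-reduces to zero modulo the current $G$} --- cannot be assumed independently of the proposition, because the algorithm maintains that invariant precisely \emph{by} discarding pairs according to these two criteria. Assuming the invariant to prove the criteria, while the invariant's proof uses the criteria, is circular unless the whole argument is organized as a single induction on signatures (well-ordered by the module ordering): one proves simultaneously that (i) after all signatures $\prec s$ are handled, $G$ is a signature Gr\"obner basis up to $s$, and (ii) the two discarding rules at signature $s$ preserve this property. This simultaneous induction is exactly how \cite{EF} structures the correctness proof, and with it your remaining steps (the smaller-signature difference, the rescaled syzygy $u\mathbf{q}$, and the cancellation of the leading module term) go through as you describe.
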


\subsection{Complexity of the Gr\"{o}bner basis computation}\label{subsec:complexity}

In general, determining the complexity of computing a Gr\"{o}bner basis is very hard;
in the worst-case, the complexity is doubly exponential in the number of variables, see e.g., \cite{C14}, \cite{MR13}, \cite{Ritscher} for surveys.
It is well-known that a Gr\"{o}bner basis with respect to a graded monomial ordering (in particular, DRL ordering) can be computed quite more efficiently than ones with respect to other orderings in general. 
Moreover, in the case where the input polynomials generate a zero-dimensional ideal, once a Gr\"{o}bner basis with respect to an efficient monomial ordering 
is computed, one with respect to any other ordering can be computed easily by the FGLM basis conversion algorithm~\cite{FGLM}.
From this, we focus on the case where the monomial ordering is graded, and if necessary we also assume that the ideal generated by the input polynomials is zero-dimensional.

A typical way to estimate the complexity \textcolor{black}{of computing a Gr\"{o}bner basis for a sequence $\bm{F}$ of polynomials} is to count {the number of} S-polynomials that are reduced during the Gr\"{o}bner basis computation.
In the case where the chosen monomial ordering is graded, the most efficient strategy to compute Gr\"{o}bner bases is the normal strategy, on which we proceed {\it degree by degree}, namely increase the degree of critical pairs defining S-polynomials, as in the $F_4$ and $F_5$ algorithms.
For an algorithm adopting this strategy, several S-polynomials are dealt with 
{consecutively at} the same degree, which is called the {\it step degree}.
The \textcolor{black}{highest} step degree at which an intermediate ideal basis contains a minimal Gr\"{o}bner basis is called the {\it solving degree} of the algorithm, and it is denoted by $\mathrm{sd}_{\prec}^{\rm hsd}(\bm{F})$.
Determining (or finding a tight bound for) the solving degree is difficult without computing any Gr\"{o}bner basis.
\textcolor{black}{
Once it is specified, we may estimate the complexity of the algorithm, as in \cite{Tenti}.}

On the other hand, for a linear algebra-based algorithm, such as an $F_4$-family including the (matrix-)$F_5$ algorithm and the XL family {(cf.\ \cite{XL})}, that follows Lazard's \textcolor{black}{strategy}~\cite{Lazard} to reduces S-polynomials by the Gaussian elimination on Macaulay matrices, \textcolor{black}{Caminata-Gorla~\cite{CG20} defined {\it another solving degree} in a different manner.
Specifically, it is defined as the lowest degree $d$ at which the reduced row echelon form (RREF) of the Macaulay matrix $M_{\leq d}(\bm{F})$ produces a Gr\"{o}bner basis, see \cite{CG20} for details.
In this case,} the complexity is estimated to be $O(N^{\omega})$ with $N = \binom{n+d}{n}$, where $\omega$ is the {\it matrix multiplication exponent} with $2 \leq \omega < 3$.
For a given polynomial sequence $\bm{F}=(f_1,\ldots , f_m) \in R^m$ and a graded monomial ordering $\prec$, we denote by $\mathrm{sd}_{\prec}^{\rm mac}(\bm{F})$ this solving degree.
\textcolor{black}{
In a series of works (cf.\ \cite{CG20}, \cite{BNDGMT21}, \cite{CG23}) by Gorla et al.,} they provided a mathematical formulation for the relation between the solving degree $\mathrm{sd}_{\prec}^{\rm mac}(\bm{F})$ (or $\mathrm{sd}_{\prec}^{\rm mut}(\bm{F})$ described below) and algebraic invariants coming from $\bm{F}$, such as the maximal Gr\"{o}bner basis degree, the degree of regularity, the Castelnuovo–Mumford regularity, the first and last fall degrees, and so on.
Here, the {\it maximal Gr\"{o}bner basis degree} of the ideal $\langle \bm{F} \rangle_R$ is the maximal degree of elements in the reduced Gr\"{o}bner basis of $\langle \bm{F} \rangle_R$ with respect to a fixed monomial ordering $\prec$, and is denoted by $\mathrm{max.GB.deg}_{\prec}(\bm{F})$.

In the following, we recall some of Caminata et al.'s results.
We set $\prec$ as the DRL ordering on $R$ with $x_n \prec \cdots \prec x_1$, and fix it throughout the rest of this subsection.
Let $y$ be an extra variable for {homogenization} as in the previous subsection, and $\prec^h$ the homogenization of $\prec$, so that $ y \prec x_i$ for any $i$ with $1 \leq i \leq n$.
Then, we have
\[
{\rm max.GB.deg}_{\prec}(\bm{F}) \leq \mathrm{\rm sd}_{\prec}^{\rm mac}(\bm{F}) =\mathrm{\rm sd}_{\prec^h}^{\rm mac}(\bm{F}^h) = {\rm max.GB.deg}_{\prec^h}(\bm{F}^h),
\]
see \cite{CG20} for a proof.
Here, we also recall Lazard's bound for the maximal Gr\"{o}bner basis degree of $\langle \bm{F}^h \rangle_{R'}$ with $R' = R[y]$:

\begin{theorem}[Lazard; {\cite[Theorem 2]{Lazard}}]\label{thm:Lazard_zero_dim}
With notation as above, we assume that the number of projective zeros of $\bm{F}^h$ is finite (and therefore $m \geq n$), and that $f_1^h=\cdots = f_m^h=0$ has no non-trivial solution over the algebraic closure $\overline{K}$ with $y=0$, i.e., $\bm{F}^{\rm top}$ has no solution in $\overline{K}^n$ other than $(0,\ldots , 0)$.
Then, supposing also that $d_1 \geq \cdots \geq d_m$ and putting $\ell := \mathrm{min}\{ m,n+1 \}$, we have
\begin{equation}\label{eq:Lazard0}
\mathrm{max.GB.deg}_{\prec^h}(\bm{F}^h) \leq d_1 + \cdots + d_{\ell} - \ell + 1
\end{equation}
\end{theorem}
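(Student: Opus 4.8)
The plan is to deduce the stated degree bound from a bound on the Castelnuovo--Mumford regularity of the homogeneous ideal $J:=\langle \bm{F}^h\rangle_{R'}$, exploiting that $y$ is the smallest variable for $\prec^h$. First I would translate the hypotheses into commutative algebra. By the homogenization convention of Subsection~\ref{sec:homogenization}, substituting $y=0$ into $f_i^h$ returns $f_i^{\rm top}$, so $R'/(J+\langle y\rangle_{R'})\cong R/\langle \bm{F}^{\rm top}\rangle_R$; the assumption that $\bm{F}^{\rm top}$ has no common zero in $\overline{K}^n$ other than the origin says exactly that this quotient is Artinian, i.e.\ that $J+\langle y\rangle_{R'}$ is primary to the irrelevant ideal $\mathfrak{m}'=\langle x_1,\ldots,x_n,y\rangle_{R'}$. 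Geometrically, $V(J)\subset\mathbb{P}^n$ is finite and disjoint from the hyperplane $\{y=0\}$; hence $J$ has height $n$, the quotient $R'/J$ is one-dimensional, and $y$ avoids every minimal prime of the saturation of $J$, so multiplication by $y$ on $R'/J$ is injective in all high degrees (that is, $y$ is filter-regular). Replacing $K$ by $\overline{K}$ changes neither the Hilbert function of $R'/J$ nor its regularity nor $\langle\mathrm{LM}(J)\rangle_{R'}$, so I may freely take generic $K$-linear combinations below.

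Second, I would phrase everything through initial ideals. Since $\prec^h$ restricts to $\prec$ on the $y$-free monomials (Subsection~\ref{sec:homogenization}) and $g|_{y=0}\in\langle \bm{F}^{\rm top}\rangle_R$ whenever $g\in J$, every $y$-free leading monomial of $J$ is already a leading monomial of $\langle \bm{F}^{\rm top}\rangle_R$; conversely homogenizing recovers these, so the $y$-free part of $\langle\mathrm{LM}(J)\rangle_{R'}$ is exactly $\langle\mathrm{LM}(\langle \bm{F}^{\rm top}\rangle_R)\rangle$. Because $R/\langle \bm{F}^{\rm top}\rangle_R$ is Artinian, its initial ideal is $\langle x_1,\ldots,x_n\rangle$-primary, so its generators, and hence the $y$-free generators of $\langle\mathrm{LM}(J)\rangle_{R'}$, live in degree at most $s+1$, where $s$ is the top nonvanishing degree of the Artinian algebra $R/\langle \bm{F}^{\rm top}\rangle_R$. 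Since $V(\bm{F}^{\rm top})=\{0\}$, the ideal $\langle \bm{F}^{\rm top}\rangle_R$ is $\langle x_1,\ldots,x_n\rangle$-primary, and a standard argument produces inside it a regular sequence of degrees $d_1,\ldots,d_n$ (generic combinations of the top forms of the $n$ largest degrees), whence $s\le\sum_{i=1}^n(d_i-1)$.

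Third, I would control the generators of $\langle\mathrm{LM}(J)\rangle_{R'}$ that are divisible by $y$, since these are what can push the Gr\"{o}bner basis degree up. Here I would use the identity $\langle\mathrm{LM}(J)\rangle_{R'}:y=\langle\mathrm{LM}(J:y)\rangle_{R'}$, which is the characteristic property of a DRL ordering with $y$ smallest, together with the filter-regularity of $y$: the saturation $J:y^{\infty}$ cuts out the finite scheme $V(J)$, and the $y$-divisible part of $\langle\mathrm{LM}(J)\rangle_{R'}$ records precisely the $y$-torsion (non-saturatedness) of $R'/J$, i.e.\ $H^0_{\mathfrak{m}'}(R'/J)$. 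When $m=n$ the $f_i^h$ form a regular sequence in $R'$ (their reductions are a homogeneous system of parameters), so $R'/J$ is a complete intersection, there is no torsion, and $\mathrm{reg}(R'/J)=\sum_{i=1}^n(d_i-1)=\sum_{i=1}^{\ell}d_i-\ell$, giving the bound on the nose via Theorem~\ref{thm:BW}. When $m\ge n+1$ (so $\ell=n+1$), a generic combination of degree $d_{n+1}$ supplies the extra shift $d_{n+1}-1$, and bounding the top degree of the torsion by $\sum_{i=1}^{n}(d_i-1)+(d_{n+1}-1)$ yields $\mathrm{reg}(R'/J)\le\sum_{i=1}^{n+1}d_i-(n+1)=\sum_{i=1}^{\ell}d_i-\ell$, so that $\mathrm{max.GB.deg}_{\prec^h}(\bm{F}^h)\le\mathrm{reg}(J)=\mathrm{reg}(R'/J)+1\le d_1+\cdots+d_{\ell}-\ell+1$.

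The hard part will be exactly this last step in the regime $m\ge n+1$, where $R'/J$ is not a complete intersection. Regularity is not monotone under enlarging the generating set, so I cannot simply discard generators and replace $J$ by a sub-ideal on $n+1$ of the forms; the genuine work is to bound the top degree of the torsion $H^0_{\mathfrak{m}'}(R'/J)$, and this is where the filter-regularity of $y$, the one-dimensional structure of $V(J)$, and the linkage of $J$ to the complete intersection of the $n$ largest top forms must be combined carefully. This torsion analysis is precisely what forces the appearance of the $(n+1)$-st degree $d_{n+1}$ and the index $\ell=\min\{m,n+1\}$, and it is the crux distinguishing the almost-complete-intersection case from the clean complete-intersection case $m=n$.
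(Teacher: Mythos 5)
The paper offers no proof of this statement at all---it is quoted directly from Lazard's 1983 paper---so your proposal has to stand on its own, and it does not: the case $m\geq n+1$, which is the only case where $\ell=\min\{m,n+1\}$ differs from $m$ and the case of actual interest for overdetermined systems, is never proved. Your outline for $m=n$ is sound: there $J=\langle \bm{F}^h\rangle_{R'}$ is a codimension-$n$ complete intersection in $n+1$ variables, $y$ is a nonzerodivisor on $R'/J$, the reverse-lexicographic identities $\mathrm{LM}(J):y=\mathrm{LM}(J:y)$ and $\langle\mathrm{LM}(J)\rangle+\langle y\rangle = \langle\mathrm{LM}(J+\langle y\rangle)\rangle$ force every minimal generator of the initial ideal to be $y$-free, and the Koszul resolution gives $\mathrm{reg}(R'/J)=\sum_{i=1}^{n}(d_i-1)$. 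But for $m\geq n+1$ the quotient $R'/J$ is not Cohen--Macaulay and $J$ is in general not saturated: the paper's own example ($n=3$, $m=4$, all $d_i=2$) has $y^3x_1$, $y^3x_2$, $y^3x_3$ among the leading monomials of the reduced Gr\"obner basis, in degree $4>D=3$. The entire content of Lazard's theorem is the bound on the degrees of exactly these $y$-divisible generators, equivalently on the top degree of the saturation defect $(J:y^{\infty})/J$. Your proposal replaces this with the assertion that ``a generic combination of degree $d_{n+1}$ supplies the extra shift'' and that filter-regularity, linkage, and the one-dimensional structure ``must be combined carefully''---which, as you yourself concede in the final paragraph, is a description of what must be proved, not a proof. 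Reducing the theorem to the regularity bound $\mathrm{reg}(R'/J)\leq\sum_{i=1}^{\ell}d_i-\ell$ for non-complete intersections with finitely many projective zeros buys nothing, because that bound is essentially equivalent in difficulty to the theorem itself.

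A second defect sits in the bridge you use in both cases: Theorem \ref{thm:BW} characterizes $d$-Gr\"obner bases and says nothing connecting Castelnuovo--Mumford regularity to Gr\"obner basis degrees, so it cannot ``give the bound on the nose.'' The inequality you actually need, $\mathrm{max.GB.deg}_{\prec^h}(\bm{F}^h)\leq\mathrm{reg}(J)$, is a Bayer--Stillman-type statement (equality of the regularity of $J$ and of its reverse-lexicographic initial ideal when the last variable is filter-regular, which holds here because $V(J)\cap\{y=0\}=\emptyset$ and the quotient modulo $y$ is Artinian). This is true in your setting, but it is a nontrivial theorem that you invoke silently; any complete write-up must either prove it or cite it explicitly, since without it even your complete-intersection case does not close.
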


{Lazard's bound} given in \eqref{eq:Lazard0} is also referred to as the {\it Macaulay bound}, and it provides an upper-bound for the solving degree of $\bm{F}$ with respect to {a DRL ordering}.

As for the maximal Gr\"{o}bner basis degree of $\langle \bm{F} \rangle$, if $\langle \bm{F}^{\rm top} \rangle$ is \textcolor{black}{Artinian}, we have
\begin{equation*}
\mathrm{max.GB.deg}_{\prec'}(\bm{F}) \leq d_{\rm reg}(\langle \bm{F}^{\rm top} \rangle)
\end{equation*}
for any graded ordering $\prec'$ on $R$, see \cite[Remark 15]{CG20} or Lemma \ref{lem:sd} below for a proof.
{Both $d_{\rm reg}(\langle \bm{F}^{\rm top} \rangle)$ and $\mathrm{\rm sd}_{\prec}^{\rm mac}(\bm{F})$ are greater than or equal to ${\rm max.GB.deg}_{\prec}(\bm{F})$, whereas the degree of regularity (or the first fall degree) used in the cryptographic literature as a proxy (or a heuristic upper-bound) for the solving degree.}
However, it is pointed out in \cite{BNDGMT21}, \cite{CG20}, and \cite{CG23} by explicit examples that {\it any} of the degree of regularity and the first fall degree does {\it not} produce an estimate for the solving degree in general, even when $\bm{F}$ is an affine (cryptographic) semi-regular sequence.
\textcolor{black}{
In \cite{CG23}, Caminata-Gorla provided yet another solving degree, denoted by $\mathrm{sd}_{\prec'}^{\rm mut}(\bm{F})$, with respect to algorithms based on the {\it mutant strategy} (see \cite{CG23-2} for details), and they proved} that it is nothing but the {\it last fall degree} if it is greater than the maximal Gr\"{o}bner basis degree:

\begin{theorem}[{\cite[Theorem 3.1]{CG23}}]\label{thm:sd}
    With notation as above, for any graded monomial ordering $\prec'$ on $R$, we have the following inequality:
    \[
    \mathrm{sd}_{\prec'}^{\rm mut}(\bm{F})  = \max \{ d_{\bm{F}}, {\rm max.GB.deg}_{{\prec'}}(\bm{F}) \},
    \]
    where $d_{\bm{F}}$ denotes the last fall degree of $\bm{F}$ defined in \cite[Definition 1.5]{CG23}.
\end{theorem}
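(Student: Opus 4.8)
The plan is to establish the claimed equality by proving the two inequalities
$\mathrm{sd}_{\prec'}^{\rm mut}(\bm{F}) \geq \max\{d_{\bm{F}}, \mathrm{max.GB.deg}_{\prec'}(\bm{F})\}$
and
$\mathrm{sd}_{\prec'}^{\rm mut}(\bm{F}) \leq \max\{d_{\bm{F}}, \mathrm{max.GB.deg}_{\prec'}(\bm{F})\}$
separately, after first pinning down what it means for a mutant-based algorithm to terminate at a given degree. First I would recall from \cite{CG23-2} the operational description of the mutant strategy: proceeding degree by degree, at step degree $d$ one forms the $K$-span $V_d$ of all products $t f_i$ with $\deg(t f_i) \leq d$, computes its reduced row echelon form with respect to $\prec'$, and feeds back as new generators any resulting polynomial of degree strictly less than $d$ (a \emph{mutant}); the algorithm stops at the first step degree $d$ whose processing yields a Gr\"obner basis of $\langle \bm{F} \rangle$. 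Thus $\mathrm{sd}_{\prec'}^{\rm mut}(\bm{F})$ is exactly this terminating step degree.

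For the lower bound, two necessity arguments combine. On one hand, the output must contain every element of the reduced Gr\"obner basis, and the element of largest degree has degree $\mathrm{max.GB.deg}_{\prec'}(\bm{F})$; since for a graded ordering each step degree introduces new leading monomials of degree at most the current step degree, the algorithm cannot terminate before the step degree reaches $\mathrm{max.GB.deg}_{\prec'}(\bm{F})$. On the other hand, unwinding \cite[Definition 1.5]{CG23}, the last fall degree $d_{\bm{F}}$ is precisely the largest step degree at which a genuine degree fall occurs, i.e.\ at which $V_d$ contains a polynomial not already reachable by lower-degree multiplications and reductions; as long as the step degree stays below $d_{\bm{F}}$, a new mutant carrying a new leading monomial is still to be discovered, so the current leading-term ideal is strictly contained in $\langle \mathrm{LM}(\langle \bm{F} \rangle) \rangle$ and no Gr\"obner basis has yet been obtained. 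Together these give $\mathrm{sd}_{\prec'}^{\rm mut}(\bm{F}) \geq \max\{d_{\bm{F}}, \mathrm{max.GB.deg}_{\prec'}(\bm{F})\}$.

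For the upper bound, set $D = \max\{d_{\bm{F}}, \mathrm{max.GB.deg}_{\prec'}(\bm{F})\}$ and I would argue that the processing at step degree $D$ already yields a Gr\"obner basis. Because $D \geq d_{\bm{F}}$, the defining property of the last fall degree guarantees that no further degree fall can occur at any step degree exceeding $D$, so by step degree $D$ the algorithm has recovered, through iterated mutant feedback, every low-degree consequence of $\bm{F}$ that the computation will ever produce. Because $D \geq \mathrm{max.GB.deg}_{\prec'}(\bm{F})$, every element of the reduced Gr\"obner basis has degree at most $D$ and is therefore present in $V_D$ after row reduction. It then remains to verify, via the degree-bounded criterion of Theorem \ref{thm:BW}, that the set assembled by step degree $D$ is genuinely a Gr\"obner basis: every relevant S-polynomial has its leading-monomial LCM of degree at most $D$ and reduces to zero within $V_D$, precisely because no fall beyond $D$ remains. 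This yields $\mathrm{sd}_{\prec'}^{\rm mut}(\bm{F}) \leq D$, and combined with the lower bound completes the proof.

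The main obstacle I anticipate is the rigorous matching of the operational notion of ``degree fall in the mutant strategy'' with the formal last fall degree of \cite[Definition 1.5]{CG23}: the latter is defined through an intrinsic chain of quotient vector spaces independent of any algorithm, whereas the former is defined by the runtime behaviour of the computation. Bridging these requires showing that a mutant is produced at step degree $d$ if and only if the corresponding quotient in the last-fall-degree construction is nonzero at $d$, and in particular that the iterated feedback of mutants neither creates nor destroys falls at degrees above $d_{\bm{F}}$. Establishing this equivalence, and checking that it is insensitive to the choice of graded ordering $\prec'$ beyond the dependence already carried by $\mathrm{max.GB.deg}_{\prec'}(\bm{F})$, is the technical heart of the argument; once it is in place, the two inequalities above follow by the degree-counting reasoning sketched.
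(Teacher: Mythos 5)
First, a point of reference: the paper does not prove Theorem \ref{thm:sd} at all — it is quoted as an external result from \cite[Theorem 3.1]{CG23}, so the only proof your attempt can be measured against is Caminata--Gorla's. Your proposal follows the same two-inequality skeleton as theirs, but as written it is a plan rather than a proof, and the steps you defer or merely assert are precisely the ones carrying the mathematical content. As you yourself flag, everything hinges on identifying the spaces produced by the mutant algorithm at step degree $d$ with the intrinsic spaces $V_{\bm{F},d}$ of \cite[Definition 1.5]{CG23} (the smallest $K$-vector space containing the $f_i$ of degree at most $d$ and closed under multiplication by a monomial whenever the product stays in degree at most $d$); this bridging lemma is postponed, so neither inequality is actually established.

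Beyond that, your lower-bound argument for $\mathrm{sd}_{\prec'}^{\rm mut}(\bm{F}) \geq d_{\bm{F}}$ is flawed as stated: a fall at degree $d_{\bm{F}}$ produces an element of $V_{\bm{F},d_{\bm{F}}}$ of degree strictly less than $d_{\bm{F}}$ not lying in $V_{\bm{F},d_{\bm{F}}-1}$, but nothing forces that element to carry a \emph{new} leading monomial — its leading monomial may already be divisible by leading monomials found earlier, so your conclusion that ``the current leading-term ideal is strictly contained in $\langle \mathrm{LM}(\langle\bm{F}\rangle)\rangle$'' does not follow. The correct mechanism runs in the contrapositive direction: if the space computed at step degree $d$ contains a Gr\"obner basis $G$ of $\langle\bm{F}\rangle$, then for any $e>d$ and any $f \in V_{\bm{F},e}$ with $\deg(f)<e$, dividing $f$ by $G$ (with a graded ordering, so every summand $q_g g$ has degree at most $\deg(f) \leq e-1$) exhibits $f$ as an element of $V_{\bm{F},e-1}$; hence no fall occurs above $d$ and $d_{\bm{F}} \leq d$. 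A symmetric omission affects your upper bound: the claim that by step degree $D$ the algorithm ``has recovered every low-degree consequence of $\bm{F}$ that the computation will ever produce'' hides a downward induction — a reduced Gr\"obner basis element $g$ of degree at most $D$ lies a priori only in $V_{\bm{F},N}$ for some $N \gg 0$, and one must invoke the absence of falls at each of the degrees $N, N-1, \ldots, D+1$ to push $g$ down into $V_{\bm{F},D}$. Without the bridging lemma, the division-algorithm argument, and this descent, the proposal does not constitute a proof.
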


By this theorem, if $ d_{\rm reg}(\langle \bm{F}^{\rm top} \rangle) < d_{\bm{F}}$, the degree of regularity is no longer an upper-bound on the solving degree.

On the other hand, Semaev and Tenti \textcolor{black}{claimed (see Tenti's thesis~\cite{Tenti} for a proof)} that the solving degree $\mathrm{sd}_{\prec}^{\rm hsd}(\bm{F})$ \textcolor{black}{(in terms of the highest step degree)} is linear in the degree of regularity, if $K$ is a (large) finite field, and if the input system 
contains {polynomials related to} 
the {\it field equations}, say {$x_i^q - x_i$} for $1 \leq i \leq n$:

\begin{theorem}[{\cite[Theorem 2.1]{ST}, \cite[Corollary 3.67]{Tenti}}]\label{thm:tenti}
    With notation as above, assume that $K=\mathbb{F}_q$, and that $\bm{F}$ contains $x_i^q - x_i$ for $1 \leq i \leq n$.
    Put $D=d_{\rm reg}(\langle \bm{F}^{\rm top} \rangle)$.
    If $ D \geq \max \{ \deg (f) : f \in \bm{F} \}$ {and $D \geq q$}, then we have
    \begin{equation}\label{eq:tenti}
            \mathrm{sd}_{\prec}^{\rm hsd}(\bm{F}) \leq 2 D  - 2. 
    \end{equation}
\end{theorem}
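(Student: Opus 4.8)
The plan is to combine the structural meaning of the degree of regularity with a degree-fall argument powered by the field equations, and then to translate this into a bound on the highest step degree processed by a normal-strategy algorithm. First I would record the role of the field equations: since $(x_i^q - x_i)^{\rm top} = x_i^q$ lies in $\bm{F}^{\rm top}$, we have $\langle x_1^q, \ldots , x_n^q \rangle_R \subseteq \langle \bm{F}^{\rm top} \rangle_R$, so $R/\langle \bm{F}^{\rm top} \rangle$ is Artinian and $D = d_{\rm reg}(\langle \bm{F}^{\rm top}\rangle)$ is finite. By Definition \ref{def:dreg}, every homogeneous polynomial of degree $\geq D$ belongs to $\langle \bm{F}^{\rm top} \rangle$; this is the single property of $D$ that the whole argument exploits. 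Using the bound recalled in Subsection \ref{subsec:complexity} (that $\mathrm{max.GB.deg}_{\prec}(\bm{F}) \leq d_{\rm reg}(\langle \bm{F}^{\rm top}\rangle) = D$ when the top ideal is Artinian, via Lemma \ref{lem:sd}), I would note that the reduced Gr\"obner basis of $\langle \bm{F} \rangle$ consists of elements of degree at most $D$, and hence $\langle \mathrm{LM}(\langle \bm{F} \rangle) \rangle_R$ is generated in degrees $\leq D$.

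The core step is a degree-fall lemma. Given $g \in \langle \bm{F} \rangle$ of degree $e$ with $e \geq D$, its top part $g^{\rm top}$ is a degree-$e$ homogeneous element of $\langle \bm{F}^{\rm top}\rangle$, so I can write $g^{\rm top} = \sum_i a_i f_i^{\rm top}$ with each $a_i f_i^{\rm top}$ homogeneous of degree $e$; then $g - \sum_i a_i f_i$ lies in $\langle \bm{F} \rangle$ and has degree $< e$, while the combination only involves products $a_i f_i$ of degree $\leq e$. Iterating brings any ideal element down to degree $< D$ using only combinations whose degree is bounded by its own. In the language of a normal-strategy computation this says that any S-polynomial which does not reduce to zero must undergo a genuine degree fall, and that all such falls, together with the reductions realizing them, take place at degrees controlled by $D$; here the hypotheses $D \geq \max\{\deg(f) : f \in \bm{F}\}$ and $D \geq q$ guarantee that the generators (in particular the degree-$q$ field equations) are available at, and do not exceed, the degrees at which we work.

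Finally I would convert these facts into the step-degree bound. Since the reduced Gr\"obner basis elements have degree $\leq D$, the only critical pairs that can yield new information are those between elements whose leading monomials have degree $\leq D$ and are not coprime (coprime pairs reduce to zero by the product criterion). For such a pair the least common multiple of the leading monomials has degree at most $2D - 1$, and after cancellation of the leading terms (which, for the DRL ordering, carry the top total degree) the resulting polynomial has degree at most $2D - 2$; the degree-fall lemma then forces this polynomial to reduce to zero or to a new basis element of degree $\leq D$ entirely within degree $2D - 2$. Consequently, once all step degrees up to $2D - 2$ have been processed, every relevant S-polynomial of the current basis has reduced to zero, so by Buchberger's criterion (equivalently the $d$-Gr\"obner basis characterization of Theorem \ref{thm:BW} with $d = 2D-2 \geq D$) the basis is a genuine Gr\"obner basis, since $\langle \mathrm{LM}(\langle \bm{F} \rangle) \rangle_R$ is already generated in degrees $\leq D$. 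This yields $\mathrm{sd}_{\prec}^{\rm hsd}(\bm{F}) \leq 2D - 2$.

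The main obstacle I anticipate is the passage from the abstract degree-fall lemma to the concrete claim that a normal-strategy (e.g.\ $F_4$/$F_5$) computation never needs to process a step degree exceeding $2D-2$: one must argue that the intermediate polynomials produced during reductions do not transiently exceed this degree, and pin down the exact constant, carefully distinguishing $2D-2$ from the naive $2D-1$ coming from the least-common-multiple estimate. This is precisely where both hypotheses $D \geq \max\{\deg(f):f\in\bm F\}$ and $D \geq q$ are needed, and where the detailed bookkeeping of Tenti's thesis must be followed.
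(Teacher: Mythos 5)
Your proposal follows the same general route as the paper's treatment (note the paper does not reprove Theorem \ref{thm:tenti} itself, which it attributes to \cite{ST} and \cite{Tenti}, but proves the analogous bound in its own setting as Lemma \ref{lem:sd}): exploit $\langle \bm{F}^{\rm top}\rangle_D = R_D$, bound the maximal Gr\"obner basis degree by $D$, observe that critical pairs then live in degree at most $2D$, and discard coprime pairs by Buchberger's first criterion. However, there is a genuine gap at exactly the step you flag as an ``obstacle,'' and it is not mere bookkeeping. Your claim that for a non-coprime pair of degree-$D$ elements ``after cancellation of the leading terms the resulting polynomial has degree at most $2D-2$'' is false in general: for a graded ordering, cancelling the leading terms of $t_1g_1$ and $t_2g_2$ removes only the single largest term, not the whole top-degree homogeneous component. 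If $g_1$ or $g_2$ has several terms of top degree $D$, then the S-polynomial of a pair with $\deg(\mathrm{LCM}(\mathrm{LM}(g_1),\mathrm{LM}(g_2))) = 2D-1$ will in general still have degree $2D-1$, so your argument as written only yields the weaker bound $2D-1$.

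The missing ingredient is precisely the second assertion of the paper's Lemma \ref{lem:degD}: every element $g$ of degree $D$ in the reduced Gr\"obner basis $G_{\rm hom}$ with $g^{\rm top}\neq 0$ has a \emph{single-term} top part, $g^{\rm top} = \mathrm{LT}(g)$. This follows from $\langle\bm{F}^{\rm top}\rangle_D = R_D$ (equation \eqref{eq:RD}) combined with reducedness: any additional top-degree term of $g$ would be divisible by the leading monomial of another basis element, contradicting that the basis is reduced. With this property, the S-polynomial of two degree-$D$ elements cancels the \emph{entire} top parts $t_1 g_1^{\rm top} - t_2 g_2^{\rm top}$, not merely the leading terms, and its degree genuinely drops below $2D-1$; that is how the paper (following Tenti's idea that $\langle\bm{F}^{\rm top}\rangle_D = R_D$ is the essential property) obtains $2D-2$ rather than $2D-1$. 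Relatedly, your ``degree-fall lemma'' concerns ideal membership only and does not by itself control what the algorithm's reductions produce; the paper instead works with the set $H = \{g|_{y=1} : g \in (G_{\rm hom})_{\leq D}\}$, whose leading monomials cover all monomials of degree $D$, to force every later polynomial down to degree at most $D-1$. Since you explicitly defer this part to ``the detailed bookkeeping of Tenti's thesis,'' the proposal does not establish the constant $2D-2$, which is the actual content of the theorem.
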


In Subsection \ref{subsec:overD} below, we will prove a similar inequality \eqref{eq:tenti} for the case where $\bm{F}$ not necessarily contains a field equation but is cryptographic semi-regular.

\section{Hilbert-Poincar\'{e} series of affine semi-regular sequence}\label{sec:main}

As in the previous section, let $K$ be a field, and $R = K[X] = K[x_1, \ldots , x_n]$ denote the polynomial ring of $n$ variables over $K$.
We denote by $R_{d}$ the homogeneous part of degree $d$, that is, the set of homogeneous polynomials of degree $d$ and $0$.
Recall Definition \ref{def:csemireg} for the definition of cryptographic semi-regular sequences.

The Hilbert-Poincar\'{e} series associated to a (homogeneous) cryptographic semi-regular sequence is given by \eqref{eq:semiregHil2}.
On the other hand, the Hilbert-Poincar\'{e} series \textcolor{black}{associated to} the 
\textcolor{black}{homogenization} $\bm{F}^h$ of $\bm{F}=(f_1,\ldots,f_m)\in R^m$ \textcolor{black}{not necessarily homogeneous polynomials} {cannot be} computed 
without knowing its Gr\"obner basis in general, but we shall prove that it can be computed up to the degree $d_{\rm reg}(\langle \bm{F}^{\rm top}\rangle)$ 
if $\bm{F}$ is affine cryptographic semi-regular, namely $\bm{F}^{\rm top}$ is cryptographic semi-regular.

\begin{theorem}\label{thm:main}
Let $R = K[x_1,\ldots , x_n]$ and $R' = R[y]$, and let $\bm{F}=(f_1,\ldots , f_m)$ be a sequence of not necessarily homogeneous polynomials in $R$.
Assume that $\bm{F}$ is affine cryptographic semi-regular.
Then, for each $d$ with $d < D:= d_{\mathrm{reg}}(\langle \bm{F}^{\rm top} \rangle)$, we have
    \begin{equation}\label{eq:main}
            {\rm HF}_{R'/\langle \bm{F}^h \rangle}(d) = {\rm HF}_{R/\langle \bm{F}^{\rm top} \rangle}(d) + {\rm HF}_{{R'}/\langle \bm{F}^h \rangle}(d-1),
    \end{equation}
    and hence
    \begin{equation}\label{eq:main2}
            {\rm HF}_{R'/\langle \bm{F}^h \rangle}(d) = {\rm HF}_{R/\langle \bm{F}^{\rm top} \rangle}(d) + \cdots + {\rm HF}_{R/\langle \bm{F}^{\rm top} \rangle}(0),
    \end{equation}
    whence we can compute the value ${\rm HF}_{R'/\langle \bm{F}^h \rangle}(d)$ from the formula \eqref{eq:semiregHil2}.
\end{theorem}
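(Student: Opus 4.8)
The plan is to study the multiplication-by-$y$ map on the graded module $R'/\langle \bm{F}^h\rangle$ and to control, degree by degree, both its cokernel and the vanishing of its kernel. Write $J = \langle \bm{F}^h\rangle_{R'}$ and $I^{\rm top} = \langle \bm{F}^{\rm top}\rangle_R$. The starting observation is that each homogenization reduces to its top part modulo $y$, i.e.\ $f_i^h \equiv f_i^{\rm top} \pmod{y}$, while $\deg(f_i^h) = \deg(f_i^{\rm top}) = d_i$. Consequently, reducing the Koszul complex $L_\bullet := K_\bullet(\bm{F}^h)$ over $R'$ modulo $y$ reproduces \emph{exactly} the Koszul complex $\bar{L}_\bullet := K_\bullet(\bm{F}^{\rm top})$ over $R = R'/yR'$, with matching degree twists. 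In particular $H_0(L_\bullet) = R'/J$ and $H_0(\bar{L}_\bullet) = R/I^{\rm top}$.

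First I would record the short exact sequence of complexes of graded $R'$-modules
\[
0 \longrightarrow L_\bullet(-1) \xrightarrow{\ \times y\ } L_\bullet \longrightarrow \bar{L}_\bullet \longrightarrow 0,
\]
which is exact because each $L_i$ is a free $R'$-module, so $\times y$ is injective on it, and its cokernel $L_i/yL_i$ is the free $R$-module underlying $\bar{L}_i$. Passing to the associated long exact sequence in homology and extracting the degree-$d$ graded piece yields, at the tail,
\[
H_1(\bar{L}_\bullet)_d \xrightarrow{\ \delta\ } (R'/J)_{d-1} \xrightarrow{\ \times y\ } (R'/J)_d \longrightarrow (R/I^{\rm top})_d \longrightarrow 0.
\]
Here the surjection onto $(R/I^{\rm top})_d$ identifies the cokernel of $\times y$ with $(R/I^{\rm top})_d$, while the kernel of $\times y$ is the image of the connecting map $\delta$ and is therefore controlled by $H_1(\bar{L}_\bullet)_d = H_1(K_\bullet(\bm{F}^{\rm top}))_d$.

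Now I would invoke the hypothesis: since $\bm{F}$ is affine cryptographic semi-regular, $\bm{F}^{\rm top}$ is cryptographic semi-regular, so by Proposition~\ref{prop:Diem} (condition 3) we have $H_1(K_\bullet(\bm{F}^{\rm top}))_{\leq D-1} = 0$ with $D = d_{\rm reg}(I^{\rm top})$. Hence for every $d < D$ the term $H_1(\bar{L}_\bullet)_d$ vanishes, $\delta$ is zero, and $\times y$ is injective in that degree. The four-term sequence then collapses to the short exact sequence
\[
0 \longrightarrow (R'/J)_{d-1} \xrightarrow{\ \times y\ } (R'/J)_d \longrightarrow (R/I^{\rm top})_d \longrightarrow 0,
\]
whose additivity of $K$-dimensions is precisely \eqref{eq:main}. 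Finally, telescoping \eqref{eq:main} from the base case ${\rm HF}_{R'/J}(-1)=0$ (equivalently ${\rm HF}_{R'/J}(0)={\rm HF}_{R/I^{\rm top}}(0)=1$) yields \eqref{eq:main2} for all $d<D$, and the closed value follows by substituting the semi-regular Hilbert series \eqref{eq:semiregHil2} for ${\rm HF}_{R/I^{\rm top}}$.

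The main obstacle is the injectivity of $\times y$ in low degrees, i.e.\ showing that $R'/J$ carries no $y$-torsion below degree $D-1$; everything afterwards is a formal telescoping. The idea that unlocks it is to read this torsion as the image of the Koszul connecting homomorphism, thereby reducing it to the vanishing of $H_1$ of the top-part Koszul complex, which is exactly what cryptographic semi-regularity supplies. One should also double-check the degree bookkeeping in the shift $L_\bullet(-1)$ and confirm that $L_\bullet/yL_\bullet$ genuinely equals $K_\bullet(\bm{F}^{\rm top})$ with the correct degree twists, since the entire argument hinges on this identification.
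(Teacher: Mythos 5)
Your proposal is correct and follows essentially the same route as the paper: the short exact sequence $0 \to K_\bullet \xrightarrow{\times y} K_\bullet \to K_\bullet \otimes_{R'} R \to 0$ of Koszul complexes (the paper phrases $\times y$ as a degree-$1$ map rather than twisting by $(-1)$, which is equivalent), identification of the quotient complex with $K_\bullet(\bm{F}^{\rm top})$, the long exact sequence in homology, the vanishing $H_1(K_\bullet(\bm{F}^{\rm top}))_{\leq D-1}=0$ from Proposition~\ref{prop:Diem}, and then dimension counting plus telescoping. Your degree bookkeeping and the identification $L_\bullet/yL_\bullet \cong K_\bullet(\bm{F}^{\rm top})$ are both accurate, so there is nothing to fix.
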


\begin{proof}
Let $K_{\bullet} = K_{\bullet}(f_1^h,\ldots,f_m^h)$ be the Koszul complex on $(f_1^h,\ldots , f_m^h)$, which is given by \eqref{eq:Koszul}.
By tensoring $K_{\bullet}$ with $R'/\langle y \rangle_{R'} \cong K[x_1, \ldots , x_{n}]=R$ over $R'$, we obtain the following exact sequence of chain complexes:
\[{
	\xymatrix{
	 0 \ar[r]	& K_{\bullet} \ar[r]^{\times y} & K_{\bullet} \ar[r]^(0.4){\pi_{\bullet}} & K_{\bullet} \otimes_{R'} R \ar[r] & 0 ,
	}}
\]
where $\times y$ is a graded homomorphism of degree $1$ multiplying each entry of a vector with $y$, \textcolor{black}{and where $\pi_i$ is a canonical homomorphism sending $v \in K_i$ to $v_i \otimes 1 \in K_i \otimes_{R'} R$.}
Note that there is an isomorphism
\[
K_i \otimes_{R'} R \cong \bigoplus_{1 \leq j_1 < \cdots < j_i \leq m} R(-d_{j_1\cdots j_i}) \mathbf{e}_{j_1 \cdots {{j_i}}},
\]
\textcolor{black}{via which we can interpret $\pi_i:K_i \to K_i \otimes_{R'} R$ as a homomorphism that projects each entry of a vector in $K_i$ modulo $y$.}
In particular, we have
\begin{eqnarray}
K_0 \otimes_{R'} R &=& R' / \langle f_1^h, \ldots , f_m^h \rangle_{R'} \otimes_{R'} R'/\langle y \rangle_{R'} \nonumber \\
& \cong & R' / \langle f_1^h, \ldots , f_m^h , y \rangle_{R'} \nonumber \\
& \cong & R / \langle f_1^{\rm top}, \ldots , f_m^{\rm top} \rangle_R \nonumber
\end{eqnarray}
for $i=0$.
This means that the chain complex $K_{\bullet} \otimes_{R'} R$ gives rise to the 
\textcolor{black}{Koszul} complex on $({f}_1^{\rm top},\ldots,f_m^{\rm top})$.
We induce a long exact sequence of homology groups.
In particular, for each degree $d$, we have the following long exact sequence:
\[{
	\xymatrix{
	  H_{i+1}(K_{\bullet})_{d-1} \ar[r]^{\times y} & H_{i+1}(K_{\bullet})_d \ar[r]^(0.4){\pi_{i+1}} & H_{i+1}(K_{\bullet} \otimes_{R'} R)_d \ar[lld]_{\delta_{i+1}} \\
	  H_{i}(K_{\bullet})_{d-1} \ar[r]_{\times y} & H_{i}(K_{\bullet})_d \ar[r]_(0.4){\pi_i} & H_{i}(K_{\bullet}\otimes_{R'} R)_d ,    \\
	}}
\]
where $\delta_{i+1}$ is the connecting homomorphism produced by the Snake lemma.
For $i=0$, we have the following exact sequence:
\[
H_{1}(K_{\bullet} \otimes_{R'} R)_d \longrightarrow H_{0}(K_{\bullet})_{d-1} \overset{\times y}{\longrightarrow} H_{0}(K_{\bullet})_d \longrightarrow H_{0}(K_{\bullet}\otimes_{R'} R)_d \longrightarrow 0. 
\]
From our assumption that $\bm{F}^{\rm top}$ is cryptographic semi-regular, it follows from \textcolor{black}{Proposition} \ref{prop:Diem} that $H_1(K_{\bullet} \otimes_{R'}R)_{\leq D-1} = 0$ for $D := d_{\rm reg}(\langle \bm{F}^{\rm top} \rangle)$.
Therefore, if $d \leq D-1$, we have an exact sequence
\[{
	\xymatrix{
	  0 \ar[r] & H_{0}(K_{\bullet})_{d-1} \ar[r]^{\times y} & H_{0}(K_{\bullet})_d \ar[r] & H_{0}(K_{\bullet}\otimes_{R'} R)_d \ar[r] & 0    \\
	}}
\]
of $K$-linear spaces, so that
 \[
    \mathrm{dim}_K H_{0}(K_{\bullet})_d = \mathrm{dim}_K H_{0}(K_{\bullet}\otimes_{R'} R)_d + \mathrm{dim}_K H_{0}(K_{\bullet})_{d-1}
\]
by the dimension theorem.
Since $H_0(K_{\bullet}) = R'/ \langle \bm{F}^h \rangle$ and ${H_0 (K_{\bullet}\otimes_{R'} R)} \cong R / \langle \bm{F}^{\rm top} \rangle$, we have the equality \eqref{eq:main}, as desired.
\end{proof}

\begin{remark}
\textcolor{black}{With notation as in Theorem \ref{thm:main}, assume that $D < \infty$ (and thus $m \geq n$).}
{
In the proof of Theorem \ref{thm:main}, the multiplication map $H_0(K_{\bullet})_{d-1} \to H_0(K_{\bullet})_{d}$ by $y$ is injective for all $d < D$, whence ${\rm HF}_{R'/\langle F^h \rangle}(d)$ is monotonically increasing for $d < D-1$.
On the other hand, since \textcolor{black}{$H_{0}(K_{\bullet}\otimes_{R'} R)_{d} = (R/\langle F^{\rm top}\rangle)_d =0$} for all $d \geq D$ by the definition of the degree of regularity, the multiplication map $H_0(K_{\bullet})_{d-1} \to H_0(K_{\bullet})_{d}$ by $y$ is surjective for all $d \geq D$, whence ${\rm HF}_{R'/\langle F^h \rangle}(d)$ is monotonically decreasing for $d \geq D-1$.
By this together with \cite[Theorem 3.3.4]{C14}, the homogeneous ideal $\langle \bm{F}^h \rangle$ is zero-dimensional or trivial, i.e., there are at most a finite number of projective zeros of $F^h$ (and thus there are at most a finite number of affine zeros of $\bm{F}$).}
\end{remark}

By Theorem 4, it can be proved that the Hilbert-Poincar\'e series of $R'/
\langle \bm{F}^h\rangle$ satisfies the following equality~\eqref{eq:hHilb}, which 
may correspond to \cite[Proposition 6]{BFSY}: 

\begin{corollary}\label{cor:Dreg}
Let $D=d_{\rm reg}(\langle \bm{F}^{\rm top}\rangle)$. Then we have
\begin{equation}\label{eq:hHilb}
{\rm HS}_{R'/\langle \bm{F}^h\rangle}(z)\equiv \frac{\prod_{i=1}^m (1-z^{d_i})}{(1-z)^{n+1}}\pmod{z^D}.
\end{equation}
Therefore, by Theorem \ref{lem:Diem2} (\cite[Theorem 1]{Diem2}), 
$\bm{F}^h$ is $D$-regular. 
{Here, we note that $D=\deg({\rm HS}_{R/\langle \bm{F}^{\rm top}\rangle}{ )}+1
={\deg\left(\left[\frac{\prod_{i=1}^m(1-z^{d_i})}{(1-z)^n}\right]\right)}+1$. }
\end{corollary}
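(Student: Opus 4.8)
The plan is to read the summation formula \eqref{eq:main2} of Theorem \ref{thm:main} as a multiplication by $1/(1-z)$ at the level of formal power series. For any $A(z)=\sum_{d\ge 0}a_dz^d\in\mathbb{Z}\llbracket z\rrbracket$, the $z^d$-coefficient of $A(z)/(1-z)$ equals $\sum_{i=0}^d a_i$. Taking $A(z)={\rm HS}_{R/\langle \bm{F}^{\rm top}\rangle}(z)$, formula \eqref{eq:main2} says exactly that the $z^d$-coefficients of ${\rm HS}_{R'/\langle \bm{F}^h\rangle}(z)$ and of ${\rm HS}_{R/\langle \bm{F}^{\rm top}\rangle}(z)/(1-z)$ coincide for all $d<D$. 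Hence I would first record the congruence
\[
{\rm HS}_{R'/\langle \bm{F}^h\rangle}(z)\equiv \frac{{\rm HS}_{R/\langle \bm{F}^{\rm top}\rangle}(z)}{1-z}\pmod{z^D}.
\]

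Next I would substitute the known shape of ${\rm HS}_{R/\langle \bm{F}^{\rm top}\rangle}(z)$. Since $\bm{F}^{\rm top}$ is cryptographic semi-regular, Proposition \ref{prop:Diem} gives ${\rm HS}_{R/\langle \bm{F}^{\rm top}\rangle}(z)=\bigl[\prod_{i=1}^m(1-z^{d_i})/(1-z)^n\bigr]$, while Remark \ref{rm:dreg} gives $D=\deg\bigl({\rm HS}_{R/\langle \bm{F}^{\rm top}\rangle}(z)\bigr)+1$. Thus the truncated series has its last nonzero coefficient in degree $D-1$, so it agrees with the full rational function $\prod_{i=1}^m(1-z^{d_i})/(1-z)^n$ in every degree $<D$; equivalently ${\rm HS}_{R/\langle \bm{F}^{\rm top}\rangle}(z)\equiv \prod_{i=1}^m(1-z^{d_i})/(1-z)^n\pmod{z^D}$. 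Because $1/(1-z)$ is a unit in $\mathbb{Z}\llbracket z\rrbracket$, multiplying a congruence modulo $z^D$ by it is legitimate, and combining the two congruences yields
\[
{\rm HS}_{R'/\langle \bm{F}^h\rangle}(z)\equiv \frac{1}{1-z}\cdot\frac{\prod_{i=1}^m(1-z^{d_i})}{(1-z)^n}=\frac{\prod_{i=1}^m(1-z^{d_i})}{(1-z)^{n+1}}\pmod{z^D},
\]
which is exactly \eqref{eq:hHilb}.

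Finally I would deduce the $D$-regularity of $\bm{F}^h$ from Theorem \ref{lem:Diem2}, applied in the $(n+1)$-variable ring $R'=K[x_1,\ldots,x_n,y]$. Homogenization preserves total degree, so $\deg(f_i^h)=d_i$ and the denominator $(1-z)^{n+1}$ in \eqref{eq:hHilb} is precisely the one prescribed by condition (2) of that theorem for $\bm{F}^h$; hence \eqref{eq:hHilb} is condition (2) at $d=D$, and the equivalence $(1)\Leftrightarrow(2)$ gives that $\bm{F}^h$ is $D$-regular. The trailing equalities for $D$ are then immediate from Remark \ref{rm:dreg} and \eqref{eq:semiregHil2}.

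The derivation of \eqref{eq:hHilb} is essentially bookkeeping, and its only delicate point is tracking the modulus $z^D$ throughout, in particular using $D=\deg+1$ to replace the truncation $[\,\cdot\,]$ by the full rational function below degree $D$. The step I expect to need the most care is the very last one: Theorem \ref{lem:Diem2} is stated only for $d\ge\max\{d_i\}$, so concluding ``$\bm{F}^h$ is $D$-regular'' presupposes $D\ge\max\{d_i\}$ (indeed, $d$-regularity is only defined under this inequality, as one sees from Definition \ref{def:semiregB}). I would therefore verify $D\ge\max\{d_i\}$ in the intended non-degenerate, overdetermined setting before invoking the theorem; absent this inequality the congruence \eqref{eq:hHilb} still holds, but the $D$-regularity statement must be read with the corresponding caveat.
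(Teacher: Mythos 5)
Your proposal is correct and follows essentially the same route as the paper's proof: both combine the summation formula \eqref{eq:main2} with the fact that the truncated series $\left[\prod_{i=1}^m(1-z^{d_i})/(1-z)^n\right]$ agrees with the full rational function modulo $z^D$, the only cosmetic difference being that you multiply by the unit $1/(1-z)$ in $\mathbb{Z}\llbracket z\rrbracket$ where the paper multiplies by $(1+z+\cdots+z^{D-1})$ modulo $z^D$. Your closing remark on verifying $D\geq\max\{d_i\}$ before invoking Theorem \ref{lem:Diem2} is a reasonable extra precaution, not a departure from the paper's argument.
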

\begin{proof} Let ${\rm HS}'(z)=\frac{\prod_{i=1}^m (1-z^{d_i})}{(1-z)^{n+1}}\bmod{z^D}$ and let
${\rm HF}'(d)$ denote the coefficient of ${\rm HS'}(z)$ of degree $d$ for $d<D$. 
First we remark that, 
as $\bm{F}^{\rm top}$ is a cryptographic semi-regular sequence, 
the Hilbert-Poincar\'e series of $R/\langle \bm{F}^{\rm top}\rangle$ satisfies the 
following:
\[
{\rm HS}_{R/\langle \bm{F}^{\rm top}\rangle}(d) 
=\left[\frac{\prod_{i=1}^m (1-z^{d_i})}{(1-z)^n}\right]
=\frac{\prod_{i=1}^m (1-z^{d_i})}{(1-z)^n}\bmod{z^D},
\]
{since} ${\rm HF}_{R/\langle \bm{F}^{\rm top}\rangle}(d)=0$ 
for $d\geq D$. 
Then we have
\begin{eqnarray*}
{\rm HS}'(z) \bmod{z^D} & = & 
\frac{\prod_{i=1}^m (1-z^{d_i})}{(1-z)^{n+1}}\bmod{z^D}\\
& = & \frac{\prod_{i=1}^m (1-z^{d_i})}{(1-z)^n}\times (1+z+\cdots+z^{D-1})
\bmod{z^D}\\
& = & {\rm HS}_{R/\langle \bm{F}^{\rm top}\rangle}(z) \cdot (1+z+\cdots +z^{D-1})\bmod{z^D}. 
\end{eqnarray*}
Therefore, for $d<D$, the equation \eqref{eq:main2} gives 
\[
{\rm HF}'(d)={\rm HF}_{R/\langle \bm{F}^{\rm top}\rangle}(d)+\cdots +
{\rm HF}_{R/\langle \bm{F}^{\rm top}\rangle}(0)={\rm HF}_{R'/\langle \bm{F}^h\rangle}(d),
\]
{which implies the desired equality \eqref{eq:hHilb}.}
\end{proof}

\textcolor{black}{To prove the following corollary}, we use a fact that, for a homogeneous ideal $I$ in $R$, the equality $\sum_{i=0}^d \mathrm{dim}_K I_i = \mathrm{dim}_K (IR')_d$ holds for each $d \geq 0$.   
\textcolor{black}{Also we take a graded ordering $\prec$ (preferably a DRL ordering) on monomials in $X$ and its homogenization on monomials in $X\cup\{y\}$. }

\begin{corollary}\label{cor:LM}
With notation as above, assume that $\bm{F} = (f_1,\ldots , f_m) \in R^m$ is affine cryptographic semi-regular.
Put $\overline{I} := \langle \bm{F}^{\rm top} \rangle_R$ and $\tilde{I} := \langle \bm{F}^h \rangle_{R'}$.
Then, we have $(\langle \mathrm{LM}(\tilde{I}) \rangle_{R'})_d = (\langle \mathrm{LM}(\overline{I}) \rangle_{R'})_d $ for each $d$ with $d < D:=d_{\rm reg} (\overline{I})$.
\end{corollary}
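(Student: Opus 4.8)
The plan is to derive the desired graded equality from a single inclusion valid in all degrees, combined with an equality of dimensions valid below $D$. First I would establish $\langle \mathrm{LM}(\overline{I}) \rangle_{R'} \subseteq \langle \mathrm{LM}(\tilde{I}) \rangle_{R'}$. Since $\overline{I}$ is homogeneous and $\prec$ is graded, it suffices to show that $\mathrm{LM}_{\prec}(g) \in \langle \mathrm{LM}(\tilde{I}) \rangle_{R'}$ for every nonzero homogeneous $g \in \overline{I}$. Writing $g = \sum_{i} b_i f_i^{\rm top}$ with each $b_i \in R$ homogeneous, I would set $\tilde{g} := \sum_i b_i f_i^h \in \tilde{I}$; this is homogeneous in $R'$ and satisfies $\tilde{g}|_{y=0} = \sum_i b_i f_i^{\rm top} = g$. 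The key observation is that, under the homogenized ordering $\prec_h$, among monomials of a fixed total degree the $y$-free ones are the largest, because a $y$-free monomial has strictly larger $x$-degree and the graded ordering $\prec$ compares $x$-degree first. Consequently $\mathrm{LM}_{\prec_h}(\tilde{g})$ is $y$-free and coincides with $\mathrm{LM}_{\prec}(\tilde{g}|_{y=0}) = \mathrm{LM}_{\prec}(g)$, so that $\mathrm{LM}_{\prec}(g) \in \mathrm{LM}(\tilde{I})$ and the inclusion follows in every degree.

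Next I would match dimensions in each degree $d < D$. For the left-hand ideal, the fact recalled just before the statement gives $\dim_K (\langle \mathrm{LM}(\overline{I}) \rangle_{R'})_d = \sum_{i=0}^{d} \dim_K (\langle \mathrm{LM}(\overline{I}) \rangle_R)_i$; since an initial ideal has the same Hilbert function as its ideal, this equals $\sum_{i=0}^d \dim_K \overline{I}_i = \dim_K R'_d - \sum_{i=0}^{d} {\rm HF}_{R/\overline{I}}(i)$, where I use $\sum_{i=0}^d \dim_K R_i = \dim_K R'_d$. For the right-hand ideal, $\dim_K (\langle \mathrm{LM}(\tilde{I}) \rangle_{R'})_d = \dim_K \tilde{I}_d = \dim_K R'_d - {\rm HF}_{R'/\tilde{I}}(d)$, and Theorem \ref{thm:main} supplies ${\rm HF}_{R'/\tilde{I}}(d) = \sum_{i=0}^{d} {\rm HF}_{R/\overline{I}}(i)$ for $d < D$. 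Hence the two degree-$d$ components have equal $K$-dimension.

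Finally, the inclusion of the first step together with the equality of dimensions of the second forces $(\langle \mathrm{LM}(\tilde{I}) \rangle_{R'})_d = (\langle \mathrm{LM}(\overline{I}) \rangle_{R'})_d$ for every $d < D$, as required. I expect the first step to be the main obstacle: one must argue carefully that homogenizing an $\overline{I}$-representation of $g$ and then reading off the $\prec_h$-leading monomial of $\tilde{g}$ reproduces $\mathrm{LM}_{\prec}(g)$ exactly, which rests on the precise interaction between the definition of $\prec_h$ and the $y$-free part $\tilde{g}|_{y=0}$. Once this leading-monomial identity is secured, the remaining dimension count is routine bookkeeping built on Theorem \ref{thm:main} and the stated identity $\sum_{i=0}^d \dim_K I_i = \dim_K (IR')_d$.
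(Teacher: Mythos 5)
Your proof is correct, but at the crucial inclusion step it runs in the opposite direction from the paper's, and the two arguments are genuinely different. The paper proves $(\langle \mathrm{LM}(\tilde{I}) \rangle_{R'})_d \subseteq (\langle \mathrm{LM}(\overline{I}) \rangle_{R'})_d$ for $d<D$ by induction on $d$: if $\mathrm{LM}(f)$ is not divisible by $y$ one dehomogenizes, while if $y \mid \mathrm{LM}(f)$ then every term of $f$ is divisible by $y$, so $f = yh$, and the injectivity of multiplication by $y$ on $(R'/\tilde{I})_{d'}$ for $d'<D$ --- which is where the cryptographic semi-regularity hypothesis is invoked inside the inclusion itself --- forces $h \in \tilde{I}_{d-1}$, letting the induction hypothesis apply. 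You instead prove the reverse inclusion $\langle \mathrm{LM}(\overline{I}) \rangle_{R'} \subseteq \langle \mathrm{LM}(\tilde{I}) \rangle_{R'}$, and your version is unconditional: it needs no semi-regularity, no induction, and no restriction to $d<D$, because homogenizing a homogeneous representation $g=\sum_i b_i f_i^{\rm top}$ yields $\tilde{g}=\sum_i b_i f_i^h \in \tilde{I}$ whose $y$-free part is exactly $g$, and under the homogenization of a graded ordering the $y$-free monomials of a fixed degree dominate all others, whence $\mathrm{LM}(\tilde{g})=\mathrm{LM}(g)$; this is the same homogenize-the-representation trick the paper itself uses later in the proofs of Lemma \ref{lem:LM} and Lemma \ref{lem:degD}, just not in Corollary \ref{cor:LM}. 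Both proofs then close with essentially the same dimension count, resting on Theorem \ref{thm:main}, Macaulay's theorem that an ideal and its initial ideal share a Hilbert function, and the identity $\sum_{i=0}^{d}\dim_K I_i = \dim_K (IR')_d$. The trade-off: your route funnels all use of the hypothesis through the single Hilbert-function equality of Theorem \ref{thm:main} and keeps the inclusion elementary, whereas the paper's route uses the hypothesis twice but extracts extra structure along the way --- namely that in degrees below $D$ any element of $\tilde{I}$ whose leading monomial is divisible by $y$ is itself $y$ times an element of $\tilde{I}$, which is precisely the ``no degree fall below $D$'' mechanism exploited in Section \ref{sec:app}.
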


\begin{proof}
We prove $(\langle \mathrm{LM}(\tilde{I}) \rangle_{R'})_d \subset (\langle \mathrm{LM}(\overline{I}) \rangle_{R'})_d $ by the induction on $d$.
The case where $ d =0$ is clear from Theorem \ref{thm:main}, and so we assume $d >0$.
{Any element in $(\langle \mathrm{LM}(\tilde{I}) \rangle_{R'})_d$ is represented as a finite sum of elements in $R'$ of the form $g \cdot \mathrm{LM}(h)$ with $g \in R'$, $h \in \tilde{I}$, and $\mathrm{deg}(gh) = d$.
Hence, we can also write each $g \cdot \mathrm{LM}(h)$ as a $K$-linear combination of elements of the form $\mathrm{LM}(t h)$ for a monomial $t$ in $R'$ of 
\textcolor{black}{degree} $\mathrm{deg}(g)$, where $t h$ is an element in $\tilde{I}$ of degree $d$. 
Therefore, it suffices for showing ``$\subset$'' to prove that $\mathrm{LM}(f) \in (\langle \mathrm{LM}(\overline{I}) \rangle_{R'})_d$ for any $f \in \tilde{I}$ with $\mathrm{deg}(f) = d$.}
We may assume that $f$ is homogeneous.
It is straightforward that $f|_{y=0} \in \overline{I}_{\leq d}$.
If $\mathrm{LM}(f) \in R=K[x_1,\ldots , x_{n}]$, then we have $\mathrm{LM}(f) = {\rm LM}(f|_{y=0}) \in \mathrm{LM}(\overline{I})$.
Thus, we may also assume that $y \mid \mathrm{LM}(f)$.
In this case, it follows from the definition of the DRL ordering that any other term in $f$ is also divisible by $y$, so that $f \in \langle y \rangle_{R'}$.
Thus, we can write $f = y h$ for some $h \in R'$, where $h$ is homogeneous of degree $d-1$.
As in the proof of Theorem \ref{thm:main}, the multiplication map
\[
(R'/\tilde{I})_{d'-1} \to (R'/\tilde{I})_{d'} \ ; \ h' \bmod{\tilde{I}} \mapsto  y h' \bmod{\tilde{I}}
\]
is injective for any $d' < d_\mathrm{reg}( \overline{I} )$, since $F$ is cryptographic semi-regular. 
Therefore, it follows from $f \in \tilde{I}_d$ that $h \in \tilde{I}_{d-1}$, whence $f = y h \in y \tilde{I}_{d-1}$.
By the induction hypothesis, there exists $g\in \overline{I}$ such that $\mathrm{LM}(g) \mid \mathrm{LM}(h)$, whence $\mathrm{LM}(f) \in (\langle \mathrm{LM}(\overline{I}) \rangle_{R'})_d$.

Here, it follows from Theorem \ref{thm:main} that
\begin{eqnarray*}
    \mathrm{dim}_K (R')_d - \mathrm{dim}_K \tilde{I}_d & = & \sum_{i=0}^d \left( \mathrm{dim}_K R_i -  \mathrm{dim}_K {\overline{I}}_i \right) = \sum_{i=0}^d \mathrm{dim}_K R_i - \sum_{i=0}^d  \mathrm{dim}_K {\overline{I}}_i \\
    &=& \mathrm{dim}_K (R')_d - \mathrm{dim}_K (\overline{I} R')_d,
\end{eqnarray*}
and thus $\mathrm{dim}_K \tilde{I}_d = \mathrm{dim}_{K} (\overline{I} R')_d$.
Hence, it follows from $\langle \mathrm{LM}(\overline{I})\rangle_{R'} = \langle \mathrm{LM}(\overline{I}R') \rangle_{R'}$ that
\[
\mathrm{dim}_K (\langle \mathrm{LM}(\tilde{I}) \rangle_{R'})_d = \mathrm{dim}_K (\langle \mathrm{LM}(\overline{I}) \rangle_{R'})_d,
\]
whence $(\langle \mathrm{LM}(\tilde{I}) \rangle_{R'})_d = (\langle \mathrm{LM}(\overline{I}) \rangle_{R'})_d $, as desired.
\end{proof}

\begin{example} We give a simple example. Let $p=73$, $K={\mathbb F}_p$, and 
\begin{eqnarray*}
f_1 & = & x_1^2+(3x_2-2x_3-1)x_1+x_2^2+(-2x_3-2)x_2+x_3^2+x_3,\\
f_2 & = & 4x_1^2+(3x_2+4x_3-2)x_1-x_2+x_3^2+2x_3,\\
f_3 & = & 3x_1^2-x_1+9x_2^2+(-6x_3+1)x_2+x_3^2-x_3,\\
f_4 & = & x_1^2+(-6x_2+2x_3-2)x_1+9x_2^2+(-6x_3+1)x_2+2x_3^2.
\end{eqnarray*}
Then, $d_1=d_2=d_3=d_4=2$. As their {top parts (maximal total degree parts)} are 
\begin{eqnarray*}
f_1^{\rm top} & = & x_1^2+(3x_2-2x_3)x_1+x_2^2-2x_3x_2+x_3^2,\\
f_2^{\rm top} & = & 4x_1^2+(3x_2+4x_3)x_1+x_3^2,\\
f_3^{\rm top} & = & 3x_1^2+9x_2^2-6x_3x_2+x_3^2,\\
f_4^{\rm top} & = & x_1^2+(-6x_2+2x_3)x_1+9x_2^2-6x_3x_2+2x_3^2, 
\end{eqnarray*}
{one can verify that} $\bm{F}^{\rm top}$ is a cryptographic semi-regular sequence. 
Moreover, its degree of regularity is equal to $3$. 
Indeed, the reduced Gr\"obner basis $G_{\rm top}$ of 
the ideal $\langle \bm{F}^{\rm top}\rangle$ with respect to {the DRL ordering} $x_1\succ x_2\succ x_3$ 
is 
\[
\{x_3^2x_2,x_3^3,x_1^2+68x_3x_2+55x_3^2,x_2x_1+27x_3x_2+29x_3^2,x_2^2+x_3x_2+71x_3^2,x_3x_1+3x_3x_2+33x_3^2\}.
\]
Then its leading monomials are $x_3^3,x_3^2x_2,x_1^2,x_1x_2,x_2^2,x_3x_1$ and its Hilbert-Poincar\'e series 
satisfies 
\[
{\rm HS}_{R/\langle \bm{F}^{\rm top}\rangle}(z) 
=2z^2+3z+1 
=\left( \frac{(1-z^2)^4}{(1-z)^3}\bmod{z^3} \right),
\]
whence the degree of regularity of $\langle \bm{F}^{\rm top} \rangle$ is 3. 

On the other hand, 
the reduced Gr\"obner basis $G_{\rm hom}$ of the ideal 
$\langle \bm{F}^h\rangle$ with respect to {the DRL ordering} 
$x_1\succ x_2\succ x_3\succ y$  is 
\begin{align*}
& \{y^3x_1,y^3x_2,y^3x_3,60y^2x_1+(x_3^2+22y^2)x_2+39y^2x_3,\\
& 72y^2x_1+14y^2x_2+x_3^3+56y^2x_3,
16y^2x_1+(yx_3+55y^2)x_2+38y^2x_3,\\
& 72y^2x_1+66y^2x_2+yx_3^2+70y^2x_3,
x_1^2+72yx_1+(68x_3+40y)x_2+55x_3^2+14yx_3,\\
& (x_2+20y)x_1+(27x_3+37y)x_2+29x_3^2+12yx_3,\\
& 57yx_1+x_2^2+(x_3+3y)x_2+71x_3^2+52yx_3,\\
& (x_3+22y)x_1+(3x_3+5y)x_2+33x_3^2+14yx_3\}
\end{align*}
and its leading monomials are $y^3x_1$, $y^3x_2$, $y^3x_3$, $x_3^2x_2$, $x_3^3$, $yx_2x_3$, 
$yx_3^2$, $x_1^2$, $x_1x_2$, $x_2^2$, and $x_1x_3$. 
Then the Hilbert-Poincar\'e series of $R'/\langle \bm{F}^h\rangle$ satisfies 
\[
\left( {\rm HS}_{R'/\langle \bm{F}^{h}\rangle}(z) \bmod{z^3} \right) =\left( 6z^2+4z+1 \bmod{z^3} \right)= \left( \frac{(1-z^2)^4}{(1-z)^4}\bmod{z^3} \right).
\]
We note that ${\rm HF}_{R'/\langle \bm{F}^h\rangle}(3)=4$ and ${\rm HF}_{R'/\langle \bm{F}^h\rangle}(4)=1$. 
We can also examine ${\rm LM}(G_{\rm hom})_{d<D}={\rm LM}(G_{\rm top})_{d<D}$ 
and, for $g\in G_{\rm hom}$, if ${\rm LM}(g)$ is divided by $y$, then 
$\deg(g)\geq D=3$. Thus, at the degree 3, there occurs a 
{\em degree-fall}. \textcolor{black}{See \cite[Sebsection 2.1]{CG23} for details.} 
Also, the reduced Gr\"{o}bner basis of $\langle \bm{F}\rangle$ 
with respect to $\prec$ 
is $\{x,y,z\}$ and we can examine that 
the \textcolor{black}{dehomogenization} of $G_{\rm hom}$ is also a Gr\"obner basis 
of $\langle \bm{F}\rangle$. 
\end{example}

\section{Application to Gr\"obner bases computation}\label{sec:app}

{We use the same notation as in the previous section, and assume that $\bm{F}$ is cryptographic semi-regular \textcolor{black}{such that $D := d_{\mathrm{reg}}(\langle \bm{F}^{\rm top} \rangle)<\infty$}.}
Here we apply results in the previous section to the computation of Gr\"obner bases 
of \textcolor{black}{the} ideals $\langle \bm{F}\rangle$ and $\langle \bm{F}^h\rangle$. 
Let $G$, $G_{\rm hom}$, {and} $G_{\rm top}$ be the reduced Gr\"obner bases of $\langle \bm{F}\rangle$, 
$\langle \bm{F}^h\rangle$, and $\langle \bm{F}^{\rm top}\rangle$, {respectively}, where 
their monomial \textcolor{black}{orderings} are DRL $\prec$ or its extension $\prec^h$. 

As to the computation of $G$, 
in special settings on $\bm{F}$ such as $\bm{F}$ containing field equations 
or $\bm{F}$ appearing in a multivariate polynomial cryptosystem, 
methods using the value $D$ or 
those of the Hilbert function for degrees less than $D$ 
were proposed. (See \cite{ST,Sakata}.) 
Our results in the section can be considered as a {\em certain extension}
and to {\em give exact mathematical proofs} {for the correctness of the methods}. 

{Here, we extend the notion of {\em top part} 
to a homogeneous polynomial $h$ in $R'=R[y]$. 
We call $h|_{y=0}$ the {\em top part} of $h$ and 
denote it by $h^{\rm top}$. 
Thus, if $h^{\rm top}$ is not zero, 
it coincides with the top part $(h|_{y=1})^{\rm top}$ 
of the {dehomogenization} $h|_{y=1}$ of $h$. 
We remark that $g^{\rm top}=(g^h)^{\rm top}$ 
for a polynomial $g$ in $R$.} 

%
\subsection{Gr\"obner basis elements of degree less than $D$}
%
Here we show relations between $(G_{\rm hom})_{<D}$ 
and $(G_{\rm top})_{<D}$ with \textcolor{black}{proofs, which} are useful for 
the computations of $G_{\rm hom}$ and $G$.

Since $\bm{F}^{\rm top}$ is cryptographic semi-regular and $\bm{F}^h$ is $D$-regular by Corollary \ref{cor:Dreg}, 
$H_1(K_{\bullet}(\bm{F}^{\rm top}))_{{<D}}=H_1(K_{\bullet}(\bm{F}^h))_{{<D}}=0$. 
As $H_1(K_{\bullet}(\bm{F}^h))={\rm syz}(\bm{F}^h)/{\rm tsyz}(\bm{F}^h)$ and 
$H_1(K_{\bullet}(\bm{F}^{\rm top}))={\rm syz}(\bm{F}^{\rm top})/{\rm tsyz}(\bm{F}^h)$ 
({see \eqref{eq:H1}}), 
we have the following corollary, where ${\rm tsyz}(\bm{F}^h)$ denotes the module of trivial syzygies (see Definition \ref{def:trivial}).

\begin{corollary}[{\cite[Theorem 1]{Diem2}}]
It follows that ${\rm syz}(\bm{F}^{\rm top})_{<D}={\rm tsyz}(\bm{F}^{\rm top})_{<D}$ and ${\rm syz}(\bm{F}^h)_{<D}={\rm tsyz}(\bm{F}^h)_{<D}$. 
\end{corollary}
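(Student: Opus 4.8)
The plan is to read the assertion off directly from the identification of the first Koszul homology with the quotient of syzygies by trivial syzygies recorded in \eqref{eq:H1}, combined with the homological vanishing already established in the paragraph preceding the statement. For a sequence $(g_1,\ldots,g_m)$ (to be specialized first to $\bm{F}^{\rm top}$ and then to $\bm{F}^h$), equation \eqref{eq:H1} gives the equality of graded modules
\[
H_1(K_{\bullet}(g_1,\ldots,g_m)) = \mathrm{syz}(g_1,\ldots,g_m)/\mathrm{tsyz}(g_1,\ldots,g_m),
\]
where each $\mathrm{tsyz}$ is the module generated by the Koszul relations $\mathbf{t}_{i,j}=g_i\mathbf{e}_j-g_j\mathbf{e}_i$ fixed in Definition \ref{def:trivial}. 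Since $K_1=\bigoplus_{j=1}^m R(-d_j)\mathbf{e}_j$ is graded and both $\mathrm{syz}$ and $\mathrm{tsyz}$ are graded submodules, forming the quotient commutes with extracting homogeneous components, so $(\mathrm{syz}/\mathrm{tsyz})_d = \mathrm{syz}_d/\mathrm{tsyz}_d$ for every $d$. Hence the vanishing $H_1(K_{\bullet})_d=0$ for $d<D$ is equivalent to $\mathrm{syz}_d=\mathrm{tsyz}_d$ for each such $d$, and summing over $d<D$ yields $\mathrm{syz}_{<D}=\mathrm{tsyz}_{<D}$.

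Next I would supply the two vanishing inputs in turn. For $\bm{F}^{\rm top}$, cryptographic semi-regularity gives $H_1(K_{\bullet}(\bm{F}^{\rm top}))_{<D}=0$ by the third condition of Proposition \ref{prop:Diem}; for $\bm{F}^h$, Corollary \ref{cor:Dreg} shows that $\bm{F}^h$ is $D$-regular, so the third condition of Theorem \ref{lem:Diem2} gives $H_1(K_{\bullet}(\bm{F}^h))_{<D}=0$. Feeding these into the general observation above, once with $(g_1,\ldots,g_m)=\bm{F}^{\rm top}$ and once with $(g_1,\ldots,g_m)=\bm{F}^h$, produces the two desired equalities. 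Note that, because $\mathrm{tsyz}\subseteq\mathrm{syz}$ always holds, only the inclusion $\mathrm{syz}_{<D}\subseteq\mathrm{tsyz}_{<D}$ carries content, and this is exactly what the vanishing of the quotient supplies.

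I do not expect a genuine obstacle here, since the corollary is essentially a reformulation of the homological vanishing in terms of syzygy modules. The only point requiring care is to confirm that the degree index on $H_1(K_{\bullet})_d$ is the internal degree of $K_1=\bigoplus_j R(-d_j)\mathbf{e}_j$, so that a homogeneous syzygy $\sum_j a_j\mathbf{e}_j$ counted in $\mathrm{syz}_d$ satisfies $\deg(a_j)+d_j=d$, and to use for each of $\bm{F}^{\rm top}$ and $\bm{F}^h$ its own module of trivial syzygies. With these conventions fixed, the argument is immediate.
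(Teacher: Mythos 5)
Your proposal is correct and follows essentially the same route as the paper: the vanishing $H_1(K_{\bullet}(\bm{F}^{\rm top}))_{<D}=H_1(K_{\bullet}(\bm{F}^h))_{<D}=0$ obtained from Proposition \ref{prop:Diem} and from Corollary \ref{cor:Dreg} combined with Theorem \ref{lem:Diem2}, read off through the identification \eqref{eq:H1} of $H_1$ with $\mathrm{syz}/\mathrm{tsyz}$. Your added care about gradings and about using each sequence's own module of trivial syzygies is sound (and in fact silently corrects a typo in the paper, which writes $\mathrm{tsyz}(\bm{F}^h)$ where $\mathrm{tsyz}(\bm{F}^{\rm top})$ is meant).
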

This implies that, 
in the Gr\"obner basis computation $G_{\rm hom}$ 
with respect to a \textcolor{black}{graded} ordering $\prec^h$, 
if an S-polynomial $S(g_1,g_2)=t_1g_1-t_2g_2$ 
of degree less than $D$ is reduced to $0$, 
it comes from some trivial \textcolor{black}{syzygy}, that is, 
{$\sum_{i=1}^m(t_1a^{(1)}_i-t_2a^{(2)}_i-b_i)\mathbf{e}_i$} 
belongs to ${\rm tsyz}(\bm{F}^h)_{<D}$, where 
$g_1=\sum_{i=1}^m a^{(1)}_i f_i^h$, 
$g_2=\sum_{i=1}^m a^{(2)}_i f_i^h$, 
\textcolor{black}{and $S(g_1,g_2)=\sum_{i=1}^m b_if_i^h$ is obtained by $\Sigma$-reduction 
in the $F_5$ algorithm (or its variant such as the matrix $F_5$ algorithm) with the Schreyer 
ordering.} 
Thus, since the $F_5$ algorithm (or its variant such as the matrix{-}$F_5$ 
algorithm) with the {\em Schreyer ordering}
automatically discards an S-polynomial whose signature is 
the LM of some trivial syzygy, 
we can avoid unnecessary S-polynomials. 
See {Subsection} \ref{sec:sba} for a {brief} outline of the $F_5$ algorithm 
and the syzygy criterion {(Proposition \ref{prop:syz})}. 

In addition to the \textcolor{black}{above facts}, as mentioned (somehow implicitly) in \cite[Section 3.5]{Bar} and \cite{BFSY}, 
when we compute a Gr\"obner basis of $\langle \bm{F}^h\rangle$ 
for the degree less than $D$ by the $F_5$ algorithm with respect to $\prec^h$, 
for each computed {non-zero 
polynomial $g$ from an S-polynomial, say $S(g_1,g_2)$, 
of degree less than $D$, 
its signature does not come from any trivial syzygy and so 
the reductions of $S(g_1,g_2)$ are done only at its top part.}
{
This implies that the Gr\"obner basis computation process of $\langle \bm{F}^h\rangle$ 
corresponds exactly to that of $\langle \bm{F}\rangle$ for each degree less than $D$, \textcolor{black}{see \cite{KY} for details.}
{Especially, the following lemma holds}. 
Here we give a {\it concrete and easy} proof using Corollary \ref{cor:LM}. 
We {also} note that the \textcolor{black}{argument and the proof of Lemma \ref{lem:LM}} can be considered as \textcolor{black}{corrected versions for those of} \cite[Theorem 4]{Sakata}.
\begin{lemma}\label{lem:LM}
For each degree $d<D$, \textcolor{black}{we have}
\begin{equation}\label{eq:LMD}
{\rm LM}(G_{\rm hom})_{d}
={\rm LM}(G_{\rm top})_{d}.
\end{equation}
\end{lemma}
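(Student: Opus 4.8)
The plan is to prove the equality of leading-monomial sets degree by degree for $d<D$ by leveraging Corollary \ref{cor:LM}, which already establishes that the degree-$d$ components of the leading-monomial ideals coincide, namely $(\langle \mathrm{LM}(\tilde{I})\rangle_{R'})_d = (\langle \mathrm{LM}(\overline{I})\rangle_{R'})_d$ for $d<D$, where $\tilde{I}=\langle \bm{F}^h\rangle_{R'}$ and $\overline{I}=\langle \bm{F}^{\rm top}\rangle_R$. The key observation is that for reduced Gr\"obner bases, the leading-monomial set $\mathrm{LM}(G_{\rm hom})_d$ consists precisely of the \emph{minimal} generators of degree $d$ of the monomial ideal $\langle \mathrm{LM}(\tilde{I})\rangle_{R'}$, and similarly $\mathrm{LM}(G_{\rm top})_d$ consists of the minimal degree-$d$ generators of $\langle \mathrm{LM}(\overline{I})\rangle_{R'}$ (viewing $\overline{I}$ extended to $R'$). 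Thus, once one knows the degree-$d$ components of the two monomial ideals agree for all $d<D$, the minimal generating sets in those degrees must agree as well.

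First I would recall that for a (reduced) Gr\"obner basis $G$ of a homogeneous ideal with respect to a graded ordering, a monomial $m$ of degree $d$ lies in $\mathrm{LM}(G)$ if and only if $m \in (\langle \mathrm{LM}(I)\rangle)_d$ but $m$ is not divisible by any element of $\mathrm{LM}(G)$ of strictly smaller degree; equivalently, $m$ is a minimal generator of the monomial ideal $\langle \mathrm{LM}(I)\rangle$. I would make this precise by arguing that $\mathrm{LM}(G_{\rm hom})_d$ is determined entirely by the truncation $(\langle \mathrm{LM}(\tilde{I})\rangle_{R'})_{\leq d}$, and analogously for $G_{\rm top}$. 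Then, since Corollary \ref{cor:LM} gives $(\langle \mathrm{LM}(\tilde{I})\rangle_{R'})_e = (\langle \mathrm{LM}(\overline{I})\rangle_{R'})_e$ for \emph{every} $e<D$, an induction on $d$ (or a direct appeal to the fact that minimal generators are computed from the graded pieces up to degree $d$) yields $\mathrm{LM}(G_{\rm hom})_d = \mathrm{LM}(G_{\rm top})_d$.

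More concretely, I would proceed by induction on $d<D$. For the base case $d$ equal to the smallest degree of a generator, both leading-monomial sets equal the full degree-$d$ monomial basis of the respective $(\langle \mathrm{LM}\rangle)_d$, which coincide by Corollary \ref{cor:LM}. For the inductive step, assume $\mathrm{LM}(G_{\rm hom})_e = \mathrm{LM}(G_{\rm top})_e$ for all $e<d$. A monomial $m$ of degree $d$ belongs to $\mathrm{LM}(G_{\rm hom})_d$ precisely when $m \in (\langle \mathrm{LM}(\tilde{I})\rangle_{R'})_d$ and $m$ is \emph{not} a multiple of any monomial in $\bigcup_{e<d}\mathrm{LM}(G_{\rm hom})_e$. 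By the inductive hypothesis the latter union equals $\bigcup_{e<d}\mathrm{LM}(G_{\rm top})_e$, and by Corollary \ref{cor:LM} the membership condition $m \in (\langle \mathrm{LM}(\tilde{I})\rangle_{R'})_d$ is equivalent to $m \in (\langle \mathrm{LM}(\overline{I})\rangle_{R'})_d$; hence the same monomials survive, giving $\mathrm{LM}(G_{\rm hom})_d = \mathrm{LM}(G_{\rm top})_d$.

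\textbf{The main obstacle} I anticipate is a careful treatment of the extra variable $y$: elements of $G_{\rm top}$ live in $R$ while those of $G_{\rm hom}$ live in $R'=R[y]$, so the identification of leading monomials must be stated relative to the homogenized ordering $\prec^h$ and the inclusion $R\hookrightarrow R'$. I would need to verify that no leading monomial of $G_{\rm hom}$ in degree $d<D$ is divisible by $y$ --- this is exactly the content observed in the example and follows from the injectivity of multiplication by $y$ on $(R'/\tilde{I})_{d-1}\to (R'/\tilde{I})_d$ for $d<D$ established inside the proof of Corollary \ref{cor:LM} (if $y\mid \mathrm{LM}(g)$ then, as shown there, $g\in y\tilde{I}_{d-1}$ forces $g$ to be non-minimal or to have degree $\geq D$). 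Confirming this $y$-freeness in degrees below $D$ is what allows the clean identification $\mathrm{LM}(G_{\rm hom})_d = \mathrm{LM}(G_{\rm top})_d$ as sets of monomials in the $x$-variables only, and it is the one place where the semi-regularity hypothesis enters decisively rather than merely formally.
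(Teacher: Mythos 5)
Your proposal is correct and follows essentially the same route as the paper's own proof: induction on the degree $d$, combined with Corollary \ref{cor:LM} and the fact that leading monomials of a reduced Gr\"obner basis are exactly the minimal generators of the leading-monomial ideal (the paper phrases this as ``since $G_{\rm hom}$ is reduced, ${\rm LM}(g)$ is not divisible by any monomial in ${\rm LM}(G_{\rm hom})_{\leq d}={\rm LM}(G_{\rm top})_{\leq d}$''). Your minimal-generator formulation even handles automatically the $y$-freeness point you flagged as an obstacle, since the minimal generators of $\langle {\rm LM}(\overline{I})\rangle_{R'}$ are monomials in $X$ alone, so no separate appeal to the injectivity of multiplication by $y$ is needed.
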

\begin{proof} We can prove the {equality} \eqref{eq:LMD} by the induction on $d$.
Assume that the {equality} \eqref{eq:LMD} holds for {$d<D-1$.}

Consider any $t\in {\rm LM}(G_{\rm hom})_{d+1}$. Then, there is a polynomial $g\in 
G_{\rm hom}$ such that ${\rm LM}(g)=t$. 
By Corollary \ref{cor:LM}, for $d+1<D$, we have 
\[
(\langle \mathrm{LM}(\langle \bm{F}^h\rangle) \rangle_{R'})_{d+1} 
= (\langle \mathrm{LM}(\langle \bm{F}^{\rm top}\rangle_{R}) \rangle_{R'})_{d+1}
\]
and ${\rm LM}(g)$ is divided by ${\rm LM}(g')$ for {some} $g' \in G_{\rm top}$. 
Since $G_{\rm hom}$ is reduced, ${\rm LM}(g)$ is not divisible by any monomial in ${\rm LM}(G_{\rm hom})_{\leq d}={\rm LM}(G_{\rm top})_{\leq d}$, so that $\mathrm{deg}(g') = d+1$.
Then we have $\mathrm{LM}(g) = \mathrm{LM}(g')$, and so ${\rm LM}(G_{\rm hom})_{d+1}\subset {\rm LM}(G_{\rm top})_{d+1}$. 

By the same argument, ${\rm LM}(G_{\rm hom})_{d+1}\supset {\rm LM}(G_{\rm top})_{d+1}$ 
can be shown. 
We note that for each $t\in {\rm LM}(G_{\rm top})_{d+1}$, there 
is a polynomial $g\in  {(G_{\rm top})_{d+1}}{\subset \langle \bm{F}^{\rm top} \rangle_{d+1}}$ such that $t={\rm LM}(g)$. 
In this case, there are homogeneous polynomials $a_1,\ldots,a_m$ such that $g=\sum_{i=1}^m a_i{\color{black}f_i^{\rm top}}$.
Then $g'=\sum_{i=1}^m a_i{\color{black}f_i^h}$ in $\langle \bm{F}^h\rangle_{d+1}$ has $t$ as its LM. 
\end{proof}

{ 
Next we consider $(G_{\rm hom})_D$. 
\begin{lemma}\label{lem:degD}
For each monomial \textcolor{black}{$t$} in $X$ of degree $D$, 
there is an element $g$ in $(G_{\rm hom})_{\leq D}$ 
such that ${\rm LM}(g)$ divides $t$. Therefore, 
\begin{equation}\label{eq:RD}
\langle {\rm LM}((G_{\rm hom})_{\leq D})\rangle_{\color{black}{R'}} 
\cap R_D=R_D.
\end{equation}
Moreover, for each element $g$ in $(G_{\rm hom})_D$ with 
$g^{\rm top}\not=0$, 
the top-part $g^{\rm top}$ consists of one term, that is, 
$g^{\rm top}={\rm LT}(g)$, where ${\rm LT}$ denotes the leading term 
of $g$. (We recall ${\rm LT}(g)={\rm LC}(g){\rm LM}(g)$.)
\end{lemma}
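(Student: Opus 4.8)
The plan is to exploit the defining property of $D=d_{\rm reg}(\langle\bm{F}^{\rm top}\rangle)$ from Definition \ref{def:dreg}, namely $R_D=\overline{I}_D$ with $\overline{I}:=\langle\bm{F}^{\rm top}\rangle_R$, and to transport leading-monomial data from $\overline{I}\subset R$ up to $\tilde{I}:=\langle\bm{F}^h\rangle_{R'}$ by a homogenization-lifting argument. Since Lemma \ref{lem:LM} and Corollary \ref{cor:LM} are available only in degrees $d<D$, the whole point is to supply a fresh argument at the boundary degree $D$ itself.

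First I would settle the divisibility claim. Fix a monomial $t$ in $X$ with $\deg(t)=D$. As $D=d_{\rm reg}(\overline{I})<\infty$, the module $R/\overline{I}$ is Artinian and $R_D=\overline{I}_D$, so $t\in\overline{I}_D$ and we may write $t=\sum_{i=1}^m a_i f_i^{\rm top}$ with $a_i\in R$ homogeneous of degree $D-d_i$. I would then lift this to $\tilde{t}:=\sum_{i=1}^m a_i f_i^h\in\tilde{I}$, homogeneous of degree $D$ in $R'$. Since $\tilde{t}|_{y=0}=\sum_i a_i f_i^{\rm top}=t\neq0$, the $y$-free monomial $t$ is the unique $y$-free term of $\tilde{t}$ and occurs with coefficient $1$, whereas every other monomial of $\tilde{t}$ is divisible by $y$ and hence has $X$-degree strictly below $D$. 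By the definition of $\prec^h$, among monomials of the common total degree $D$ the ordering ranks the one of smallest $y$-exponent (largest $X$-degree) highest; therefore ${\rm LM}_{\prec^h}(\tilde{t})=t$. Thus $t\in{\rm LM}(\tilde{I})\subseteq\langle{\rm LM}(G_{\rm hom})\rangle_{R'}$, so some $g\in G_{\rm hom}$ has ${\rm LM}(g)\mid t$; as $G_{\rm hom}$ is homogeneous (because $\tilde{I}$ is homogeneous and $\prec^h$ is graded), $\deg(g)=\deg({\rm LM}(g))\leq D$, i.e.\ $g\in(G_{\rm hom})_{\leq D}$. Equation \eqref{eq:RD} is then immediate: ``$\subseteq$'' is trivial, and ``$\supseteq$'' follows because $R_D$ is spanned by its monomials, each of which is a multiple of some such ${\rm LM}(g)$.

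For the final assertion, I would reuse the same order computation: it shows the $y$-free monomials of degree $D$ are exactly the $\prec^h$-maximal monomials in degree $D$. Hence, for $g\in(G_{\rm hom})_D$ with $g^{\rm top}\neq0$, the leading monomial ${\rm LM}(g)$ must be a $y$-free monomial of degree $D$, and it is one of the terms making up $g^{\rm top}=g|_{y=0}$. It then remains only to exclude a second term. Supposing $X^\gamma\neq{\rm LM}(g)$ were another monomial of $g^{\rm top}$, I would apply the divisibility claim to $t=X^\gamma$ to obtain $g'\in(G_{\rm hom})_{\leq D}$ with ${\rm LM}(g')\mid X^\gamma$: the case $g'=g$ forces ${\rm LM}(g)=X^\gamma$ by equality of degrees, a contradiction, while the case $g'\neq g$ contradicts the reducedness of $G_{\rm hom}$, since $X^\gamma$ is a monomial of $g$ lying in $\langle{\rm LM}(G_{\rm hom}\smallsetminus\{g\})\rangle_{R'}$. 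Therefore $g^{\rm top}$ is the single term ${\rm LT}(g)$.

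The hard part is the boundary degree $D$, where the degree-by-degree comparison of Lemma \ref{lem:LM} is no longer valid; everything rests on converting the equality $R_D=\overline{I}_D$ into a lift $\tilde{t}\in\tilde{I}$ whose leading monomial is the prescribed $y$-free monomial. The enabling technical observation is the ordering fact that, within each fixed total degree, $\prec^h$ prefers larger $X$-degree; once this is secured, both the divisibility claim and the single-term structure of $g^{\rm top}$ drop out, with only the reducedness of $G_{\rm hom}$ as the remaining ingredient.
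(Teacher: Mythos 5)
Your proof is correct and takes essentially the same route as the paper's: you lift the representation $t=\sum_{i=1}^m a_i f_i^{\rm top}$ to $\sum_{i=1}^m a_i f_i^h\in\langle\bm{F}^h\rangle$, observe (using that $\prec^h$ prefers the $y$-free monomial within a fixed total degree) that its leading monomial is $t$ itself, and then combine the resulting divisibility statement with the reducedness of $G_{\rm hom}$ to force $g^{\rm top}$ to be the single term ${\rm LT}(g)$. The only cosmetic difference is in the second part, where you rule out the self-division case $g'=g$ by degree comparison, while the paper instead orders the degree-$D$ elements by their leading monomials to ensure the dividing element is a different one; the two arguments are equivalent.
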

\begin{proof}
Since $\langle \bm{F}^{\rm top}\rangle_D=R_D$, 
for each monomial $t$ in $X$ of degree $D$, 
there are homogeneous $a_1,\ldots,a_m \in R$ with 
$t=\sum_{i=1}^m a_if_i^{\rm top}$. 
Now consider $h=\sum_{i=1}^m a_if_i^h$\textcolor{black}{, which} belongs to $\langle \bm{F}^h\rangle$. 
Then, as $f_i^h=f_i^{\rm top}+yh_i$ for some $h_i$ in $R'$, 
we have \[
h=\sum_{i=1}^m a_i(f_i^{\rm top}+yh_i)=
\sum_{i=1}^m a_if_i^{\rm top}+y\sum_{i=1}^m a_ih_i
=t+y\sum_{i=1}^m a_ih_i
\]
and ${\rm LM}(h)=t$. As $G_{\rm hom}$ is the reduced Gr\"obner basis 
of $\langle \bm{F}^h\rangle$, 
there is some $g$ in $(G_{\rm hom})_{\leq D}$ whose LM divides \textcolor{black}{${\rm LM}(h)$, as desired}. 

Next we prove the second assertion.
Let $g_1,\ldots,g_k$ be all elements of $(G_{\rm hom})_D$ 
which have non-zero top parts, and set ${\rm LM}(g_1)\prec \cdots 
\prec {\rm LM}(g_k)$. We show that $g_i^{\rm top}={\rm LT}(g_i)$ for 
all $i$.
Suppose, to the contrary, 
that our claim does not hold for some $g_i$. 
Then, $g_i^{\rm top}$ can be written as 
$={\rm LT}(g_i)+ T_2+\cdots +T_s$ for 
some terms $T_2,\ldots,T_s$ in $R_D$. 
Since ${\rm LM}(T_j)\prec {\rm LM}(g_i)$ for $2\leq j\leq s$, it follows from equality \eqref{eq:RD} that each $\mathrm{LM}(T_j)$ is equal to ${\rm LM}(g_\ell)$ for some $\ell<i$ or is divisible by ${\rm LM}(g')$ for some
$g'\in (G_{\rm hom})_{<D}$. This contradicts to the fact that 
$G_{\rm hom}$ is reduced. 
\end{proof}}

{
\begin{remark}
If we apply a signature-based algorithm such as the $F_5$ algorithm or its variant 
to compute the Gr\"obner basis of $\langle \bm{F}^h\rangle$, 
its $\Sigma$-Gr\"obner basis is a Gr\"obner basis, but is 
not always {\em reduced} in the sense of ordinary Gr\"obner 
basis, in general.  In this case, we have to compute so called {\em inter-reduction} among 
elements of the $\Sigma$-Gr\"obner basis for obtaining the reduced Gr\"obner basis. 
\end{remark}
}

\subsection{Gr\"obner basis elements of degree not less than $D$}\label{subsec:overD}

{In this subsection, we shall
extend the upper bound on solving degree given in \cite[Theorem 2.1]{ST} 
to our case.}

\begin{remark}
In \cite{ST}, polynomial ideals over ${\mathbb F}_q[X]$ are considered. 
Under the condition where the generating \textcolor{black}{sequence {$\bm{F}$}} contains the field equations $x_i^q-x_i$ for $1 \leq i \leq n$, 
recall from Theorem \ref{thm:tenti} {(\cite[Theorem 6.5 \& Corollary 3.67]{Tenti})} that {the solving degree $\mathrm{sd}_{\prec}^{\rm hsd}(\bm{F})$} {with respect to a Buchberger-like algorithm for} $\langle \bm{F}\rangle$ is {upper-}bounded by $2D-2$, where $D=d_{\rm deg} (\langle \bm{F}^{\rm top} \rangle)$. 
{In the proofs of \cite[Theorem 6.5 \& Corollary 3.67]{Tenti}, the property 
$\langle \bm{F}^{\rm top}\rangle_{D}=R_D$ was essentially used for obtaining the upper-bound.}
As the property also holds in our case, we {may} apply their arguments. 
Also in \cite[Section 3.2]{BNDGMT21}, the case where $\bm{F}^h$ is cryptographic semi-regular is considered. The results on the solving degree and the maximal degree 
of the Gr\"obner basis are heavily related to our result in this subsection. 
\end{remark}

{
\textcolor{black}{Now we show an upper-bound on the solving degree of $\bm{F}$ by 
using the set $H:=\{g|_{y=1}: g\in (G_{\rm hom})_{\leq D}\}$, that is, 
at the pre-process of the computation of $G$, we first compute $H=(G_{\rm hom})_{\geq D}$, 
and at the latter process, we continue the computation from $H$.}
We remark that, when we use the normal selection strategy 
on the choice of S-polynomials, 
the Gr\"obner basis computation of $\langle \bm{F}\rangle$ 
proceeds along with the graded structure of $R$ in its early \textcolor{black}{stages}, 
By Lemma \ref{lem:LM} 
it simulates faithfully that of $\langle \bm{F}^{\rm top}\rangle$ 
until the degree of computed polynomials becomes $D-1$, 
that is, it {produces} 
$\{g|_{y=1} : g\in (G_{\rm hom})_{<D}\}$. 
Also, by Lemma \ref{lem:degD}, 
it \textcolor{black}{may} also {produce} 
$\{g|_{y=1} : g\in (G_{\rm hom})_{D},\ { g^{\rm top}\not=0}\}$ 
\textcolor{black}{by carefully choosing S-polynomials, see \cite{KY} for details.}
We {also} 
note that {the} $F_5$ algorithm actually uses the normal strategy.}

\begin{lemma}\label{lem:sd}
{If $D\geq {\rm max}\{\deg(f) : f\in \bm{F}\}$, then the} 
maximal Gr\"obner basis degree and the solving degree \textcolor{black}{${\rm sd}^{\rm hsd}_\prec (\bm{F})$ (see Subsection \ref{subsec:complexity} for the definition of ${\rm sd}^{\rm hsd}_\prec (\bm{F})$)} are bounded as follows:
\[
{\rm max.GB.}\deg_\prec (\bm{F}) \leq D \mbox{ and }
\color{black}{{\rm sd}^{\rm hsd}_\prec (\bm{F})} \leq {2D-2}.
\]
\end{lemma}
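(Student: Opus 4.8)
The plan is to treat the two bounds separately, since the first is a purely static statement about the reduced Gr\"obner basis $G$ of $\langle\bm{F}\rangle$, while the second concerns the dynamics of a normal-strategy computation started from $\bm{F}$.

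For the bound $\mathrm{max.GB.deg}_\prec(\bm{F}) \leq D$, I would argue entirely through dehomogenization. Recall that $H := \{g|_{y=1} : g \in G_{\rm hom}\}$ is a Gr\"obner basis of $\langle\bm{F}\rangle$ with respect to $\prec$, and that for a homogeneous $g \in R'$ one has $\mathrm{LM}(g|_{y=1}) = \mathrm{LM}(g)|_{y=1}$ (the $X$-part of the leading monomial), because $\prec^h$ refines the total degree by $\prec$ on $X$-parts and the DRL order is graded. First I would show that $R_D \subseteq \langle \mathrm{LM}(\langle\bm{F}\rangle) \rangle_R$: given a monomial $t \in R_D$, equation \eqref{eq:RD} of Lemma \ref{lem:degD} provides $g \in (G_{\rm hom})_{\leq D}$ with $\mathrm{LM}(g) \mid t$; since $t$ is a pure $X$-monomial, $\mathrm{LM}(g)$ is pure in $X$ as well, so $\mathrm{LM}(g|_{y=1}) = \mathrm{LM}(g) \mid t$, whence $t \in \langle \mathrm{LM}(\langle\bm{F}\rangle) \rangle_R$. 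As $\langle \mathrm{LM}(\langle\bm{F}\rangle) \rangle_R$ is a monomial ideal containing every monomial of degree $D$, it contains every monomial of degree $\geq D$. Consequently no minimal monomial generator can have degree larger than $D$, for it would be a proper multiple of a degree-$D$ monomial already in the ideal; since the leading monomials of the reduced basis $G$ are precisely these minimal generators, every element of $G$ has degree at most $D$.

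For the bound $\mathrm{sd}^{\rm hsd}_\prec(\bm{F}) \leq 2D-2$, I would adapt the argument of Semaev--Tenti \cite{ST}, \cite{Tenti}, whose essential input -- as noted in the preceding remark -- is the equality $\langle\bm{F}^{\rm top}\rangle_D = R_D$, valid here because $\bm{F}^{\rm top}$ is cryptographic semi-regular with $D = d_{\rm reg}(\langle\bm{F}^{\rm top}\rangle)$. Running the $F_5$ (or matrix-$F_5$) algorithm with the normal strategy, Lemmas \ref{lem:LM} and \ref{lem:degD} guarantee that by step degree $D$ the intermediate basis already contains the dehomogenizations of $(G_{\rm hom})_{\leq D}$, hence a minimal Gr\"obner basis of $\langle\bm{F}\rangle$; the hypothesis $D \geq \max\{\deg(f) : f \in \bm{F}\}$ is exactly what makes this degree-by-degree simulation faithful. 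It then remains to bound the highest step degree the algorithm must still process. Here the first part is the engine: since $\langle \mathrm{LM}(\langle\bm{F}\rangle)\rangle_R$ contains every monomial of degree $\geq D$, any critical pair $(g_1,g_2)$ whose $\mathrm{lcm}(\mathrm{LM}(g_1),\mathrm{LM}(g_2))$ has degree $\geq 2D-1 > D$ admits a proper degree-$D$ divisor that is itself divisible by the leading monomial of a third, already computed basis element; by the chain criterion (Buchberger's second criterion, or Gebauer--M\"oller) such a pair is redundant. Hence no pair of step degree $\geq 2D-1$ contributes, and the computation completes by step degree $2D-2$.

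The hard part will be the final step: verifying that the degree-$D$ divisor $s''$ of a degree-$(\geq 2D-1)$ lcm can always be chosen so that both auxiliary lcm's $\mathrm{lcm}(\mathrm{LM}(g_1),s'')$ and $\mathrm{lcm}(\mathrm{LM}(g_2),s'')$ are \emph{strict} divisors of $\mathrm{lcm}(\mathrm{LM}(g_1),\mathrm{LM}(g_2))$, so that the chain criterion genuinely applies and the sharp constant $2D-2$ (rather than the naive $2D-1$ coming from two non-coprime degree-$D$ leading monomials) is obtained. I would carry this out as in \cite[Corollary 3.67]{Tenti}, combining the fact that both leading monomials have degree at most $D$ with $R_{\geq D} \subseteq \langle \mathrm{LM}(\langle\bm{F}\rangle)\rangle_R$. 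The remaining verifications -- that coprime pairs reduce to zero and that, once all degree-$\leq D$ elements are present, ideal membership forces every processed S-polynomial to reduce to zero -- are routine consequences of Theorem \ref{thm:BW}.
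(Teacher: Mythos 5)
Your proof of the first bound is correct, and it is essentially the paper's argument recast in static form: the paper gets $\mathrm{max.GB.deg}_\prec(\bm{F}) \leq D$ dynamically (any polynomial produced after the basis contains $H$ is reduced against $H$, hence has degree at most $D-1$), while you argue directly that \eqref{eq:RD} plus the compatibility $\mathrm{LM}(g|_{y=1}) = \mathrm{LM}(g)|_{y=1}$ gives $R_D \subseteq \langle \mathrm{LM}(\langle\bm{F}\rangle)\rangle_R$, so all minimal generators of the leading-monomial ideal, i.e.\ the leading monomials of the reduced basis, have degree at most $D$. Both rest on Lemma \ref{lem:degD}, and your version is clean.

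The second bound is where the proposal has genuine gaps, in two places. First, the claim that by step degree $D$ the intermediate basis contains ``the dehomogenizations of $(G_{\rm hom})_{\leq D}$, hence a minimal Gr\"obner basis of $\langle\bm{F}\rangle$'' is false: because of degree falls, $\{g|_{y=1} : g \in (G_{\rm hom})_{\leq D}\}$ need not be a Gr\"obner basis of $\langle\bm{F}\rangle$ at all. The paper's own example shows this: there $D=3$, the reduced Gr\"obner basis of $\langle\bm{F}\rangle$ is $\{x_1,x_2,x_3\}$, and these elements arise only from the degree-$4$ elements $y^3x_1,y^3x_2,y^3x_3$ of $G_{\rm hom}$; no leading monomial of $(G_{\rm hom})_{\leq 3}$ divides $x_1$. (Were your claim true, the lemma would be vacuous, since the solving degree would already be at most $D$.) Second, and more seriously, the one genuinely delicate case --- a pair $g_1,g_2$ both of degree $D$ with non-coprime leading monomials and $\deg \mathrm{lcm} = 2D-1$ --- is exactly the step you leave open, and the chain-criterion mechanism you sketch for it cannot close it. Take $D=2$ (allowed under the hypotheses, e.g.\ six or more generic quadrics in three variables) and basis elements with leading monomials $x_1x_2$, $x_1x_3$, $x_2x_3$: the three pairs are pairwise non-coprime and all share the single lcm $x_1x_2x_3$ of degree $2D-1=3$, so no third element yields strictly smaller auxiliary lcms, and Gebauer--M\"oller tie-breaking still forces one degree-$3$ pair to be kept. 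The paper closes this case by a different mechanism, the second assertion of Lemma \ref{lem:degD}: every degree-$D$ element of the basis (necessarily coming from $G_{\rm hom}$, since all newly generated elements have degree $\leq D-1$) has single-term top part, $g^{\rm top} = \mathrm{LT}(g)$, so in $S(g_1,g_2)=t_1g_1-t_2g_2$ the entire top parts cancel and the S-polynomial has degree at most $2D-2$ (in the toy configuration, $x_3(x_1x_2+\ell_1)-x_2(x_1x_3+\ell_2)$ has degree $2$). This top-cancellation property, not the chain criterion, is also what drives the Semaev--Tenti/Tenti argument you defer to; since your proposal never establishes or uses it, the constant $2D-2$ is out of reach along your route.
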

\begin{proof}
Recall from Lemma \ref{lem:degD} that $\langle {\rm LM}(H)\rangle$ 
contains all monomials in $X$ of degree $D$. 
We continue the Gr\"obner basis computation from $H$. 
In this {\em latter process}, all polynomials generated from S-polynomials 
are reduced by elements of $H$. Therefore, 
their LM's are reduced with respect to any monomial (in $X$) of degree $D$ and thus, their degrees are not more than $D-1$. 
Thus, the maximal Gr\"obner basis degree is \textcolor{black}{upper-}bounded by $D$, 
and the degree of S-polynomials dealt in the whole computation 
is \textcolor{black}{upper-}bounded by $2D$. 

{
Next we show that we can avoid any S-polynomial of degree $2D$ 
or $2D-1$.} 

\noindent
(i) If an S-polynomial 
$S(g_1,g_2)$ has its degree $2D$, 
then $\deg(g_1)=\deg(g_2)=D$ and $\gcd({\rm LM}(g_1),{\rm LM}(g_2))=1$.
Then, Buchberger's criterion 
predicts that $S(g_1,g_2)$ is always reduced to 0. 

\noindent
(ii) If an S-polynomial 
$S(g_1,g_2)$ has its degree $2D-1$, 
then $\deg(g_1)=\deg(g_2)=D$, $\deg(g_1)=D, \deg(g_2)=D-1$ 
or $\deg(g_1)=D-1, \deg(g_2)=D$. 

For the case where $\deg(g_1)=D, \deg(g_2)=D-1$ 
or $\deg(g_1)=D-1, \deg(g_2)=D$, we have
$\gcd({\rm LM}(g_1),{\rm LM}(g_2))=1$, and hence
$S(g_1,g_2)$ is always reduced to 0 by Buchberger's criterion.

Finally, we consider the remaining case where $\deg(g_1)=\deg(g_2)=D$. 
In this case, $g_1$ and $g_2$ should belong to $H$ and recall from Lemma \ref{lem:degD} that both of $(g_1)^{\rm top}$ and $(g_2)^{\rm top}$ are single terms. 
Then $S(g_1,g_2)$ cancels the top parts of $t_1g_1$ and $t_2g_2$, 
where $S(g_1,g_2)=t_1g_1-t_2g_2$ for some terms $t_1$ and $t_2$. 
Thus, the degree of $S(g_1,g_2)$ is less than $2D-1$. 
\end{proof}

{\begin{remark}
We refer to \cite[Remark 15]{CG20} for another proof of ${\rm max.GB.}\deg_\prec (\bm{F}) \leq D$.
We also note that, 
if $D=d_{\rm reg}(\bm{F}^{\rm top})$ 
is finite, Lemma 3 and Lemma 4 
hold without the assumption that 
$\bm{F^{\rm top}}$ is cryptographic 
semi-regular. 
\end{remark}}

As to the computation of $G_{\rm hom}$, 
we have a result similar to Lemma \ref{lem:sd}.
Since $\langle {\rm LM}(G_{\rm hom})_{\leq D}\rangle$ contains 
all monomials in $X$ of degree $D$, 
for any polynomial $g$ generated in the middle of the computation of $G_{\rm hom}$
the degree of \textcolor{black}{the $X$-part} of ${\rm LM}(g)$ is less than $D$. 
Because $g$ is reduced by $(G_{\rm hom})_{\leq D}$. 
Thus, letting ${\mathcal U}$ be the set of 
all polynomials generated during the computation of $G_{\rm hom}$, 
we have 
\[
\{\mbox{The $X$-part of } {\rm LM}(g): g\in {\mathcal U}\}
\subset \{x_1^{e_1}\cdots x_n^{e_n} : e_1+\cdots+e_n\leq D\}. 
\]
As different $g,g'\in {\mathcal U}$ can not have the same $X$ part in their leading terms, 
the size $\# {\mathcal U}$ is upper-bounded by the number of monomials in $X$ of degree not greater 
than $D$, that is $\binom{n+D}{n}$.
By using the $F_5$ algorithm or its efficient variant, 
under an assumption that every unnecessary S-polynomial can be avoided, 
the number of computed S-polynomials during the computation 
of $G_{\rm hom}$ coincides with the number $\#{\mathcal U}$ 
and is \textcolor{black}{upper-}bounded by $\binom{n+D}{n}$. 

\begin{example}\label{ex:sd}
When $m=n+1$ and $d_1=\cdots =d_m=2$, the Hilbert-Poinc\'are series 
of $R/\langle \bm{F}^{\rm top}\rangle$ is $\left[\frac{(1-z^2)^{n+1}}{(1-z)^n} \right]$. 
Since $\frac{(1-z^2)^{n}}{(1-z)^n}=(1+z)^n=\sum_{i=0}^n \binom{n}{i} z^i$, 
we have 
\[\frac{(1-z^2)^{n+1}}{(1-z)^n}=(1+z)^n(1-z^2)
=
1+nz+\sum_{i=2}^n \left(\binom{n}{i} - \binom{n}{i-2}\right) z^i
-nz^{n+1}-z^{n+2},
\]
so that $D=d_{\rm reg}(\langle \bm{F}^{\rm top}\rangle) = \min \left\{ i : \binom{n}{i} -{\binom{n}{i-2}}\leq 0 \right\} = \lfloor (n+1)/2\rfloor+1$, \textcolor{black}{see \cite[Theorem 4.1]{BNDGMT21}}.
In this case, it follows from
\[
2D-2= 2(\lfloor (n+1{\color{black})}/2\rfloor+1)-2 =
\begin{cases}
n+1 & \mbox{($n$: odd),}\\
n &  \mbox{($n$: even)}
\end{cases}
\]
that ${\rm sd}^{\rm hsd}_\prec (\bm{F})\leq {n+1}$ in Lemma \ref{lem:sd};
see \cite[{Theorem 4.2, Therorem 4.7}]{BNDGMT21} for the bound in the case where $\bm{F}^h$ is a generic sequence.

We note that, in the homogeneous case, the solving degree ${\rm sd}^{\rm hsd}_{\prec^h}(\bm{F}^h)$ is equal to the maximal Gr\"obner basis degree of $\bm{F}^h$ (for an appropriate setting in the algorithm one adopts), so that we can apply Lazard's bound, see Theorem \ref{thm:Lazard_zero_dim}.
It also follows (see \cite{KY} for details) that the solving degree ${\rm sd}_{\prec}^{\rm hsd}(\bm{F})$ can be \textcolor{black}{upper-}bounded by ${\rm max.GB}.\deg_{\prec^h}(\bm{F}^h)= {\rm sd}^{\rm hsd}_{\prec^h}(\bm{F}^h)$, and we can apply Theorem \ref{thm:Lazard_zero_dim}, as our case satisfies its conditions.
Then, for the case where $m=n+1$ and $d_1=\cdots =d_{n+1}=2$, 
Lazard's bound gives the bound $n+2$ for ${\rm max.GB}.\deg_{\prec^h}(\bm{F}^h)=
{\rm sd}^{\rm hsd}_{\prec^h}(\bm{F}^h)\geq {\rm sd}_{\prec}^{\rm hsd}(\bm{F})$.
\end{example}

\begin{acknowledgement}{
The authors thank the anonymous referee for helpful comments.
The authors are also grateful to Yuta Kambe and Shuhei Nakamura for helpful comments.
This work was supported by JSPS Grant-in-Aid for Young Scientists 20K14301 and 23K12949, JSPS Grant-in-Aid for Scientific Research (C) 21K03377, and JST CREST Grant Number JPMJCR2113.
}
\end{acknowledgement}

\renewcommand{\baselinestretch}{0.85}

\end{document}